\documentclass[11pt]{article}
\usepackage{mydef2col}
\usepackage{markArticle}

\title{Valuing American options and Flexible Forwards contracts in time-dependent models}
\shorttitle{Flexible Forwards in time-dependent models}
\author{
\authorstyle{
Leif Andersen\textsuperscript{1}\thanks{e-mail: \url{leif.andersen@bofa.com}} \,
Andrey Itkin\textsuperscript{2}\thanks{e-mail: \url{aitkin@nyu.edu}} \,
and Rakhymzhan Kazbek\textsuperscript{3}\thanks{e-mail: \url{rakhymzhan.kazbek@umu.se}}
}
\newline\newline
\textsuperscript{1}
\institution{Bank of America, New York, USA.} \\
\textsuperscript{2}
\institution{FRE Department, Tandon School of Engineering, New York University, USA.} \\
\textsuperscript{3}
\institution{Department of Mathematics and Mathematical Statistics, UMEA Universitet, Sweden.}
}

\date{\today}

\begin{document}

\maketitle

\lettrineabstract{A flexible forward (FF) is a customized FX hedging instrument that
guarantees a fixed exchange rate while letting the holder choose the delivery date
within a pre-agreed window. It is therefore an American-style option on timing, and
its valuation must respect the volatility skew of the underlying currency pair. We
price FF contracts (and, more generally, American options) under a time-inhomogeneous
Heston model which captures the forward-skew term structure while preserving analytical tractability through a recursive (matrix) Riccati solution for the joint characteristic function. Extending the integral-equation (decomposition) approach to time-dependent
coefficients, we derive a Volterra equation characterizing the early-exercise surface. The expectation in the decomposition formula is evaluated by two complementary spectral methods: a double cosine (COS) expansion of the transition density, and a damped-Sinc (DSINC) local-basis scheme that is more accurate and stays robust when a low Feller ratio or large vol-of-vol induces Gibbs oscillations in the COS series. Benchmarked against a
penalty-iteration MCS-ADI finite-difference solver, both methods price a contract
in about 1-2 seconds, roughly an order of magnitude faster than the finest
finite-difference grid, while DSINC improves median accuracy over COS by about a
factor of twelve. The experiments also show that the early-exercise surface is a
substantially nonlinear function of the variance, contrary to the linear-in-variance
approximation common in earlier work.}


%


\section{Introduction}

A flexible forward (FF) contract (also known as a window forward or optional forward contract) is a customized foreign exchange hedging tool that combines the guarantee of a fixed exchange rate with flexibility on the timing of the transaction. Unlike a standard forward contract that locks in a rate for a single, specific future date, a FF allows the holder to execute the trade anytime within a pre-agreed period. Two parties agree on a fixed exchange rate, a total notional amount, a delivery window, and often minimum and maximum tranche sizes for partial drawdowns. During the delivery window, the holder can call upon the bank to execute the contract for part or all of the total amount at the pre-agreed fixed rate on any business day they choose. Each time a portion is executed, that amount is settled and delivered, with the obligation reduced accordingly. Any unused portion expires worthless at the end of the window.

The guaranteed rate on the FF is typically less favorable than the rate for a standard forward set at the worst point in the window. This is the cost of the optionality, which the bank prices into the rate. The instrument is ideal for businesses with uncertain cash flows. For example, a company expecting EUR payments from multiple clients over a quarter can hedge the total expected amount without knowing the exact dates. It locks in a worst-case rate, removing downside risk. If the spot rate moves favorably during the window, the company can execute on those better days for portions of the hedge, blending its overall rate. However, the contract is binding---the total notional amount must be exchanged by the end of the window, but the holder controls the timing.

A typical use case is a U.S. importer who needs to pay a European supplier €1,000,000 sometime in the third quarter, with uncertain invoice dates. The current spot rate is 1.08 USD/EUR. A standard forward contract locking in a single date (e.g., September 30) could lead to a mismatch if payment is due July 15th. Instead, the importer enters a FF for €1,000,000 with a window of July 1--September 30 at a fixed rate of 1.0950 USD/EUR. If the euro strengthens to 1.12 in August, the importer executes the contract in August, saving money vs. the spot market. If the euro weakens to 1.07 in July, they retain the flexibility to wait, potentially executing later if the rate recovers, or simply executing at their guaranteed 1.0950. They can draw down the amount in multiple payments aligned with actual supplier invoices.

Among advantages, the FF contract eliminates timing risk for uncertain cash flows, guarantees a worst-case rate, retains some opportunistic benefit if the spot market moves favorably, and is more efficient than multiple standard forwards for the same period. Drawbacks include a less favorable rate than a standard forward, greater complexity, and the need for active management during the window. In essence, a FF contract represents a trade-off: the client sacrifices a small amount of the guaranteed forward rate in exchange for control over execution timing.

It can be easily observed that a FF is essentially an American-style option on timing, where the holder chooses the optimal delivery date within a window. Accordingly, this product can be priced using methods suitable for American options; see, e.g., the surveys in \cite{Itkin2025ddj,Andersen2025} and references therein. However, the choice of model must be made with care. In currency pairs like EUR/USD or USD/JPY, there is typically strong institutional demand to hedge against sudden, large moves in one direction. For example, Japanese exporters continuously buy USD/JPY puts, driving up implied volatility for downside strikes and creating a volatility skew. The skew is directly quoted in the market as a Risk Reversal (RR). A 25-Delta RR is the difference in volatility between a 25D call and a 25D put. A positive RR means OTM calls are more expensive than OTM puts, indicating a bullish USD bias; a negative RR indicates a bearish bias. Modeling skew is necessary to price and hedge these structures correctly. A flat volatility model would value them at zero. For vanilla options, using the correct market-implied volatility for each strike is often sufficient. However, for exotics, the dynamics implied by the skew are essential.

If markets for American-style options with the same underlying were liquid, one could theoretically derive a local volatility surface from them. But such markets are rarely liquid, and the inversion formula would be much more complex due to the early exercise boundary. The most common practitioner approach to resolve this is known as deAmericanization, see \cite{Broadie2007} among others. This approach calibrates a local volatility surface $\sigma(t,S)$ to market prices of liquid European options and then uses this $\sigma(t,S)$ in a pricing PDE, tree, or Monte Carlo with early exercise for the American-style instrument. This is a practical compromise, not theoretically pure but widely used. Traditionally, numerical methods such as binomial or trinomial trees are employed to determine this implied volatility (assumed constant), see \cite{Burkovska2018} and references therein, while the semi-analytical approach of \cite{Henry-Labordere2017} provides more tractability despite less accuracy.

\subsection{Choice of model}

In FX derivatives pricing, the primary risk factor and source of optionality is the exchange rate itself. The volatility of the FX spot rate is orders of magnitude more significant for pricing and risk management than the volatility of underlying interest rates for most standard products. FX spot volatility for a major G10 pair typically ranges from 5\% to 15\% annualized and can spike much higher during crises, while interest rate volatility for short-term rates is usually in the range of 0.5\% to 2\% annualized. A 1\% move in spot is a common daily occurrence, while a 1\% move in an interest rate is a massive event. Thus, for vanilla options and many exotics, adding stochastic rates complicates the model for negligible gain in pricing accuracy.

An FX options desk's primary risk sensitivities are delta (hedged with spot or forwards), gamma/vega (hedged with other options), and vanna/volga (sensitivity to skew dynamics). Interest rate sensitivity (rho) exists but is smaller in magnitude and easily hedged linearly using deposits, futures, or cross-currency swaps. Using stochastic rates would introduce cross-gammas that are not only tiny but also difficult and expensive to hedge. The complexity cost outweighs the risk-reduction benefit. Notable exceptions where stochastic rates become relevant include long-dated FX (LDFX) beyond 10 years, but FF contracts typically involve much shorter tenors.

For FF contracts, the flexibility window is typically 1 to 3 months long, though it can be longer for strategic hedges. A 3-month window is market standard for medium-term hedges, very common for hedging confirmed near-term cash flows. Medium-term windows of 6 to 12 months are also popular and liquid, used by corporates to hedge forecasted cash flows. Longer-term windows of 1 to 2 years (occasionally up to 5 years) are used for hedging longer-term forecasted exposures, though liquidity diminishes and the cost of flexibility increases with tenor. While FFs can be structured for almost any tenor, the core of the market is the 3-month flexible window in the 6-month to 1-year forward space, as it matches quarterly cash flow forecasting cycles.

\subsection{Pricing FF under the time-inhomogeneous Heston model} \label{sec:t_inhom}

As discussed, pricing FF contracts with typical maturities up to 1-2 years can be effectively handled by assuming deterministic interest rates. A comparison of FF option prices under deterministic and stochastic rates in \cite{gregori_flexible_2023} confirms that the difference is relatively small for tenors up to one year. However, capturing the FF volatility skew requires a model incorporating at least stochastic volatility.

The first idea would be to proceed with the well-known time-homogeneous Heston model \cite{Heston_32}, since it is affine and tractable. However, this model has several disadvantages for our purposes. First, it produces a symmetric smile/skew that becomes flatter at longer maturities $T$, since the ATM skew decays as approximately $1/\sqrt{T}$. Therefore, it cannot generate persistent steep skews at longer maturities without unrealistic parameters. Second, as in any time-homogeneous model, the fixed mean-reversion level $\theta$ limits the flexibility to calibrate to the FF term structure.

A better choice would be the 3/2 model introduced in \cite{Heston_32, Lewis:2000}, a non-affine stochastic volatility model whose analytical tractability has been extensively studied; see \cite{CarrSun, ItkinCarr2010, Drimus2012}, among others. For the same parameter magnitude, it produces a steeper skew than the Heston model, generating heavy tails via excess kurtosis in the return distribution. It is also analytically tractable and would generate realistic forward skew for FF contracts by embedding leverage effect enhancement. However, it has fewer parameters than the time-inhomogeneous Heston model and is more challenging for calibration. Moreover, the characteristic function (CF) of this model is expressed via Gamma and Kummer confluent hypergeometric functions \cite{as64}, so computational time is primarily driven by evaluating complex-valued special functions, which is slow as standard mathematical packages lack fast routines.

Therefore, a possible alternative we consider in this paper is a \textbf{time-inhomogeneous Heston model}. This model is capable of providing an exact calibration to today's vanilla surface, preserving steep skews across all maturities via time-dependent correlation $\rho(t)$, and generating a flexible forward-starting skew term structure via $\theta(t)$ and vol-of-vol $\xi(t)$ - critical for FF contracts.

At the same time, such a model has many degrees of freedom, so future calibrations may be inconsistent. To resolve this, we adopt a hybrid approach where we consider the Heston model with time-dependent mean-reversion, vol-of-vol, and correlation, but treat them as having known functional forms (ansatz) of time $t$ with unknown constant coefficients. This hybrid approach combines calibration flexibility with the parsimony and structural integrity of parametric forms, maintaining analytical tractability while reducing overfitting risk and providing smooth forward evolution. A particularly effective hybrid parameterization treats $\theta(t)$ as a continuous function (e.g., cubic spline) while modeling $\rho(t)$ and $\xi(t)$ as piecewise constants over intervals aligned with traded option tenors, such as [1M, 3M, 6M, 1Y, 2Y].

Our choice is motivated by three observations: (a) long-term volatility $\theta(t)$ evolves smoothly over time, reflecting gradual shifts in economic conditions; (b) correlation $\rho(t)$ can undergo discrete regime shifts, aligning with observed market phases; (c) vol-of-vol $\xi(t)$ exhibits clustering and regime-dependent behavior. Importantly, as we show below, this modeling approach retains full analytical tractability.

\subsection{Pricing Methodology}

The valuation of American options within complex models typically admits no closed-form solution, making numerical approaches standard.

\paragraph{Least Squares Monte Carlo.} For models with deterministic interest rates and stochastic variance $v_t$, the LSM approach (see \cite{tsitsiklis2001regression, stentoft2004assessing, broadie1997pricing, andersen2000simple}) simulates forward paths of $(S_t, v_t)$ and works backward, using regression to estimate continuation values. However, several well-known challenges remain: (a) LSM provides only a lower bound on the true option price; (b) obtaining an upper bound requires a separate, computationally expensive dual approach \cite{andersen2004primal}; (c) poor choice of basis functions leads to significant underestimation; (d) look-ahead bias can overestimate continuation values; (e) avoiding bias by simulating sub-paths becomes prohibitively expensive; (f) regression errors propagate backward; (g) deep OTM paths pose particular difficulties. As applied to pricing FF contracts, see \cite{HolblLovric2019}, perhaps the first paper focusing on mark-to-model valuations of open FX forwards.

\paragraph{PDE/Tree Methods with Early Exercise.} Finite-difference methods are an industry standard for pricing American/Bermudan options for models with low-dimensional state spaces, offering high accuracy, convergence guarantees, efficient Greeks computation, and good handling of early exercise constraints. For details, see \cite{achdou2005computational, chiarella2009numerical} and the thorough survey in \cite{Andersen2025}. Tree-based methods have also drawn attention, \cite{GiriboneLigato2022, GiriboneRaviola2019}.

\paragraph{An Integral Equation Approach.}

Various proxy approaches to pricing FF contracts have been developed, based on semi-analytical approximations of American put option prices, as in \cite{geske1984american}. These often yield satisfactory results, particularly when domestic and foreign interest rates differ significantly, and correctly capture pricing asymptotics for extreme spot levels \cite{HokTse2024}. However, when rates are close and the strike is near-the-money, proxy pricing can deviate by dozens of basis points \cite{krifka2025approximate}---an error of the same order as typical sales margins, and therefore not ignorable.

For more sophisticated models with multiple assets or stochastic factors, solving multi-dimensional free-boundary PDEs becomes computationally intensive. Consequently, an alternative semi-analytical approach has been developed for several time-homogeneous models, based on a change of variables formula introduced in \cite{Peskir2005, Peskir2007} and applied to models including Heston \cite{Chiarella2005}, Merton jump-diffusion \cite{Chiarella2009}, and the 3/2 model \cite{DetempleKitapbayev2017}.

As applied to time-inhomogeneous jump-diffusion models and models with arbitrary drift, an extension has been developed in \cite{ItkinKitapbayev2025r, ItkinCF2025j}, covering situations where either the joint transition density or joint CF is known in closed form, or the GIT method of \cite{CarrItkin2020jd, ItkinMuravey2024jd, Itkin2024jd, Itkin2025ddj, ItkinLiptonMuraveyBook} is used to efficiently compute the early exercise boundary. The main idea is to solve Volterra integral equations where the early exercise boundary is determined explicitly and separately from the option price, providing better accuracy and speed than PDE methods where this is done implicitly and simultaneously.

This method addresses key problems: (1) extending the decomposition approach to general jump-diffusion and \LY models; (2) handling cases where closed-form transition densities are unavailable by leveraging CFs via, e.g., the COS method \cite{FangOosterlee2008}; and (3) generalizing to multidimensional diffusions.

In this paper, we propose a time-inhomogeneous Heston model for FF contracts, a special case of the multidimensional diffusion framework in Section 3 of \cite{ItkinCF2025j}. In contrast to the brief numerical example provided there for the time-inhomogeneous 3/2 model, we develop a complete pricing methodology for this stochastic FF model. We conduct a thorough numerical analysis, benchmarking against a finite-difference method with respect to precision and efficiency.

The remainder of this paper is organized as follows. \Cref{modelSpec} details the model specification, including the formal definition of the FF contract in \cref{contract} and the computation of the joint CF for time-inhomogeneous coefficients in \cref{condCF}. \Cref{IEapproach} presents the Integral Equation approach for pricing American options, covering the computation of the joint transition density in \cref{jtp} and the exercise boundary in \cref{compEB}. \Cref{pdeSol} describes a benchmark solution using modern finite-difference methods. \Cref{numEx} provides a numerical comparison between the semi-analytical solution via our IE approach and the PDE benchmark. \Cref{sec:conclusion} offers concluding remarks.

\section{Model specification} \label{modelSpec}

Consider the Heston stochastic volatility model with several time-dependent parameters
\begin{align} \label{model}
dx_t &= \left( r_d(t) - r_f(t) - \dfrac{1}{2}v_t \right) dt + \sqrt{v_t} dW_t^S, \\
dv_t &= \kappa\left(\theta(t) - v_t\right) dt + \xi(t) \sqrt{v_t} dW_t^v, \nonumber \\
dW_t^S dW_t^v &= \rho(t) dt, \nonumber
\end{align}
where $x_t = \log (S_t/K)$ is the spot log-price in domestic/foreign terms, $K$ is the option strike, $W^{(1)}$ and $W^{(2)}$ are two standard correlated Brownian motions under domestic risk-neutral measure $\mathbb{Q}^d$ with the constant correlation coefficient $\rho(t) \in [-1,1]$, $\kappa$ is the mean-reversion speed, $\xi(t)$ is the volatility of variance $v_t$ (vol-of-vol), $\theta(t)$ is the mean-reversion level (the long-term run), $r_d, r_f$ are the instantaneous domestic and foreign interest rates. If the so-called Feller condition $\Fe = 2 \kappa \theta(t) > \xi^2(t)$ is satisfied for all times $t \in [0,T]$, the process $v_t$ is strictly positive, $v_t \in [0,\infty)$; otherwise its behavior at the origin should be additionally identified, see e.g., \cite{f54,CarrLinetsky2006,lucic_boundary_2012}.

We supplement the model with the following parameter specifications
\begin{itemize}
\item $\kappa > 0, v_0 > 0$: \emph{constant} mean reversion speed and the initial variance at time $t=0$;
\item $\theta(t)$: \emph{continuous} function of time (e.g., exponential decay: $\theta(t) = a + be^{-ct}$);
\item $r_d(t), r_f(t)$: the domestic and foreign instantaneous interest rates are either \emph{continuous} or \emph{discrete} functions of time;
\item $\sigma(t), \rho(t)$: \emph{piecewise constant} on intervals $[t_{k-1}, t_k)$, $k=1,\dots,N$, $0 < t_1 < \dots < t_N = T$.
\end{itemize}

\subsection{Specification of the FF contract} \label{contract}

To proceed, we need to specify the terms of a FF contract. Let
\begin{itemize}
    \item $N_{f}$ be the aggregate notional in foreign units;
    \item $[T_{1}, T_{2}]$ be the time window to exercise the contract;
    \item $\eta$: $+1$ if the FF buyer wants to buy the foreign currency, or $-1$ if the buyer wants to sell it.
\end{itemize}
Let us also denote
\begin{equation}
D_{x}(t,T)  = e^{-\int_{t}^{T} r_{x}(u)du}, \qquad x = d, f,
\end{equation}
to be the deterministic discount factor in both currencies.

Under the terms of the FF, the buyer has the right to freely design a notional drawdown schedule inside the window, where the drawn notional gets converted at the quoted exchange rate $K$, rather than the prevailing spot exchange rate $S_t$. One caveat is that the entire notional $N_{f}$ must be drawn at or before time $T_{2}$. If the buyer decides to draw down $n(t)dt$ during the time period $[t, t+dt]$, then it must be the case that
\begin{equation} \label{L1}
    \int_{T_{1}}^{T_{2}} n(t)dt = N_{f}.
\end{equation}

From a cash-flow perspective, the draw of $n(t)dt$ foreign units will be converted to a receipt of $\eta\, n(t)K dt $ domestic units. Since the fair exchange amount would have been $\eta\, n(t) S_t dt$, the time $t$ value to the buyer associated with the draw $n(t)$ at time $t$ is, in domestic units $\eta\, n(t)(K - S_t)dt$.

From \eqref{L1}, the total time-$t_0$ value in domestic units associated with a draw strategy $n(\cdot)$ is thus
\begin{equation}
V_{n}(0) = \eta\, \mathbb{E} \left( \int_{T_{1}}^{T_{2}} D_{d}(0,t)n(t)(K - S(t))dt \right).
\end{equation}
If $n(t)$ were deterministic, i.e., determined at time $t_0$, we would write the right-hand side as
\begin{equation} \label{L3}
\eta \left( \int_{T_{1}}^{T_{2}} n(t)(K D_{d}(0,t) - S_0 D_{f}(0,t))dt \right).
\end{equation}
However, importantly, the FF allows for dynamic payment schedules determined on the fly by the buyer, though always subject to \eqref{L1}.

The option seller would always assume that the option buyer acts so as to maximize\footnote{Going forward we just set $\eta=1$; the case $\eta=-1$ is handled analogously.} the payout value of the FF contract, so the fair value of the contract is
\begin{equation} \label{L4}
V(0) = \mathbb{E} \left( \int_{T_{1}}^{T_{2}} D_{d}(0,t) n^{*}(t) (K - S_t) dt \right),
\end{equation}
where $n^{*}(t)$ is the optimal notional drawdown strategy, i.e., the one that maximizes $V_{n}(0)$.

While estimating $n^{*}$ above may appear formidable, we can use backward induction to show that the optimal strategy must be of the “bang-bang” type: either draw nothing or draw everything. That is,
\begin{equation}
    n^{*}(t) = N_{f} \delta(\tau - t),
\end{equation}
where $\tau \in [T_{1}, T_{2}]$ is the optimal time to exercise the FF. Therefore, we can simplify the valuation expression to\footnote{In practice, FF buyers may not act this way, since the primary goal of FF's is to hedge against unknown business-driven flows of foreign currency. The option seller, to be conservative, would still assume optimal exercise behavior.}
\begin{equation}
V(0) = \sup_{\tau \in [T_{1}, T_{2}]} \mathbb{E} \left( D_{d}(0, \tau) (K - S_{\tau}) \right).
\end{equation}

The time $\tau$ may be characterized by an exercise boundary
\begin{equation}
\tau = \inf \{ t \in [T_{1}, T_{2}] : S_t < S^*(t) \}.
\end{equation}
Since exercise is mandatory within the window $[T_1,T_2]$, at $t = T_2$ the boundary effectively disappears or becomes infinite. There is no continuation region left; regardless of how high or low the spot price is, the contract must be settled.

In the limit $S^*(T_2^-)$ (approaching the end of the window from the left), the optimal EB typically converges to a specific economic threshold rather than the strike price. For an FF long position, the boundary $S^*(T_2^-)$ is defined by the ratio of the interest rates:
\begin{equation} \label{SB_T2}
S^*(T_2^-) = K \dfrac{r_d(T_2)}{r_f(T_2)}.
\end{equation}
This result comes from the no-arbitrage condition of early exercise. Indeed, if one holds the contract for an infinitesimal time $dt$ longer:
\begin{itemize}
\item \emph{Cost of waiting:} the loss of interest that could have been earned on the domestic currency: $r_d K dt$.

\item \emph{Benefit of waiting:} the gain of interest on the foreign asset not yet purchased: $r_f S_t dt$.
\end{itemize}
At the optimal boundary just before expiration, these two must balance. Setting $r_d K = r_f S_t$ gives the boundary. Notice that $S^*(T_2-)$ can be either above or below $K$, unlike the proper American put\footnote{For a regular American put with strike $K$, we would have $S^*(T_2) = K$ and $S^*(T_2-) = K \min \left(1, r_d(T_2)/r_f(T_2)\right)$.}.

The steps that led to \eqref{SB_T2} can also be used to establish a necessary condition for exercise at other values of $t$, giving an upper bound for the EB:
\begin{equation}
S^*(t) \leq K \dfrac{r_d(t)}{r_f(t)}, \quad t \in\left[T_1, T_2\right).
\end{equation}
Speaking of bounds, we can define the time-$t_0$ American premium for the FF as the difference between the true FF price and the best possible \emph{deterministic} strategy established at time $t_0$:
\begin{equation}
AP(0) = V(0) - \tilde{V}(0) \geq 0, \qquad \tilde{V}(0) = N_f \max _{t \in\left[T_1, T_2\right]} \left(K D_d(0, t) - S_0 D_f(0, t)\right).
\end{equation}
In the definition of $\tilde{V}(0)$, we relied on \eqref{L3} and \eqref{L4}. In practice, it is often the case that:
\begin{itemize}
    \item $\tilde{V}(0)$ is an edge case: $\tilde{V}(0) \approx N_f \max \left(K D_d\left(0, T_1\right)-S_0 D_f\left(0, T_1\right), K D_d\left(0, T_2\right)-S_0 D_f\left(0, T_2\right)\right)$.
    \item The premium $AP(0)$ is very small compared to ordinary American-style options.
\end{itemize}
This is a consequence of the FF’s payout being linear in the underlying, in contrast to the convex payout of a Put or Call. Consequently, FFs generally exhibit low Vega.  Nevertheless, the presence of embedded timing optionality, combined with the volatility skew of the underlying market, allows an implied volatility to be associated with an FF strike, making it strike-dependent, i.e., there is a volatility skew in the FFs.

\subsection{Joint characteristic function} \label{condCF}

For pricing American options under the model in \eqref{model} using the IE approach of \cite{ItkinCF2025j}, we need the joint CF of $x_k, v_k$ conditional on $x_{t} = x, v_{t} = v$ at time $t < t_k$
\begin{equation} \label{cfDef}
\phi(u_1, u_2; t, x, v) = \mathbb{E}\left[ e^{\iu u_1 x_k + \iu u_2 v_k} \, \big| \, \mathcal{F}_t \right],
\end{equation}
where $\mathbb{E}$ denotes the expectation under $\mathbb{Q}^d$. Let $\tau = T - t$. By the Markov property, we can write
\begin{equation} \label{affine}
\phi(u_1, u_2; \tau, x, v) = \exp\bigl[ \iu u_1 x + A(\tau; u_1, u_2) + B(\tau; u_1, u_2) v \bigr],
\end{equation}
where $A(\tau)$ and $B(\tau)$ satisfy ordinary differential equations that can be derived using the Feynman--Kac theorem \cite{KaratzasShreeve1998} and the fact that the Heston model is affine \cite{Rouah2013}. Indeed, conditional on filtration $\calF_t$ at time $t$, $\phi(u_1, u_2; t, x, v)$ satisfies the backward Kolmogorov equation
\begin{align} \label{pdeCF}
0 &= \dfrac{\partial \phi}{\partial t} + \left( r_d(t) - r_f(t)  - \dfrac{1}{2} v \right) \dfrac{\partial \phi}{\partial x} + \kappa\bigl(\theta(t) - v\bigr) \dfrac{\partial \phi}{\partial v} + \dfrac{1}{2} v \dfrac{\partial^2 \phi}{\partial x^2} + \rho(t) \xi(t) v \dfrac{\partial^2 \phi}{\partial x \partial v} + \dfrac{1}{2} \xi(t)^2 v \dfrac{\partial^2 \phi}{\partial v^2}
\end{align}
with the terminal condition $\phi(u_1, u_2; t, x, v) = e^{i u_1 x_t + \iu u_2 v_t}$.

This equation should be solved subject to the boundary conditions for the CF. The boundary conditions in the $x$ domain $x \in \mathbb{R}$ domain are
\begin{alignat}{2} \label{bcX}
\phi(\bm{u},\tau; x,v) &\to 0, &\qquad& x \to -\infty, \\
\phi(\bm{u},\tau; x,v) &\to e^{\iu u_1 x}, &\qquad& x \to \infty. \nonumber
\end{alignat}
The intuition behind these conditions is as follows. As the log-price $x$ goes to $-\infty$, the asset value approaches zero. Accordingly, the CF (the Fourier transform of the density) should decay to zero, meaning the transition density $p(x | x_k)$ decays exponentially as $x \to -\infty$, so its Fourier transform $\phi$ decays. When the log-spot $x \to \infty$ becomes very large, the volatility becomes negligible, and the asset grows approximately deterministically, i.e., $S_t \approx S_0 \exp\left(\int_0^t [r_d(s) - r_f(s)]\right)ds$. Accordingly, the density approaches a Dirac delta function, whose Fourier transform is $e^{\iu u x}$.

The boundary conditions in the variance domain $v_t \in [0,\infty)$ are as follows. As variance becomes very large, $v \to \infty$, we have
\begin{equation}
\phi(\bm{u},\tau; x, v) \to 0
\end{equation}
because infinite variance implies infinite uncertainty, so the CF decays to zero. At the other boundary $v = 0$, when variance hits zero, it can either (a) stay at zero (absorbing boundary) if the Feller condition fails, or (b) reflect back up (reflecting boundary) if the Feller condition holds. In the latter case, the PDE \eqref{pdeCF} itself with the substitution $v = 0$ serves as an appropriate boundary condition and reads
\begin{equation}
\dfrac{\partial \phi}{\partial \tau} = \kappa\theta(\tau)\dfrac{\partial \phi}{\partial v}.
\end{equation}
In what follows, we develop an analytic solution that automatically handles this boundary correctly through the Riccati equations.

\subsubsection{Riccati ODEs}

Standard for affine models, \eqref{pdeCF} can be solved by substituting the ansatz \eqref{affine} into \eqref{pdeCF} and collecting terms independent of $v$ and terms linear in $v$. This yields a system of ordinary differential equations (ODE)
\begin{align} \label{eq:ode}
B'(\tau) &= \dfrac12 \xi(\tau)^2 B(\tau)^2 + \bigl[ \iu u_1 \rho(\tau) \xi(t) - \kappa \bigr] B(\tau) - \dfrac12 (u_1^2 + \iu u_1),  \\
A'(\tau) &= \iu u_1[(r_d(\tau) - r_f(\tau)] + \kappa \theta(\tau) B(\tau), \nonumber
\end{align}
with initial conditions $A(0) = 0$, $B(0) = \iu u_2$.

The first equation in \eqref{eq:ode} is a Riccati equation, \cite{Rouah2013}. For time-dependent coefficients, the Riccati equation generally has no closed-form solution, but if $\xi(\tau)$ and $\rho(\tau)$ are piecewise constant this can be done as follows (also, see \cite{Guterding2018} when all the coefficient of the Heston model are piecewise constant). Let
\begin{align}
\xi(t) &= \xi_n, \quad \rho(t) = \rho_n, \quad t \in [t_{n-1}, t_n), \quad n=1,\ldots,N,
\end{align}
 In the $\tau$-space the partition $0 = \tau_0 < \tau_1 < \cdots < \tau_N = T-t$ corresponds to the backward time partition. Therefore, on interval $n$ where $\tau \in (\tau_{n-1}, \tau_{n}]$, we have $\xi(\tau_n) = \xi_n, \, \rho(\tau_n) = \rho_n$, while the equation for $B$ becomes a constant-coefficient Riccati equation
\begin{align} \label{ric}
\dfrac{dB_n}{d\tau} &= a_n B_n^2 + b_n B_n + c_n, \qquad
a_n = \dfrac{1}{2} \xi_n^2, \quad b_n = \iu u_1 \rho_n \xi_n - \kappa, \quad c_n = -\dfrac{1}{2} (u_1^2 + \iu u_1).
\end{align}
Let $\tau \in (\tau_{n-1}, \tau_n]$ and define the shifted time $s = \tau - \tau_{n-1}$. Then the general solution of \eqref{ric} reads
\begin{align} \label{solB}
B_n(\tau) &= \dfrac{\lambda_{n,2}  - \lambda_{n,1} R_n e^{-\gamma s}} {1 - R_n e^{-\gamma s}},
\end{align}
where $R_n$ is determined from the initial condition $B(\tau_{n-1})$
\begin{equation}
R_n = \dfrac{B(\tau_{n-1}) - \lambda_{n,2}}{B(\tau_{n-1}) - \lambda_{n,1}}, \qquad
\end{equation}
$\gamma = a_n(\lambda_{n,1} - \lambda_{n,2})$ and $\lambda_{n,1}, \lambda_{n,2}$ are the characteristic roots of the equation
\begin{equation}
\lambda_{n,1}, \lambda_{n,2} = \dfrac{-b_n \pm \sqrt{b_n^2 - 4a_n c_n}}{\xi_n^2}.
\end{equation}
In the special case of equal roots $\lambda_{n,1} = \lambda_{n,2} = \lambda_n$ \eqref{solB} becomes
\begin{equation}
B(\tau) =  \lambda_n + \dfrac{U_n}{1 - a_n U_n s}, \quad U_n = B(t_{n-1}) - \lambda_{n}.
\end{equation}
Propagation of this recursion is described in \Cref{alg:backward_recursion} in \cref{sec:algo}.

Once the recursive solution for $B_n$ is found, a similar recursive solution for $A_n$ follows directly from \eqref{eq:ode} by integrating the second ODE from $\tau_n$ to $\tau$. This yields
\begin{align} \label{Arec}
A_n(\tau) &= A_n(\tau_n) + \iu u_1 \alpha_1(\tau) + \kappa \int_{\tau_n}^\tau \theta(s) B(s) ds.
\end{align}
To maintain computational efficiency by avoiding numerical integration while preserving the time dependence of $\theta(t)$, a special functional form must be chosen. As discussed in Section~\ref{sec:t_inhom}, we require $\theta(t)$ to be continuous or piecewise continuous and to contain only a small number of parameters. This ensures robustness and tractability during model calibration.

One possible choice could be $\theta(\tau) = \calA + \calB e^{\calC \tau}, \quad t \in [0,T-t]$, where $\calA, \calB, \calC$ are constant parameters of the model. This yields
\begin{align} \label{Arec2}
A_n(\tau) &= A_n(\tau_n) + \iu u_1 \alpha_1(\tau) + \kappa \Big[ \calA \left( \lambda_{n,2} \Delta J(0; \tau) - \lambda_{n,2} R_n \Delta J(-\gamma; \tau) \right) \\
&+ \calB_n \left( \lambda_{n,2} \Delta J(\mathcal{C}; \tau) - \lambda_{n,1} R_n \Delta J(-\gamma + \mathcal{C}; \tau) \right) \Big], \qquad
\Delta J(\alpha; \tau) = J(\alpha; \tau) - J(\alpha; \tau_n). \nonumber
\end{align}
where $\Delta \tau_n = \tau - \tau_{n}$, $\alpha_1(\tau) = \int_{\tau_{n}}^\tau [r_d(s) - r_f(s)] ds$, and
\begin{equation}
J(\alpha; t) = \int_0^t \dfrac{e^{\alpha s'}}{1 - R_n e^{-\gamma s'}} \, ds'.
\end{equation}

The integral $J(\alpha; T)$ has a closed form (see \cref{app1})
\begin{equation}
J(\alpha; t) =
\begin{cases}
\dfrac{1}{\gamma(\mu+1)}\left[
e^{\gamma t (\mu+1)} \, {}_2F_1(1, \mu+1; \mu+2; -R_n e^{\gamma t})
- {}_2F_1(1, \mu+1; \mu+2; -R_n)
\right], & \gamma \neq 0, \\
\dfrac{1}{1+R_n} \dfrac{e^{(\alpha-\lambda_{n,1})t} - 1}{\alpha - \lambda_{n,1}} , & \gamma = 0,
\end{cases}
\end{equation}
where $\mu = \dfrac{\alpha - \lambda_{n,1}}{\gamma}$ and ${}_{2}F_{1}(1, m+1; m+2; z)$ is the hypergeometric function, \cite{as64}. The first two integrals in \eqref{Arec2} can be further simplified (see \cref{app1}).

Thus, with $\theta(\tau) = \calA + \calB e^{\calC \tau}$ (continuous), and $\xi(\tau), \rho(\tau)$ piecewise constant, the CF coefficients $A(\tau), B(\tau)$ can be computed by the above recursion across intervals, using closed-form expressions involving logs and hypergeometric functions (from integrating $\calB(\tau) e^{\calC \tau}$)\footnote{They can also be expressed in terms of an incomplete Beta function, see \cref{app1} and \eqref{eq:identity}.}. This entirely avoids numerical ODE integration.

However, our numerical experiments indicate that computing the hypergeometric function with complex arguments and parameters, even though in our case it reduces to an incomplete Beta function \cite{as64}, is still computationally expensive. This is particularly true because a robust implementation for complex arguments is available, perhaps, only in Wolfram Mathematica.

Consequently, an efficient alternative is to compute the integral in \eqref{Arec} by Taylor expansion. Given the small time step $\Delta \tau = \tau - \tau_n$ and the analytic availability of both $\theta(\tau)$ and $B_n(\tau)$, the integrand can be represented using a fourth-order Taylor approximation around the known point $\tau_{n-1}$. The resulting polynomial is then integrated exactly, producing an approximation with local error $O((\Delta \tau)^6)$. This yields
\begin{align} \label{IntB}
\calI &= \int_{\tau_{n-1}}^{\tau_n} \calK(s) ds = \Delta\tau \calK(\tau)
+ \frac{1}{2} \Delta\tau^2 \calK'(\tau_{n-1})
+ \frac{1}{6} \Delta\tau^3 \calK''(\tau_{n-1})
+ \frac{1}{24} \Delta\tau^4 \calK'''(\tau_{n-1})  \\
&+ \frac{1}{120} \Delta\tau^5 \calK''''(\tau_{n-1})  + O((\Delta \tau)^6), \nonumber
\end{align}
with $\calK(s) = \theta(s) B(s)$. But, since $B'(\tau)$ obeys \eqref{eq:ode}, all derivatives of $\theta(\tau) B(\tau)$ can be significantly simplified to yield
\begin{alignat}{2}
\calK(\tau) &= \theta(\tau) B(\tau), \\
\calK'(\tau) &= \theta'(\tau) B(\tau) + \theta(\tau) B'(\tau), &\qquad
B'(\tau) &= a_n B^2(\tau) + b_n B(\tau) + c_n, \nonumber \\
\calK''(\tau) &= \theta''(\tau) B(\tau) + 2 \theta'(\tau) B'(\tau) + \theta(\tau) B''(\tau), &\qquad
B''(\tau) &= (2 a_n B(\tau) + b_n) B'(\tau), \nonumber
\end{alignat}
and so on.

Obviously, in case $\gamma_n = 0$ the integral in \eqref{IntB} reduces to
\begin{equation} \label{IntB0}
\int_{\tau_{n-1}}^{\tau_n} \theta(s) B(s) ds = B_n \left[ \frac{\calB}{\calC} \left( e^{\calC \tau} - e^{\calC \tau_n} \right) + \calA \Delta\tau \right].
\end{equation}
Note, that the integral in \eqref{IntB} is discontinuous at segment boundaries, which is already the case for $\xi, \rho$.

\section{Pricing American options and FF using the IE approach} \label{IEapproach}

As mentioned in Introduction, the core of our semi-analytical approach to pricing FF contracts is a framework for pricing American options, proposed in \cite{ItkinCF2025j}, which is based on the change of variables formula from \cite{Peskir2005, Peskir2007}, lies in applying Proposition~2 of \cite{ItkinCF2025j}. For self-consistency, we restate this proposition here.

Let $\bm{X} = \left(X^1, \ldots, X^n\right)$ be a continuous semimartingale with $n \ge 1$, and the spot price is denoted as $X^n_t$. For all $\bm{x} \in \bm{l}_x$, let us define the exercise $\mathcal{E}$ and continuation $\mathcal{C}$ regions as
\begin{align}
\mathcal{E} &= \Big\{ (u,\bm{X}_u)\in[0,T)\times (\bm{l}_x,\infty): \, V(u,\bm{X}_u) = K - X^n_u \Big\} \\
\mathcal{C} &= \Big\{ (u,\bm{X}_u)\in[0,T)\times (\bm{l}_x,\infty): \, V(u,\bm{X}_u) > K - X^n_u \Big\}, \nonumber
\end{align}
\noindent where $\bm{l}_x$ (either $0$ or $-\infty$, depending on the model) is the left boundary of the domain of $\bm{X}_t$. These regions are separated by the early exercise boundary $X_B\left(t,X^1, \ldots, X^{n-1}\right))$, which is a time and space-dependent function\footnote{Here, we do not consider the case of multiple boundaries, as that is treated in detail in \cite{ItkinKitapbayev2025r,HokTse2024}.}. The following proposition from \cite{ItkinCF2025j} then holds.

\begin{proposition} \label{prop1}
Let $\bm{X} = \left(X^1, \ldots, X^n\right)$ be a continuous semimartingale with $n \ge 1$, and let $b: \mathbb{R}^{n-1} \rightarrow \mathbb{R}$ be a continuous function such that the process $b^X=b\left(X^1, \ldots, X^{n-1}\right)$ is a semimartingale. Setting $C=\left\{\left(x_1, \ldots, x_n\right) \in \mathbb{R}^n \mid x_n<b\left(x_1, \ldots, x_{n-1}\right)\right\}$ and $D=\left\{\left(x_1, \ldots, x_n\right) \in \mathbb{R}^n \mid x_n>b\left(x_1, \ldots, x_{n-1}\right)\right\}$, suppose that a continuous function $F: \mathbb{R}^n \rightarrow \mathbb{R}$ is given such that $F$ is $C^{i_1, \ldots, i_n}$ on $\bar{C}$ and $F$ is $C^{i_1, \ldots, i_n}$ on $\bar{D}$, where each $i_k$ equals 1 or 2 depending on whether $X^k$ is of bounded variation or not.

Under these assumptions, conditional on $X^i_t = x^i$, $i=1,\ldots,n$, the American Put price with a \textbf{single} exercise boundary $X_B(t,x_1,\ldots,x_{n-1})$ can be represented by the following decomposition formula:
\begin{align} \label{decompGenN}
P \left(t, \bm{x} \right) &= \EQ \left\{ D_d(t,T) [K - X^n_T]^+ | {\bm X}_t = {\bm x}\right\} + \int_t^T D_d(t,u) \EQ \Big\{ \left[ r(u)(K-X^n_u) + \mu_n(u,\bm{X}_u) \right] \mathbf{1}_{\bm{X}_u \in \mathcal{E}} \Big\} du,
\end{align}
where the domestic discount factor $D_d(t,s)$ in our case is deterministic.
\begin{proof}[{\bf Proof}]
See \cite{ItkinCF2025j}.
\end{proof}
\end{proposition}

The first term on the right-hand side of \eqref{decompGenN} can be easily recognized as the European Put option price under the same version of the Heston model described in \cref{model}. Since the conditional CF for this model has been found in closed form in \cref{condCF}, this European option price can be computed using any inverse FFT method, e.g., the COS method of \cite{FangOosterlee2008} or NUFFT \cite{AndersenNUFFT}.

For the FF contract, however, this term becomes
\begin{equation}
P_E(t,X^n_T,v) = \EQ \left\{ D_d(t,T) (K - X^n_T) \,\big|\, \bm{X}_t = \bm{x} \right\} = D_d(t,T) \Big[ K - \EQ \left\{ X^n_T \,\big|\, \bm{X}_t = \bm{x} \right\} \Big],
\end{equation}
which is linear in $X^n_T$. Consequently, in this case
\begin{equation}
\EQ \left[ X^n_T \,\big|\, \bm{X}_t \right] = x^n_t e^{\int_{t}^T [r_d(s) - r_f(s)] ds}.
\end{equation}

The main result of \cref{prop1} can be also re-written in terms of $x_t$, rather than $S_t$. By following the same steps as in the proof of \cref{prop1} in \cite{ItkinKitapbayev2025r}, for our Heston model we obtain
\begin{align} \label{decompX}
P \left(t, x,v\right) &= P_E(t,x,v) + K \int_t^T D_d(t,u) \EQ \Bigg\{ \left[ r_d(u) - r_f(u) e^{x_u} \right] \mathbf{1}_{(x_u, v_u) \in \mathcal{E}} \Bigg\} du.
\end{align}

\subsection{Joint transition density of the time-inhomogeneous Heston model} \label{jtp}

Since the joint CF for random variables $x_t, v_t$ is defined as in \eqref{cfDef}, i.e. as a double Fourier transform, the joint transition density $f(x,v| x_0, v_0)$ can be recovered via a \emph{double inverse Fourier transform}. For instance, this approach was undertaken in \cite{Chiarella2004} who considered a similar approach for pricing American options under time-homogeneous Heston model.

However, direct Fourier inversion can be numerically delicate and requires double integration to compute the density plus two additional integrations to compute the expectation in \eqref{decompGenN}. Thus, such an approach becomes computationally challenging and more expensive than FD methods. Therefore, in \cite{ItkinCF2025j} for computing the density it was proposed to utilize the \emph{cosine expansion method} which provides a robust alternative. The method exploits the fact that a smooth density with finite support can be approximated accurately by a \emph{cosine series} (which is related to a Fourier series expansion on even extensions).

However, in our case we need the \emph{joint conditional transition density} $f(x_n, v_n, t_n \mid x, v, t)$, of $(t_n, x_n, v_n)$ which now can be represented via a \emph{double cosine expansion}.

Let us assume that the space variables $(x_n, v_n)$ can be redefined at the truncated spatial domains $\Omega(x) = [a_x, b_x]$ for $x_n$ and $\Omega(v) = [a_v, b_v]$ for $v_n$. Here, we assume these truncated domains do not vary with time, and their boundaries are chosen accordingly to satisfy this condition. At the joint domain $\Omega(x) \otimes \Omega(v)$ the double cosine series expansion of the transition density can be written in the form
\begin{equation} \label{cosExp}
f(x_n, v_n, t_n \mid x, v, t) \approx \sum_{k=0}^{N_x-1}{}^{'} \; \sum_{m=0}^{N_v-1}{}^{'} A_{k,m} \, \cos\Bigl( k\pi \dfrac{x_n-a_x}{b_x-a_x} \Bigr) \, \cos\Bigl( m\pi \dfrac{v_n-a_v}{b_v-a_v} \Bigr),
\end{equation}
where the coefficients $A_{k,m}$ are given by
\begin{equation}
A_{k,m} = \dfrac{4}{(b_x-a_x)(b_v-a_v)} \int_{a_v}^{b_v} \int_{a_x}^{b_x} f(x_n, v_n, t_n \mid x, v, t) \, \cos\Bigl( k\pi \dfrac{x_n-a_x}{b_x-a_x} \Bigr) \, \cos\Bigl( m\pi \dfrac{v_n-a_v}{b_v-a_v} \Bigr) \, dx_n \, dv_n.
\end{equation}
Here, $N_x, N_v$ denote the number of the corresponding terms in the series. The prime notation in these summation indices is a standard convention in Fourier cosine series expansions that indicates special treatment of the first term, which should be multiplied by $1/2$.

Further note that the cosine functions can be written in terms of complex exponentials:
\begin{equation}
\cos\Bigl( k\pi \dfrac{x-a_x}{b_x-a_x} \Bigr) = \dfrac{ e^{\iu k\pi \dfrac{x-a_x}{b_x-a_x}}  + e^{-\iu k\pi \dfrac{x-a_x}{b_x-a_x}} }{2}.
\end{equation}
If we substitute this into the integral for $A_{k,m}$ and extend the integration limits to $\mathbb{R}^2$ (assuming the density is negligible outside the chosen bounds), we can relate it to the characteristic function evaluated at specific frequencies.

After some manipulation, using the fact that $f(x_k, v_k, t_k \mid x, v, t)$ integrates to 1 and the CF is its 2D Fourier transform, we obtain
\begin{equation}
A_{k,m} \approx \dfrac{4}{(b_x-a_x)(b_v-a_v)} \, \mathrm{Re} \Biggl\{
e^{-\iu k\pi \dfrac{a_x}{b_x-a_x}} \, e^{-\iu m\pi \dfrac{a_v}{b_v-a_v}} \,
\phi\Bigl( \dfrac{k\pi}{b_x-a_x},\; \dfrac{m\pi}{b_v-a_v} \;\Big|\; x, v, t \Bigr) \Biggr\}.
\end{equation}
Note that sometimes an alternative form with both positive and negative frequency terms appears, but for real-valued densities, taking the real part of a single term as above suffices.

The representation of the joint conditional transition density via a joint conditional CF is useful for several reasons. First, the joint CF is often known \emph{analytically} for many stochastic processes (e.g., in Heston, Bates, or multivariate affine models), even when the density is not. Second, the cosine series converges exponentially for smooth densities (geometric convergence), while the choice of the truncation range can be critical.\footnote{As per \cite{junike2025characteristic}, where a multidimensional \emph{damped} COS method has been introduced, the method converges exponentially if the CF decays exponentially.} Third, computing coefficients via the CF is stable. Finally, even in higher dimensions, the method gives rise to fast pricing formulas.

Putting it all together, we obtain
\begin{align} \label{tdFin}
f(x_n, v_n, t_n \mid x, v, t) &\approx \dfrac{4}{L_x L_v} \sum_{k=0}^{N_x-1}{}^{'} \;    \sum_{m=0}^{N_v-1}{}^{'} \Ree\Bigl[ e^{-\iu \bigl( \dfrac{k\pi a_x}{L_x} + \dfrac{m\pi a_v}{L_v} \bigr)} \; \phi\Bigl( \dfrac{k\pi}{L_x},\; \dfrac{m\pi}{L_v} \mid x, v, t \Bigr) \Bigr] \\
&\cdot \cos\Bigl( \dfrac{k\pi (x_n-a_x)}{L_x} \Bigr) \cos\Bigl( \dfrac{m\pi (v_n-a_v)}{L_v} \Bigr), \nonumber
\end{align}
where $L_x = b_x - a_x$ and $L_v = b_v - a_v$ and $0 \le u < T$.

\subsection{Computation of the expectation in \eqref{decompGenN}}

Using an explicit representation of the transition density $f(x_k, v_k, t_k \mid x, v, t)$ in \eqref{tdFin}, one can compute the expectation in \eqref{decompX}. For our model, an explicit form of what we need to compute reads
\begin{align} \label{expect}
\calJ(u| x,v,t) &\equiv \EQ \left[ \left( r_d(u) - r_f(u) e^{x_u} \right) \mathbf{1}_{(x_u, v_u) \in \mathcal{E}} \right] \\
&= \int_0^\infty dv_u \int_{-\infty}^{x^*(u,v_u)} \left[ r_d(u) - r_f(u) e^{x_u} \right] f(x_u, v_u, u \mid x, v, t) dx_u. \nonumber
\end{align}
Note, that in our two-factor Heston model the early exercise boundary now becomes an \emph{early exercise surface} (ES) $S^*(t,v)$ since it depends on both $t$ and $v$.

An efficient algorithm for computing this expectation based on discrete cosine transform (DCT) is provided in \cref{app2}. The computational complexity of the method is: (a) computing the internal integral in $x$ (analytic evaluation): $O(N_x N_v)$; (b) computing DCT for each $k$: $O(N_x N_v \log N_v)$; (c) final summation: $O(N_x N_v)$; so, in total, $O(N_x N_v \log N_v)$. This is far better than the naive $O(N_x N_v \cdot N_{\text{quad}}^2)$, where $N_{\text{quad}}$ refers to the number of quadrature points used for numerical integration in each dimension if we were to compute the double integral using numerical quadrature (e.g., Gaussian quadrature or Simpson's rule).

For comparison, let us use typical values: for moderate accuracy, $N_{\text{quad}} \approx 32$--$128$, and for high accuracy, $N_{\text{quad}} \approx 256$--$1024$. Thus, $N_{\text{quad}}^2$ could range from $1{,}024$ to over $1{,}000{,}000$ function evaluations per $(k,m)$ pair in the naive approach. Our algorithm, which combines analytic computation of one integral with the DCT approach, reduces this to just evaluating the analytic formula at $N_v$ points (typically $N_v \approx 64$--$256$) plus an $O(N_v \log N_v)$ transform.

It is noteworthy that the computational scheme described above is well-suited for parallelization because it decomposes the original two-dimensional integration problem into a structured, two-tiered summation where the outer loop is independent and compute-intensive. Specifically, the core algorithm proceeds by first computing $g_k(v)$ as per \eqref{gkv} for each Fourier index $k$, which involves evaluating an analytic expression, followed by applying a DCT to obtain $J(k,m)$ in \eqref{Jkm} for all $m$. Each iteration of the outer loop over $k$ is independent: the computation of $g_k(v)$ and its subsequent DCT does not depend on any other $k' \neq k$. This creates a naturally parallel workload, allowing the $N_x$ iterations to be distributed across processors with no inter-thread communication during the main computation phase.

Furthermore, the computationally dominant steps within each $k$-iteration are themselves amenable to parallelization. The sampling of $g_k(v)$ at $N_v$ grid points is a vectorizable operation, and the DCT is a well-optimized, inherently parallelizable transform that scales efficiently on modern architectures. After all parallel tasks complete, only the final reduction sum - a lightweight, element-wise combination of the precomputed arrays - requires synchronization. This combination of outer-loop independence, substantial per-task computational load (each involving a full $N_v$-point sampling and a DCT), and efficient use of vectorized primitives ensures favorable parallel scaling, making the scheme suitable for parallel execution on multi-core CPUs or GPU accelerators.

\subsection{Computation of the exercise boundary} \label{compEB}

The decomposition formula in \eqref{decompGenN} is not an explicit representation of the American Put option price in a sense that computing the expectation in its right-hands side requires knowledge of the ES which is unknown yet. However, \eqref{decompGenN} can be further used to obtain an equation which the ES solves. Indeed, by the value matching condition, the American Put price at the boundary is equal to its intrinsic value, i.e. $K - S^*(t,v_t)$. Substituting this into \eqref{decompGenN} we obtain
\begin{align} \label{ES}
K \left(1 -  e^{x^*(t,v)}\right) &= P_E(t,x^*(t,v),v) + K \int_{t}^{T} D(t,u) \calJ(u | x^*(t,v), v, t) du, \\
\calJ(u | x^*(t,v), v, t) &=  \int_0^\infty dv_u \int_{-\infty}^{x^*(u,v_u)} \left[ r_d(u) - r_f(u) e^{x_u} \right]  f(x_u, v_u, u \mid x^*(t,v), v, t) dx_u. \nonumber
\end{align}
This is a nonlinear Fredholm-Volterra equation of the second kind. To obtain $x^*(t,v) = \log(S^*(t,v)/K)$ for $(t \in [0,T]) \times (v \in [0,\infty))$ it can be solved numerically backward in time starting with the terminal condition in \eqref{SB_T2} at $t = T^-$ (if applied to our problem of pricing a FF contract, we must set $T = T_2$ and $t=T_1$).

After \eqref{ES} is discretized in time, at given time $t_n$ the remaining non-linear integral equation can be solved by various methods. For instance, for a one-dimensional case of the Black-Scholes model, \cite{AndersenLake2021} advocates a fixed-point iteration method that converges well since the pricing function is convex. When performing such calculations, $\calJ(u | x, v, t)$ can be computed as described in \cref{app2}.

Similarly, the European price of the FF contract, $P_E(t,x^*(t,v),v)$, can be derived using the same method. The FF payoff is linear in the underlying spot price, given by $\eta(S_T-K)$ where $\eta = \pm 1$. Consequently, for example, the European short forward price is
\begin{align} \label{PEdef}
P_E(t,x^*(t,v),v) &= D_d(t,T) \EQ \left[K - S_T \mid x^*(t,v), v, t \right] =
D_d(t,T) K \Big\{ 1 - \EQ[e^{x_T} \mid x^*(t,v), v, t ] \Big\},
\end{align}
where $\EQ[e^{x_T} \mid x^*(t,v), v, t]$ represents the forward price at maturity $T$ conditional on information at time $t$ and scaled by $K$.

In contrast, the American Put option has the nonlinear payoff $(K-S_t)^+$. The conditional expectation $1 - \EQ[e^{x_T} \mid x^*(t,v), v, t ]$ can be computed using the marginal CF of $x_T$. This marginal CF can be easily obtained from the joint CF by modifying the boundary condition in \eqref{eq:ode} to $B(0) = 0$.

Of note, a similar equation for the time-homogeneous Heston model has been considered in \cite{tzavalis2003pricing,Chiarella2005}. The differences with our approach are as follows: a) they used Fourier inversion to obtain the transition density from the known CF, which is more computationally intensive; b) based on an empirical observation from \cite{broadie2000american}, they approximated the ES $x^*(t,v)$ by a function linear in $v$, specifically $x^*(\tau,v) = b_0(\tau) + b_1(\tau) v$. Below by various numerical examples we demonstrate that this assumption can not be justified.

However, \cite{Chiarella2005} also observes that their method suffers from slow convergence. The primary cause is an ill-conditioned system in the equations for $b_0(\tau)$ and $b_1(\tau)$. This ill-conditioning stems from the fact that when the fitted volatilities $v_1$ and $v_2$ are relatively close in value, the underlying integral equations are insensitive to minor perturbations of the free boundary.

Since computing $\calJ(u | x^*(t,v), v, t)$ requires knowledge of $g(v)$ over the range $[a_v, b_v]$ (see \eqref{Jkm}), we must solve \eqref{ES} on a grid in $v \in [a_v,b_v]$. Suppose this grid has $M_v$ points, and the fixed-point iterations converge within $I_{fp}$ iterations on average. Then the total complexity of the method for computing the entire ES is $O(N_x N_v \log (N_v) I_{fp})$. This can be compared with the FD method (see \cref{pdeSol}) where typically one needs $M_v$ points in $v$-space and $M_x$ points in $x$-space for a single time step, so the entire complexity in the case of non-zero correlations is $O(M_x (\min(M_x,M_v))^2)$ which is the cost of a banded LU solver. But, using splitting in space techniques, this can be further dropped down to $O(N_s M_x M_v)$, where the splitting method to provide, the second order of approximation in time step $\Delta \tau$, requires $N_s = 5-6$ steps of the splitting scheme, see \cite{itkin2022lsv} and references therein.

It is plausible to assume that $I_{fp}$ and $N_s$ are of the same order of magnitude. Also, computing the ES implicitly using the FD method requires a 2D version of the penalty method, \cite{Zvan1998}, which necessitates several iterations $N_p$ for each of the $M_v$ points. Consequently, assuming $N_v \approx M_v$, the ratio of the total complexity of the FD method to our proposed method is $R \approx M_x N_p / (N_x \log N_v)$. For typical values $N_x = N_v = M_v = 32$, $M_x = 200$, and $N_p = 2$, we obtain $R \approx 3.61$.

Furthermore, temporal integration in our method can be efficiently performed using Simpson's rule to provide a relative accuracy of $O((\Delta \tau)^4)$. This allows the number of time steps to be reduced to approximately $\sqrt{N_{\tau}}$, where $N_{\tau}$ is the number of temporal steps required by the FD method. Taking $N_{\tau} = 100$, the total computation time for the ES with our method is approximately 35 times less than that required for the FD method.

Finally, as noted in the previous section, our approach is highly parallelizable, whereas achieving comparable parallel efficiency with the FD method is often problematic. This parallelization can be done not only when doing summation in $\calJ(k,m)$ but also when computing the DCT. The parallel execution time of the algorithm can be modeled as
\begin{equation}
T_{\text{parallel}} = \underbrace{C_1 \dfrac{N_x N_v \log N_v}{p V}}_{\text{Computation}} + \underbrace{C_2 \dfrac{N_x N_v}{p B}}_{\text{Memory}} + \underbrace{C_3 \log p}_{\text{Synchronization}},
\end{equation}
where $p$ is the number of processor cores, $V$ is the SIMD vector width, and $B$ represents memory bandwidth efficiency ($0 < B \leq 1$). The memory term dominates for this algorithm due to its low arithmetic intensity, making $B$ the critical factor limiting parallel scaling.

For small problem dimensions such as $N_x = N_v = 64$, the performance characteristics shift significantly from the memory-bound regime observed with larger grids. The total data footprint in this case is approximately $64 \times 64 \times 48,\text{bytes} \approx 196,\text{KB}$, which typically resides entirely within the L2 or L3 cache of a modern processor. Consequently, the effective memory bandwidth $B$ increases dramatically, from around $0.3$ for main-memory access to approximately $0.8$ for cache accesses, substantially reducing the memory-transfer term in the execution time model. The arithmetic intensity remains low ($\sim 0.1,\text{FLOP/byte}$), but the high cache bandwidth alleviates the memory bottleneck, making the computation term $C_1 N_x N_v \log N_v/(p V)$ more prominent.

In this cache-resident regime, parallel speedup is limited not by memory bandwidth but by parallelization overheads and the diminishing returns of vectorization on short vectors. For a configuration with $p = 8$ cores and $V = 8$ (AVX2), the theoretical peak speedup of $p V = 64$ is unattainable. A more realistic estimate accounts for thread-management overhead, synchronization costs, and the fact that only $64/8 = 8$ full SIMD vectors are needed per transform. Empirical observations suggest an actual speedup in the range of $8$–$12$ relative to a single-threaded scalar implementation. This demonstrates that for small problems, optimizing for low overhead and efficient cache utilization is more critical than maximizing parallel resources.

\paragraph{The ES at extreme values of the variance.}
In the Heston model, the exercise boundary $S^*(t, v)$ for a Put option is expected to be a decreasing function of the variance $v$, characterized by finite, constant limits as $v \to 0$ and $v \to \infty$. Accurately capturing these asymptotic behaviors is a critical requirement for any numerical algorithm, such as PDE-based methods or LSMC, used to price American options under stochastic volatility.

\subparagraph{Behavior as $v \to 0$.}
In the limit of vanishing volatility, the Put ES $S^*(t, v)$ converges to the constant $\min(K, r_d K/r_f)$. In this regime, the Heston PDE reduces to the Black-Scholes PDE with $\sigma = 0$. Consequently, the free boundary problem yields a time-independent solution where the boundary becomes flat and independent of $t$. The boundary is a decreasing function of $v$ near zero because the introduction of even infinitesimal volatility increases the continuation value by allowing for potential spot recovery; this incentivizes the holder to delay exercise, thereby shifting the boundary downward.

\subparagraph{Behavior as $v \to \infty$.}
As $v \to \infty$, $S^*(t, v)$ decreases toward a lower asymptotic limit, typically zero for a Put when $r_d > 0$. In this limit, the diffusion term dominates the Heston PDE, rendering the problem highly parabolic. Under these conditions, the value of the American option converges to that of its European counterpart as the early exercise premium vanishes. As a result, the free boundary either disappears or recedes to its extreme (zero for a Put, infinity for a Call). Thus, $S^*(t, v)$ remains monotonically decreasing in $v$, with the boundary shifting further downward as the variance increases.

\paragraph{The ES at extreme values of the variance.} For the Heston model one should expect the exercise boundary $S^*(t, v)$  for a Put to be a decreasing function of $v$, with finite, constant limits at $v=0$  and $v\to\infty$. This is a critical feature to capture correctly in any numerical algorithm, e.g., PDE methods or LSMC, for pricing American options under this model. Below, we discuss this in more detail.

\subparagraph{Behavior as $v \to 0$.} In this limit the ES for a Put $S^*(t, v)$  converges to the constant $\min(K, r_d K/r_f)$. Indeed, the Heston PDE converges to the Black-Scholes PDE with $\sigma = 0$. The free boundary problem then has a time-independent solution, i.e., the boundary becomes flat and independent of $t$.
It is decreasing in $v$  near 0 because introducing even a small amount of volatility increases the continuation value (chance of the spot recovering), making you wait slightly longer to exercise. Thus, the boundary shifts downward.

\subparagraph{Behavior as $v \to \infty$.} Here, $S^*(t, v)$ decreases towards a lower asymptotic limit, often towards 0 for a Put with $r_d > 0$. This is because for the Heston PDE, as $v \to \infty$, the diffusion term dominates. The problem becomes highly parabolic, and the value of the American option converges to the value of the European option since the EEP vanishes. The free boundary disappears or moves to its extreme (0 for a Put, $\infty$ for a call). Accordingly,  $S^*(t, v)$ is monotonically decreasing in $v$ (for a Put), and as $v$  increases, the boundary shifts downward.

\subsection{Truncation intervals and computation of highly-oscillated integrals} \label{truncation}

When pricing options under stochastic volatility models, particularly the Heston model using the COS (Fourier-Cosine series) expansion, selecting the truncation interval is critical. If the interval is too narrow, tail information is lost; if too wide, the method wastes cosine terms on empty space, resulting in poor convergence. Optimal truncation intervals in both the $x$ and $v$ directions reduce the number of terms required in the cosine expansion, thereby improving the method's performance. Below, we discuss several important aspects of optimizing these choices.

\paragraph{Pre and post smoothing}

In the context of the Heston model and the COS method described above, the main challenge is the computation of the highly oscillatory integral in \eqref{Jkm}. While using DCT-I provides an efficient (fast) way of doing so, the computed density might have a sharp "peak", especially near $v=0$ when the Feller condition is significantly violated (e.g., at high vol-of-vol and low drift of the variance process). In other words, under such conditions one needs to handle the Gibbs phenomenon, \cite{Zygmund2002}.

Indeed, the COS method approximates the density by a truncated Fourier cosine series. When the underlying density is "peaky" (low volatility or short maturity), the Fourier series produces oscillations (ringing) near sharp edges. They can be addressed by using the post-processing Lanczos filter, which multiplies the Fourier coefficients by the \emph{sigma factor}, \cite{boyd2001chebyshev}. Mathematically, applying a Lanczos filter is equivalent to performing a local averaging of the reconstructed function. It forces the oscillations to decay much faster by down-weighting the high-frequency coefficients that cause sharp overshoots.

For practical applications, it is beneficial to choose the truncation number of the Lanczos filter to be equal to $N_v/2$, which is a strategic choice for providing spectral convergence. The logic behind this is as follows.

In the COS method, the higher-order terms ($k$ approaching $N_v$) capture the finest details, but are also the most sensitive to rounding errors and discretization noise. By tapering off the influence of these coefficients, we ensure that the \emph{tail} of the cosine expansion does not introduce spurious high-frequency noise into the resulting option price. This is particularly important for the Heston model, where the CF can have a heavy tail that decays slowly depending on the Feller condition.

The most reliable information in a Fourier-based expansion is typically contained in the lower half of the spectrum. By the time we reach $k > N/2$, the coefficients are often heavily contaminated by the tails of the CF that have been wrapped around (aliased) due to the truncation of the integration range $[a, b]$. By setting the effective truncation at $N/2$, we are essentially ensuring that the weights stay positive, because the function $\text{sinc}(x)$ is positive for $x < 1$. If we set the filter to \emph{hit zero} at $N$, then at $k = N/2$, our weights are still in the robust, primary lobe of the filter. Also, high-frequency terms ($k \to N$) in the Heston model often carry more noise than signal due to the complex logarithm branching and the numerical stiffness of the Riccati equations. Cutting off at $N/2$ effectively \emph{de-noises} the reconstruction.

If we use $N$ terms without filtering, we would have a sharper peak, but high-frequency ripples. If we use $N$ terms but truncate/filter at $N/2$, we preserve the mean and variance of the distribution (low $k$), and eliminate the Gibbs oscillations that would otherwise cause negative option prices or fake arbitrage opportunities near the strike.

Under these circumstances, it also makes sense to add a pre-smoothing of the CF, e.g., by using an exponential (Gaussian) filter on the CF. By doing both, pre-smoothing and post-smoothing, we are attacking the problem from two directions:

\begin{itemize}
    \item \emph{Pre-smoothing in the frequency domain}. The Gaussian filter acts as a low-pass filter on the CF. Since in the Heston model we deal with the Fourier transform of the density, decaying it exponentially ensures the density is analytic and smooth, which makes the Fourier coefficients decay much faster. A typical implementation assumes that the CF found in \cref{condCF} is multiplied by the vector factor $f_v = \exp(-\alpha_v [(0:N_v-1)/N_v]^p)$, where $\alpha_v$ in our experiments is set to $-\log(1.e-16)$ and $p = 4$.

    \item \emph{Post-smoothing in the spatial/price domain}. The Lanczos filter ensures that even if the pre-smoothing missed some sharp transitions (like the strike price kink), the final price surface remains monotonic and free of \emph{negative probabilities} or non-physical price oscillations.
\end{itemize}

\paragraph{In-The-Money (ITM) options.}

When pricing ITM options (especially American ones) via the COS method, we are dealing with a payoff that exhibits a pronounced \emph{step} or \emph{plateau} in log-space. This translates to high-frequency oscillations in the Fourier coefficients that decay very slowly. Since we advocate for using the DCT-I to accelerate the summation and applying Lanczos filtering to mitigate the Gibbs phenomenon, the truncation strategy becomes the critical factor in achieving appropriate accuracy.

For ATM options, the density is centered, and the high-frequency coefficients represent noise or extreme tails. Truncating at $N/2$ acts as a low-pass filter that smooths out numerical artifacts. However, for ITM options, these high frequencies are not noise—they capture the sharpness required to represent the transition from the exercise region to the continuation region. Truncating at $N/2$ effectively blurs the EB, leading to the poor accuracy observed in our experiments.

Therefore, for ITM options, the exponential filter is preferred over the Lanczos filter for post-smoothing (as it is already used for pre-smoothing). The exponential filter is more aggressive at the extreme end of the spectrum while preserving the mid-range frequencies essential for capturing the ITM step structure. This allows the COS method to retain the structural information of the ITM put while suppressing high-frequency ringing artifacts.

\paragraph{Truncation interval in the $v$ domain}

Similar to recommendations in \cite{FangOosterlee2008}, the truncation region in the $v$ domain can be computed by using cumulants of the chi-square distribution. The interval is then set as
\begin{equation}
a_v = c_1 - L_v\sqrt{c_2 + \sqrt{c_4}}, \qquad
b_v = \max\left[ c_1 + L_v\sqrt{c_2 + \sqrt{c_4}},\mathrm{Floor} \right],
\end{equation}
where $L$ is a constant that regulates the width of the interval, and $c_i$ is the $i$-th cumulant of the distribution. We set $\mathrm{Floor} = 0.5$. This approach provides \emph{statistical} bounds for the truncation interval. In the literature, typical values of $L$ range in $L=12-15$.

However, when the Feller condition is significantly violated ($\mathrm{\Fe} < 0.1$), one needs a more accurate approach. Indeed, when the Feller condition is violated, the density does not simply peak; it actually accumulates or blows up near zero. But since the mean of the distribution does not depend on $\xi$, the other side of the distribution develops a very long tail with very small probabilities. Therefore, one cannot simply rely on the distribution quantiles to determine the truncation interval.

In this paper, to eliminate the uncertainty associated with choosing a value of $L$ that is suitable for all model parameters - and thereby make our method robust - we employ a preprocessing step in which the truncation interval is determined by the condition that the cumulative distribution function (CDF) of the process at the left and right boundaries should be $\varepsilon$ and $1-\varepsilon$, respectively, with a typical value of $\varepsilon = \num{1e-6}$. This implies numerical integration over the inverse CDF (the quantile function), but provides a very precise way to find the bounds $[a, b]$, as we can set them to exact percentiles (e.g., $99.99999\%$). It turns out that this approach yields a $b_v$ that is much larger than what the cumulant-based method suggests.

Also, despite this procedure is more complex, it is fast compared with the total time required to solve the integral equation, yet it significantly improves the accuracy of the method.

\paragraph{Truncation interval in the $x$ domain}

Again we follow recommendations in \cite{FangOosterlee2008}, and in our experiments their heuristic $c_2 = -2 c_1$ provides the optimal results with $L_x = 20$. However, if we want to properly account for the whole range of possible correlations, we need to use the asymmetric truncation method of \cite{LordKahl2007}. This is because if our model has extreme correlation (e.g., the distribution becomes highly skewed), we will need to adjust the truncation boundaries accordingly.

With this approach, $a_x$ and $b_x$ are scaled based on the sign and magnitude of the skewness of the distribution. This prevents the interval from being unnecessarily wide on the "thin" tail side while ensuring coverage on the "fat" tail side. To address that, \cite{LordKahl2007} suggest using the third cumulant, which directly captures skewness, to shift the boundaries.

A common implementation of this logic for the COS method involves two steps:
\begin{itemize}
    \item When skewness is negative, the left tail is heavier. The interval shifts to the left to capture the "crash" probability, often by widening the distance to the left more aggressively than the distance to the right.
    \item For long maturities or high vol-of-vol, the $c_4$ term (kurtosis) is actually more dominant than the skewness for determining the \emph{width}, while $c_3$ determines the \emph{placement}.
\end{itemize}

In our implementation, we use
\begin{align}
a_x &= c_1 - L_x\sqrt{c_2 + \sqrt{c_4}} + a_s, \qquad
b_x = c_1 + L_x\sqrt{c_2 + \sqrt{c_4}} + b_s, \\
a_s &=
\begin{cases}
L_2 c_3, & c_3 < 0, \\
0, & c_3 \ge 0,
\end{cases}
\qquad
b_s =
\begin{cases}
L_2 c_3, & c_3 > 0, \\
0, & c_3 \le 0,
\end{cases}
\nonumber
\end{align}
and in our experiments $L_2 = 0.5$.

Finally, dealing with a violated Feller condition requires special attention. As mentioned, when this condition is violated, the origin ($v=0$) becomes reachable and strongly attractive. This creates a "point mass" or a significant accumulation of probability density near zero, which wreaks havoc on the COS method. Accordingly, the standard truncation interval in $x$ fails. This is because for the truncated interval to be accurate, the density must decay smoothly toward the boundaries. When the Feller condition is violated, the variance density becomes highly non-normal and skewed. The probability density exhibits a "spike" near the lower bound, and the CF decays very slowly.

The COS method assumes the density is effectively zero outside $[a, b]$. If the violation of the Feller condition causes a significant portion of the probability mass to sit near the boundary (where the variance is near zero), the truncation "chops off" a vital part of the distribution. This leads to slow convergence, i.e., one needs an exponentially higher number of cosine terms $N_x$ to achieve even basic precision. Moreover, it often results in underpricing or overpricing options, particularly those that are deep in the money or have short maturities. Therefore, here we again use numerical integration of the inverse CDF with a given tolerance, similar to how this is done in the $v$-space.

\section{An alternative DSINC method}

To address the limitations of the COS method when approximating $v$-densities for a range of model parameters that give rise to high skewness or pronounced Gibbs oscillations, we propose an alternative approach. The core idea is as follows.

The COS method uses the basis functions $\cos(k \pi \frac{x-a}{b-a})$ over a fixed, large interval $[a, b]$, hence each basis function is non-zero across the entire domain. It works well with the exponential convergence if the density of the process is a smooth function. However, if our problem has a "kink" at the exercise boundary $x^*$, or sharp peaks and skew at some values of model parameters, every single COS coefficient must change to account for it. This leads to the Gibbs phenomenon (oscillations near the kink) unless one uses a very large number of terms in the COS expansion. In this sense, the oscillations are global, since increasing $N$ makes the entire integrand oscillate more rapidly across the whole domain. This is not an issue when computing the European option price, but could be an issue when computing the EEP.

Therefore, instead of a \emph{global basis}, perhaps it does make sense to switch to a \emph{local basis} where basis functions have compact support and only represent the function in a specific neighborhood. In the literature there exist several robust alternatives, including B-splines, SWIFT, RBF, "Hat" functions and other similar methods, see \cite{Kirkby2015,OrtizOosterlee2016,Fasshauer2007,AchdouPironneau2005} among others.

In this paper, we choose the damped Sinc (DSINC) functions as the local basis. These are defined on a grid $x_j = j \Delta x, \, x \in [a_x,b_x]$ as
\begin{equation}
\phi_j(x) = e^{-\chi(x-x_j)} \sinc\left(\frac{x-x_j}{\Delta x}\right).
\end{equation}
The damping factor \(\chi\) ensures the basis decays properly as \(x \to \infty\), while the Sinc function provides localization. Specifically, in the spatial domain, the Sinc function is highly concentrated around the node \(x_j\); as one moves away from \(x_j\), the function decays quickly to zero. The advantage of this choice is that if the exercise boundary \(x^*\) moves, it significantly affects only the coefficients of the Sinc functions in that immediate neighborhood.

Using the fundamental identity for the Sinc function
\begin{equation}
\sinc(y) = \frac{1}{2\pi} \int_{-\pi}^{\pi} e^{i \omega y} d\omega,
\end{equation}
substituting $y = (x-x_j)/\Delta x$ and applying the damping, we obtain an alternative representation of the basis as an integral of complex exponentials
\begin{equation} \label{basisRepr}
\phi_j(x) = \frac{\Delta x}{2\pi} \int_{-\frac{\pi}{\Delta x}}^{\frac{\pi}{\Delta x}} e^{(\iu \omega - \chi)(x-x_j)} d\omega.
\end{equation}

Recall that we need to compute the expectation in \eqref{decompGenN}, which in explicit form is given by \eqref{calJ} and reads
\begin{align} \label{calJ1}
\calJ(u| x,v,t) &\equiv \EQ \Big\{ \left[ r_d(u) - r_f(u) e^{x_u} \right] \mathbf{1}_{(x_u, v_u) \in \mathcal{E}} \Big\} \\
&= \int_0^\infty dv_u \int_{-\infty}^{x^*(u,v_u)} \left[ r_d(u) - r_f(u) e^{x_u} \right]  f(x_u, v_u, u \mid x, v, t) dx_u. \nonumber
\end{align}

This expectation can be re-written by decomposing the joint density
\begin{align} \label{calJ2}
\calJ(u| x,v,t) &= \int_0^\infty \left[\int_{-\infty}^{x^*(u,v_u)} g(x_u,v_u)\,p(x_u, v_u | x,v)\,dx\right] p_{\text{CIR}}(v_u|v)\,dv_u, \\
g(x_u,v_u) &= r_d(u) - r_f(u) e^{x_u}. \nonumber
\end{align}
Here, $p(x_u, v_u | x,v)$ is the density of $x_u$ conditional on the initial values $(x,v)$ at time $t$ and also on the whole path $(v_u \mid v)$. For our time-dependent Heston model this density as well as the corresponding conditional CF can be computed as shown in \cref{condCF1}.


\subsection{Representation of the $x$ part via the DSINC basis functions}

Using the DSINC basis $\{\phi_j(x)\}$, we represent the conditional density
$p(x_u, v_u \mid x,v)$ as
\begin{equation}
  p(x_u, v_u \mid x,v) \approx \sum_j c_j(v, t)\,\phi_j(x_u).
\end{equation}

To find the coefficient $c_j$ for the density $p(x_u, v_u \mid x,v)$, we use the
fact that the Fourier transform of $\phi_j(x)$ is a Box Filter (a rectangular
function) in the frequency domain.  Let $\Psi(\omega, v_u; x,v)$ be the
conditional CF of the
process
\begin{equation}
\Psi(\omega, v_u; x, v) = \int_{-\infty}^{\infty} p(x_u, v_u \mid x,v)\,e^{\iu\omega x_u}\,dx_u.
\end{equation}

By applying the inverse Fourier transform and the band-limited property of
the Sinc function (which is band-limited to $[-\tfrac{\pi}{\Delta x}, \tfrac{\pi}{\Delta x}]$), the coefficient $c_j$ is expressed as (see \cref{cjDeriv})
\begin{equation}
c_j(u,v_u \mid x,v) = \frac{e^{-\chi x_j}}{2\pi} \int_{-\pi/\Delta x}^{\pi/\Delta x}
\Psi(\omega - \iu\chi,\,v_u;\,x,v)\,e^{-\iu\omega x_j}\,d\omega.
\end{equation}
As per \cref{condCF1}, the conditional CF in our model can be expressed as
\begin{equation} \label{finalPhicond1}
\Psi(\omega, v_u \mid x, v) = e^{\iu\omega\,\mu_0(\omega) + \mu_1(\omega)\,v_u}
\frac{p_{\mathrm{CIR}}^{(\gamma(u,\omega))}(u,v_u\mid v)}          {p_{\mathrm{CIR}}(u,v_u\mid v)},
\end{equation}
where $\omega$ is the transform parameter, $\gamma(\omega)$ is defined in
\eqref{eq:gamma_omega}, and $\mu_0,\mu_1$ in \eqref{eq:mu}.  Here the denominator
$p_{\mathrm{CIR}}$ is the standard CIR transition density (a scaled noncentral
chi-square distribution), which for time-dependent coefficients can be obtained
similarly to \cite{Revuz_Yor1999}
\begin{equation} \label{CIRdensity}
p_{\text{CIR}}(t,v\mid v_0) = c(t)\,e^{-p-q} \Bigl(\frac{p}{q}\Bigr)^{\nu/2}
I_\nu\!\bigl(2\sqrt{pq}\bigr),
\end{equation}
where $I_q$ is the modified Bessel function of the first kind, and
\begin{align}
c(t) &= \frac{2}{\chi(s,t)},\qquad
p     = \frac{2\,\zeta(s,t)}{\chi(s,t)},\qquad
q     = \frac{2}{\chi(s,t)}\,v,\\
\zeta(s,t) &= v_s\,e^{-\kappa(t-s)} +  \int_s^t\kappa\,\theta(u)\,e^{-\kappa(t-u)}\,du,\quad
\chi(s,t)   = \int_s^t\xi^2(u)\,e^{-2\kappa(t-u)}\,du,\nonumber
\end{align}
and the numerator $p_{\mathrm{CIR}}^{(u,\gamma(\omega))}$ is the joint density
derived from the affine propagator under time-varying coefficients. Thus, with
allowance for this representation, one can re-write \eqref{calJ1} as
\begin{align} \label{Jreprent1}
\calJ(u\mid x,v,t) &= \int_0^\infty p_{\mathrm{CIR}}^{(\gamma(u,\omega))}(u,v_u\mid v)
\left[\int_{-\infty}^{x^*(u,v_u)}\!g(x_u,v_u) \Bigl(\sum_j c_j(u,v_u\mid x,v)\,\phi_j(x_u)\Bigr) dx_u\right]dv_u.
\end{align}

A key challenge in computing this integral is that the upper limit of the
inner integral depends on $x^*(t,v)$, yet the explicit form of this
function is unknown.  Instead, $x^*(t,v)$ is computed numerically on a
grid in $(t,v)$ by solving the system of integral equations in \eqref{ES}.
It is worth noting that prior studies such as
\cite{tzavalis2003pricing,Chiarella2005} assumed $x^*(t,v) = a(t)+b(t)v$,
whereas our numerical experiments unequivocally demonstrate that
$x^*(t,v)$ is a nonlinear function of $v$.

To move forward, we approximate $x^*(t,v)$ as a \emph{piecewise linear}
function of $v$.  More precisely, we divide the domain $v\in[0,b_v]$ into
subintervals, and for each interval $m=1,\ldots,M$ we set
$x^*(t,v) = a_m(t)+b_m(t)v$ for $v\in[v_m,v_{m+1})$.  When solving the
integral equations for $x^*(t,v)$ at every iteration given the time $t$
and the current vector $x^*(t,x)$, these coefficients can be determined as
\begin{equation}\label{pwl_xStar}
  b_m(t) = \frac{x^*(t,x_{m+1})-x^*(t,x_m)}{v_{m+1}-v_m},\qquad
  a_m(t) = x^*(t,v_m)-b_m(t)\,v_m.
\end{equation}

\subsection{The summation over the local basis is the CF} \label{sec:dsinc_collapse}

Before evaluating the inner $x_u$-integral in \eqref{Jreprent1}, we exploit a key
identity that both simplifies the computation and ensures consistency with the
coefficient-recovery formula \eqref{e2}--\eqref{cjFin} of \cref{cjDeriv}.

Inserting the band-limited integral representation of the basis $\phi_j(x_u)$ in \eqref{basisRepr}, the inner $x_u$-integral factorises and the premium becomes
\begin{equation}\label{Jcollapse0}
\calJ(u\mid x,v,t) = \frac{\Delta x}{2\pi} \int_0^\infty p_{\mathrm{CIR}}(u,v_u\mid v)
\int_{-\frac{\pi}{\Delta x}}^{\frac{\pi}{\Delta x}} \!\Bigl(\sum_j c_j(u,v_u\mid x,v)\,e^{-(\iu\omega-\chi)x_j}\Bigr) \,\widetilde{P}(\omega,v_u)\,d\omega\,dv_u,
\end{equation}
where $\widetilde{P}(\omega,v_u)$ is the damped payoff transform derived below in
\eqref{Ptilde}, obtained by evaluating the $x_u$-integral in closed form and
stripping the node-dependent factor.  Concretely, substituting the
piecewise-linear boundary \eqref{pwl_xStar} and using the definition of
$g(x_u,v_u)$ from \eqref{calJ2} to separate the constant part $r_d(u)$ and the
exponential part $r_f(u)e^{x_u}$, the inner $x_u$-integral over a semi-infinite
domain with upper limit $\calK_m = a_m(u)+b_m(u)v_u$ defines the analytical
payoff transform
\begin{equation}
 P_m(\omega, v_u, x_j) = e^{-(\iu\omega-\chi)x_j} \left[r_d(u)\int_{-\infty}^{\calK_m} e^{(\iu\omega-\chi)x}\,dx - r_f(u)\int_{-\infty}^{\calK_m}  e^{(1+\iu\omega-\chi)x}\,dx\right].
\end{equation}

To ensure convergence as $x_u\to-\infty$ we require $\Ree(\iu\omega-\chi) = -\chi
> 0$, which for a Put (whose exercise region is $\{x_u < x^*\}$) means $\chi <
0$; see \cref{rem:chiSign}. Performing the direct integration yields
\begin{align}\label{Panal}
P_m(\omega, v_u, x_j) &= e^{(\iu\omega-\chi)(\calK_m-x_j)} \left[ \frac{r_d(u)}{\iu\omega-\chi} -\frac{r_f(u)\,e^{\calK_m}}{1+\iu\omega-\chi}\right] \\
&= e^{(\iu\omega-\chi)(a_m(u)+b_m(u)v_u-x_j)} \left[\frac{r_d(u)}{\iu\omega-\chi}
-\frac{r_f(u)\,e^{a_m(u)+b_m(u)v_u}}{1+\iu\omega-\chi}\right]. \nonumber
\end{align}
Stripping the node-dependent pre-factor $e^{-(\iu\omega-\chi)x_j}$ from \eqref{Panal} gives the node-independent damped payoff transform $\widetilde{P}(\omega,v_u)$,
\begin{equation}\label{Ptilde}
\widetilde{P}(\omega,v_u) = e^{(\iu\omega-\chi)\calK}    \left[ \frac{r_d(u)}{\iu\omega-\chi} -\frac{r_f(u)\,e^{\calK}}{1+\iu\omega-\chi}\right],
\qquad \calK = x^*(u,v_u),
\end{equation}
so that $P_m(\omega,v_u,x_j) =
e^{-(\iu\omega-\chi)x_j}\,\widetilde{P}(\omega,v_u)$. The decisive point is that
the bracketed sum in \eqref{Jcollapse0} is \emph{not} a new object: it is exactly
the Fourier-series identity \eqref{e2}--\eqref{cjFin} that \cref{cjDeriv} uses to
restore the coefficients $c_j$, evaluated at the reflected frequency $-\omega$.

Indeed, \eqref{e2} states $\Psi(\omega-\iu\chi,v_u) = \Delta x\sum_k c_k\,e^{\chi x_k}\,e^{\iu\omega x_k}$, and replacing $\omega\mapsto-\omega$,
\begin{equation}\label{collapse}
\sum_j c_j(u,v_u\mid x,v)\,e^{-(\iu\omega-\chi)x_j} = \sum_j c_j\,e^{\chi x_j}\,e^{-\iu\omega x_j} = \frac{1}{\Delta x}\,\Psi(-\omega-\iu\chi,\,v_u).
\end{equation}
The local-basis sum, therefore reproduces the conditional CF
itself; the residual factor $\Delta x$ cancels, and \eqref{Jcollapse0} collapses
to the Parseval pairing
\begin{equation}\label{innerParseval}
\int_{-\infty}^{x^*(u,v_u)}\!g(x_u,v_u)\,p(x_u,v_u\mid x,v)\,dx_u = \frac{1}{2\pi}
\int_{-\frac{\pi}{\Delta x}}^{\frac{\pi}{\Delta x}} \Psi(-\omega-\iu\chi,\,v_u)\,\widetilde{P}(\omega,v_u)\,d\omega.
\end{equation}

\subsection{The full integrand construction} \label{sec:dsinc_Jcorr}

With the identity \eqref{collapse} in hand, we now collect all ingredients into
the final quadrature formula.  By introducing the shorthand $\widetilde\omega \coloneqq -\omega - \iu\chi$ for the reflected, damping-shifted frequency, and using \eqref{finalPhicond1}, the conditional CF at $\widetilde\omega$ acquires the explicit affine form
\begin{equation}
\Psi(\widetilde\omega, v_u; x, v) = e^{\iu\widetilde\omega\,\mu_0(u) +  \iu\widetilde\omega\,\mu_1(u)\,v_u} \equiv e^{C(\widetilde\omega, u) + D(\widetilde\omega, u)\,v_u},
\end{equation}
where
\begin{equation}\label{CDLambda}
C(\widetilde\omega, u) = \iu\widetilde\omega\,\mu_0(u),\qquad D(\widetilde\omega, u) = \iu\widetilde\omega\,\mu_1(u),
\end{equation}
and $\mu_0,\mu_1$ are given by \eqref{eq:mu}.

Substituting the piecewise-linear boundary \eqref{pwl_xStar} into \eqref{Ptilde}
gives $\calK_m = a_m(u)+b_m(u)v_u$, and combining the affine CF with the
analytical payoff \eqref{Panal}, the integrand for the $v_u$-integral (for a fixed
frequency $\omega$) becomes
\begin{equation}
J_m(\omega,v_u,u,x_j) = e^{C(\widetilde\omega,u)+D(\widetilde\omega,u)\,v_u}
e^{(\iu\omega-\chi)(a_m(u)+b_m(u)\,v_u)} \left[\frac{r_d(u)}{\iu\omega-\chi}
-\frac{r_f(u)\,e^{\calK_m}}{1+\iu\omega-\chi}\right].
\end{equation}
To use the marginal CIR CF, we group all terms in
the exponent that multiply $v_u$:
\begin{enumerate}
\item For the first term (proportional to $r_d(u)$), the total exponent
  coefficient of $v_u$ is
  $\Lambda_1(u,\omega) = D(\widetilde\omega,u)+(\iu\omega-\chi)\,b_m(u)$.
\item For the second term (proportional to $r_f(u)$), the total exponent
  coefficient of $v_u$ is
  $\Lambda_2(u,\omega) = D(\widetilde\omega,u)+(1+\iu\omega-\chi)\,b_m(u)$.
\end{enumerate}

The integral $\int_0^\infty p_{\mathrm{CIR}}^{(\gamma(u,\widetilde\omega))}(u,v_u\mid v)
 e^{\Lambda_i(u,\omega)\,v_u}\,dv_u$ corresponds to the joint CF of the variance process under the time-dependent weight $\gamma(u,\widetilde\omega)$.  Given the definition of the joint density in \eqref{eq:joint_density}:
\begin{equation}\label{eq:joint_density1}
p_{\mathrm{CIR}}^{(\gamma(u,\widetilde\omega))}(t,v\mid v_0) \coloneqq
\mathbb{E}\!\left[e^{-\int_0^t\gamma(u,\widetilde\omega)\,v_u\,du}
\mathbf{1}_{v_t\in dv}\mid v_0\right],
\end{equation}
it follows that
\begin{equation}\label{CFtilted}
\Psi_{\mathrm{CIR}}^{(\gamma(u,\widetilde\omega))}(q) = \int_0^\infty     p_{\mathrm{CIR}}^{(\gamma(u,\widetilde\omega))}(u,v_u\mid v)\,     e^{\Lambda_i(u,\omega)\,v_u}\,dv_u =  \mathbb{E}\!\left[\exp\!\Bigl(\Lambda_i(u,\omega)\,v_u
-\int_0^u\gamma(s,\widetilde\omega)\,v_s\,ds\Bigr) \mid v_0\right],
\end{equation}
evaluated at the frequency $q = -\iu\Lambda_i(u,\omega)$.  While in the
constant-coefficient case this corresponds to the CF of a
tilted CIR process, in the general setting \eqref{CFtilted} represents a
time-inhomogeneous affine transform.

Substituting \eqref{Ptilde} and the affine form of $\Psi(\widetilde\omega, v_u)$
into \eqref{innerParseval}, the denominator $p_{\mathrm{CIR}}$ cancels the outer
measure and the $v_u$-integral over each variance segment reduces to an
exponential-affine transform of the tilted CIR process (derived explicitly in
\eqref{PsiAff} below). The cancellation of the outer CIR weight is what produces
$p_{\mathrm{CIR}}^{(\gamma)}$ in place of $p_{\mathrm{CIR}}$ inside the resulting
integral, consistently with \eqref{finalPhicond1}.  Collecting the constant
($r_d$) and exponential ($r_f$) parts of $g$, the resulting frequency-domain
representation of the premium is
\begin{align}\label{Jfin}
\calJ(u\mid x,v,t) &= \frac{1}{2\pi}\,\Ree \int_{-\frac{\pi}{\Delta x}}^{\frac{\pi}{\Delta x}} e^{\,C(\widetilde\omega,u)} \sum_m e^{(\iu\omega-\chi)\,a_m(u)}\,      \calF\!\bigl(\omega,a_m(u),b_m(u)\bigr)\,d\omega, \qquad\widetilde\omega = -\omega-\iu\chi, \\[4pt]
\calF(\omega,a_m,b_m) &= \frac{r_d(u)}{\iu\omega-\chi}\,      \Psi_{\mathrm{CIR}}^{(\gamma(u,\widetilde\omega))}\!\bigl(-\iu\Lambda_1(u,\omega)\bigr)
- \frac{r_f(u)\,e^{a_m(u)}}{1+\iu\omega-\chi}\, \Psi_{\mathrm{CIR}}^{(\gamma(u,\widetilde\omega))}\!\bigl(-\iu\Lambda_2(u,\omega)\bigr),
\nonumber
\end{align}
with $C$, $D$, $\Lambda_1$, $\Lambda_2$ as in \eqref{CDLambda}.

\begin{myremark}[Sign of the damping factor]\label{rem:chiSign}
The convergence condition for the payoff integrals in \eqref{Ptilde} is
$\Ree(\iu\omega-\chi) = -\chi > 0$.  For a \emph{Put}, whose exercise region is
$\{x_u < x^*\}$ and whose payoff transform integrates towards $x_u\to-\infty$,
one must therefore take $\chi < 0$ (for a Call, $\{x_u > x^*\}$, the opposite
sign $\chi > 0$ applies).  In the implementation we use $\chi = -1$.
\end{myremark}

\subsection{Evaluation via the affine transform} \label{sec:dsinc_impl}

To obtain an explicit form for
$\Psi_{\mathrm{CIR}}^{(\gamma(u,\widetilde\omega))}(q)$, we note that by
\eqref{CFtilted} this is a joint Laplace transform of the variance process and
its path-dependent functionals at the terminal state $v_u$, i.e.\
$\mathbb{E}[e^{\iu q v_u+\iu b\Lambda_u}]$ evaluated at $b =
\iu\gamma(\widetilde\omega)$ and $q = -\iu\Lambda_i(u,\omega)$.

Due to the affine structure of the underlying variance process, the solution
possesses the exponential affine form
\begin{equation}\label{PsiAff}
\Psi_{\mathrm{CIR}}^{(\gamma(u,\widetilde\omega))}(q) = e^{\mathcal{A}(u,q)+\mathcal{B}(u,q)\,v}.
\end{equation}
The coefficients $\mathcal{A}$ and $\mathcal{B}$ solve the following system of
Riccati equations, integrated backward in time over the horizon $u$
\begin{align}\label{ricNew}
\frac{d\mathcal{B}(\tau)}{d\tau} &= -\gamma(\tau,\widetilde\omega) - \kappa\,\mathcal{B}(\tau) + \tfrac{1}{2}\xi(\tau)^2\,\mathcal{B}^2(\tau),\\
\frac{d\mathcal{A}(\tau)}{d\tau} &= \kappa\,\theta(\tau)\,\mathcal{B}(\tau),\nonumber
\end{align}
subject to the initial conditions $\calA(0) = 0$, $\calB(0) = \iu q = \Lambda_i(u,\omega)$.  This system can be solved in the same way as described in
\cref{condCF}.  Here $b_n = -\kappa$ and $c_n = -\gamma(\tau,\widetilde\omega)$,
while $u_2$ should be replaced with $q$.

As the kernel $\gamma(\tau,\widetilde\omega)$ contains the time-varying
components of both the drift correction and the idiosyncratic variance, this
system must be solved numerically to account for the inhomogeneous ``tilting'' of
the process.  However, under the assumption of piecewise constant coefficients
$(\kappa,\theta,\xi,\rho)$ on the induction grid, the solution can be constructed
by chaining analytical segments where \eqref{ricNew} reduces to a standard
Riccati equation with constant parameters at each step, effectively recovering a
sequence of local tilted CIR transforms.

\paragraph{Closed-form tilted-CIR transform.}

It is important that we never evaluate the noncentral-$\chi^2$ density
\eqref{CIRdensity}: the transform $\Psi_{\mathrm{CIR}}^{(\gamma)}(q) =
e^{\calA(u,q)+\calB(u,q)v}$ is obtained from the matrix propagator of
\cref{appMoebius}. Moreover, the dependence of the initial condition for
\eqref{ricNew} on the running time $u$ implies that a naive solution for
$u\in[0,\tau(0))$ would require separately integrating \eqref{ricNew} from $0$ to
each $u_k$ using $q(u_k)$, leading to $O(N^2)$ complexity.  To avoid this, we
adopt the matrix-based affine transform approach (see, e.g.,
\cite{reid1972riccati}) and express the Riccati solution as a M\"{o}bius
transformation.  As shown in \cref{appMoebius}, this reduces complexity to $O(N)$
even with a time-dependent terminal frequency $q(u)$.

Therefore, with $Q = -\gamma(u,\widetilde\omega)$, $P = -\kappa$, $R =
\tfrac{1}{2}\xi^2$, and $d = \sqrt{P^2-4RQ}$, the propagator components
$w_{11},\ldots,w_{22}$ are the elementary functions given in \cref{appMoebius},
and for an initial value $\Lambda$ (here $\Lambda_1$ or $\Lambda_2$ from
\eqref{CDLambda}) the M\"{o}bius map \eqref{eq:moebius_final} together with the
quadrature of $d\calA/d\tau = \kappa\theta\calB$ from \eqref{ricNew}, using
$X'(\tau) = -R\calB X$, $X(0) = 1$) yield in closed form
\begin{equation}\label{ABmoebius}
\calB(u,\Lambda) = \frac{w_{21}(u)+w_{22}(u)\,\Lambda}{w_{11}(u)+w_{12}(u)\,\Lambda},
\qquad
\calA(u,\Lambda) = -\frac{2\kappa\theta}{\xi^2} \ln\!\bigl(w_{11}(u)+w_{12}(u)\,\Lambda\bigr),
\end{equation}
so that $\Psi_{\mathrm{CIR}}^{(\gamma)}(-\iu\Lambda_i) =
e^{\calA(u,\Lambda_i)+\calB(u,\Lambda_i)v}$. Both $\calA$ and $\calB$ require
only $\exp$, $\sinh/\cosh$ and $\ln$; no Bessel function and no numerical
$v_u$-quadrature appear.

\paragraph{Numerical properties of the spectral integral.}

The integral in \eqref{Jfin} is evaluated over the spectral interval $\Omega = [-\tfrac{\pi}{\Delta x}, \tfrac{\pi}{\Delta x}]$. By construction, the conditional characteristic function $\Psi(\widetilde\omega,v_u)$ is band-limited to this interval. Additionally, the payoff transform $\widetilde P$ is a smooth, exponentially decaying function of $\omega$ on this same compact domain.

The underlying grid is designed to capture these exact frequencies. Unlike standard Fourier integrals where $\omega\to\infty$, this frequency is capped by the Nyquist limit of the discretisation. This cap prevents the aliasing and high-frequency cancellation that typically require Filon- or Levin-type methods. Within this band, the integrand forms a central peak with a phase factor of $e^{\iu\omega c}$. Here, $c$ is roughly the size of the boundary-conditioning offset, $x^*(u,v_u)-x$. This peak sits under a Gaussian-type envelope of width $\sigma_x^{-1}$, where $\sigma_x^2 = \int_t^u v_s\,ds$ represents the conditional variance.

In every regime we observed, this envelope decays well before the band edge, dropping to many orders of magnitude below the working tolerance. Therefore, truncating at $\pm\tfrac{\pi}{\Delta x}$ introduces no error, making the band edge itself immaterial. As a result, standard Gaussian quadrature (such as Gauss--Legendre or Gauss--Kronrod) easily resolves the integrand. This requires only a modest, frequency-independent number of nodes, with no need for specialized coordinate transformations or complex-plane deformations.

One discretisation subtlety is worth noting, as it can masquerade as a convergence failure. Gauss--Legendre nodes scale with the band according to $\omega_k = \Omega\,\xi_k$, using fixed reference nodes $\xi_k\in[-1,1]$. A problem arises if the node count ($N_\omega$) is tied to the band half-width via a fixed-spacing prescription ($N_\omega\propto\Omega$). When the band is refined, every node rescales. Consequently, the sampling phase $e^{\iu\Omega\xi_k c}$ sweeps against the internal oscillation of the integrand.

This makes the Gauss--Legendre error \emph{quasi-periodic} in $\Omega$ (or equivalently, $N_\omega$). The error amplitude decreases as nodes are added, eventually converging for large $N_\omega$. While spurious dips become sparser as $N_\omega$ grows, this behavior is strictly a sampling-phase property of the quadrature rule. It is not an issue with the integrand or the band edge, which carries negligible mass.

This quasi-periodic error can be resolved in two ways. The first is to decouple the variables, choosing the band width ($\Omega$) and node count ($N_\omega$) independently. The second is to switch to a uniform rule, such as a midpoint or trapezoidal rule. Their central nodes sit at positions independent of $\Omega$, making the error monotone in $N_\omega$, though this sacrifices the geometric convergence rate of the Gaussian rule.

In our implementation, we retain the Gauss--Legendre method. To avoid the sampling-phase issue, we size the band past the envelope decay and fix $N_\omega$ independently of $\Omega$. Finally, the dependence of the final result on the initial values $(x,v)$ enters through two terms. It depends on $\Psi_{\mathrm{CIR}}^{(\gamma(u, \widetilde\omega))}(q)$ (defined in \eqref{PsiAff} for $v$) and $\mu_0(t)$ (defined in \eqref{eq:mu} for $x$).

\paragraph{Single linear boundary and quadrature.}

In the limiting case where the exercise boundary is fitted by a single linear
function of the terminal variance (like in \cite{Chiarella2004}), $x^*(u,v)\approx a(u)+b(u)v$, i.e. the case $M = 1$ of the piecewise-linear ansatz preceding \eqref{pwl_xStar}, the full-range transform $\int_0^\infty$ in \eqref{PsiAff} is exact. The spectral integral \eqref{Jfin} is truncated to a symmetric band $[-\Omega,\Omega]$ (with $\Omega$ chosen past the decay of the conditional CF, consistent with the Nyquist limit $\pi/\Delta x$) and discretized with an $N_\omega$-point Gauss--Legendre rule $\{(\omega_q,\nu_q)\}$, giving, at conditioning state $(x,v)$ and future date
$u$,
\begin{align}\label{Jquad}
\calJ(u\mid x,v,t) &\approx \frac{1}{2\pi}\,\Ree\sum_{q=1}^{N_\omega}\nu_q\,
e^{\,\iu\widetilde\omega_q\mu_0(u)} \Bigg[ \frac{r_d(u)\, e^{(\iu\omega_q-\chi)a(u)}}{\iu\omega_q-\chi}\, e^{\calA(u,\Lambda_{1,q})+\calB(u,\Lambda_{1,q})v} \\ &- \frac{r_f(u)\,e^{(1+\iu\omega_q-\chi)a(u)}}{1+\iu\omega_q-\chi}\,
e^{\calA(u,\Lambda_{2,q})+\calB(u,\Lambda_{2,q})v} \Bigg], \nonumber
\end{align}
with $\widetilde\omega_q = -\omega_q-\iu\chi$, $\Lambda_{1,q} =
\iu\widetilde\omega_q\mu_1(u)+(\iu\omega_q-\chi)b(u)$, and $\Lambda_{2,q} =
\iu\widetilde\omega_q\mu_1(u)+(1+\iu\omega_q-\chi)b(u)$. This representation provides a significant simplification as compared with the inversed Fourier transform used in \cite{Chiarella2004}.

\begin{myremark}[Self-consistency check and validation] \label{rem:dsincValid}
Because $\chi$ enters \eqref{Jfin} only as the contour shift
$\omega\mapsto\omega-\iu\chi$ that the reflected evaluation $\widetilde\omega =
-\omega-\iu\chi$ undoes, the integral in \eqref{Jquad} is---within the
analyticity strip---independent of $\chi$; in practice it exhibits a flat plateau
in $\chi$, which we use as an implementation self-check.  On the validation
suite, \eqref{Jquad} reproduces a direct Gil--Pelaez inversion of
\eqref{innerParseval} to machine precision, the European leg matches the
reference to an RMS error of $\sim\!10^{-6}$, and, measured against the
finite-difference benchmark of \cref{pdeSol}, the corrected DSINC premium is
markedly closer than the COS scheme. \end{myremark}

\section{ITM options and American put--call duality} \label{sec:duality}

For constant coefficients, American Put–Call symmetry dates back to \cite{McDonaldSchroder1998} and \cite{Schroder1999} (see also \cite{Detemple2001}). The argument relies on a change of numeraire and therefore carries over to calendar-time-dependent coefficients with essentially no modification. The only model-specific task is to identify the dynamics of the dual state variables. The proof below gives the time-dependent version, extending the constant-coefficient argument of Schroder and Detemple. These steps are elementary but do not appear in this form in the cited literature.

Let $\mathcal{T}_{t,T}$ denote the set of stopping times with values in $[t,T]$. We shall use repeatedly the fact that discounted price adjusted by the foreign rate
\begin{equation}\label{eq:Z}
Z_t \;=\; \frac{S_t\,D_d(0,t)}{S_0\,D_f(0,t)} \; = \; \exp\Bigl(\int_0^t\sqrt{v_s}\,dW^1_s - \tfrac12\int_0^t v_s\,ds\Bigr)
\end{equation}
is a strictly positive martingale. The local martingale property is immediate
from the exponential form. That $Z$ is a \emph{true} martingale holds for the
square-root specification of the model \eqref{model} for all admissible coefficient term structures and all $\rho(t)\in(-1,1)$, since the Riccati system for the CF has vanishing inhomogeneity at the moment index of $Z$, so that $\EE[Z_t]=1$ identically. For $\rho(t) > 0$ it is moments of $S$ of order $p > 1$, not the martingale property, that may explode in finite time, \cite{andersen2010interest}. This doesn't affect the pricing problem itself, but is relevant for the choice of transform contours, see below.

\begin{proposition}[Duality under time-dependent Heston] \label{prop:duality}
Assume $Z$ in \eqref{eq:Z} is a true martingale and define $\hat\kappa(t) = \kappa(t) - \rho(t)\sigma(t)$. Then
\begin{equation}\label{eq:duality}
\PAm\bigl(S_0,K;\,r(\cdot),q(\cdot);\,
 \kappa(\cdot),\theta(\cdot),\sigma(\cdot),\rho(\cdot),v_0\bigr)
=\CAm\bigl(K,S_0;\,q(\cdot),r(\cdot);\,
 \hat\kappa(\cdot),\hat\theta(\cdot),\sigma(\cdot),-\rho(\cdot),v_0\bigr),
\end{equation}
where the dual long-run level is defined through $\hat\kappa(t)\hat\theta(t) = \kappa(t)\theta(t)$, i.e. $\hat\theta(t)=\kappa(t)\theta(t)/\hat\kappa(t)$
whenever $\hat\kappa(t)\neq 0$. In words: an American put on spot $S_0$ with
strike $K$ equals an American call on spot $K$ with strike $S_0$, with the roles
of $r(\cdot)$ and $q(\cdot)$ interchanged, the correlation sign flipped, and the
variance drift adjusted as above; vol-of-vol $\sigma(\cdot)$ and the initial
variance $v_0$ are unchanged.
\end{proposition}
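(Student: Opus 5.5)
The plan is a change of numeraire from the domestic money-market account to the foreign-discounted spot, driven by the density process $Z$ in \eqref{eq:Z}. Here $r(\cdot)=r_d(\cdot)$ and $q(\cdot)=r_f(\cdot)$, and $\sigma(\cdot)$ plays the role of $\xi(\cdot)$.

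First, write the put value as
\begin{equation*}
\PAm=\sup_{\tau\in\mathcal{T}_{0,T}}\mathbb{E}\bigl[D_d(0,\tau)(K-S_\tau)^+\bigr].
\end{equation*}
Since $S_\tau D_d(0,\tau)=S_0D_f(0,\tau)Z_\tau$, the payoff can be factored as
\begin{equation*}
D_d(0,\tau)(K-S_\tau)^+ = S_0\,Z_\tau\,D_f(0,\tau)\Bigl(\frac{K}{S_\tau}-1\Bigr)^+ = Z_\tau\,D_f(0,\tau)\bigl(K\,Y_\tau - S_0\bigr)^+\Big/\!\,1,
\end{equation*}
where $Y_t:=S_0/S_t$. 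One checks that $S_0\,(K/S_\tau-1)^+=(KY_\tau-S_0)^+$, which is a call on the "spot" $KY$ started at $K$ with strike $S_0$.

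Next, use that $Z$ is a true martingale, which is assumed. This defines $\hat{\mathbb{Q}}$ on $\mathcal{F}_T$ by $d\hat{\mathbb{Q}}/d\mathbb{Q}=Z_T$. By optional sampling for bounded stopping times, $\mathbb{E}[Z_\tau H_\tau]=\hat{\mathbb{E}}[H_\tau]$ for any $\mathcal{F}_\tau$-measurable nonnegative $H_\tau$. The supremum over $\mathcal{T}_{0,T}$ is unchanged, because the filtration and the class of stopping times do not depend on the measure. Hence
\begin{equation*}
\PAm=\sup_\tau\hat{\mathbb{E}}\bigl[D_f(0,\tau)(\hat S_\tau-S_0)^+\bigr],\qquad \hat S_t:=K\,Y_t .
\end{equation*}

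Then identify the dynamics under $\hat{\mathbb{Q}}$ via Girsanov. Since $dZ=Z\sqrt{v}\,dW^S$, the processes
\begin{equation*}
\hat W^S=W^S-\int\sqrt{v_s}\,ds,\qquad \hat W^v=W^v-\int\rho(s)\sqrt{v_s}\,ds
\end{equation*}
are $\hat{\mathbb{Q}}$-Brownian motions, and their correlation is still $\rho(t)$. Itô's formula for $\log\hat S=\log K-x$ (with $x=\log(S/\cdot)$) gives
\begin{equation*}
d\log\hat S_t=\bigl(r_f-r_d+\tfrac12 v_t-v_t\bigr)dt-\sqrt{v_t}\,d\hat W^S_t=\bigl(r_f-r_d-\tfrac12 v_t\bigr)dt+\sqrt{v_t}\,d(-\hat W^S_t).
\end{equation*}
So $\hat S$ is a Heston-type spot with domestic rate $r_f$, foreign rate $r_d$, driven by $\tilde W:=-\hat W^S$. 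The correlation of $\tilde W$ with $\hat W^v$ is $-\rho(t)$.

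For the variance,
\begin{equation*}
dv=\bigl(\kappa\theta-\kappa v+\rho\sigma v\bigr)dt+\sigma\sqrt v\,d\hat W^v=\bigl(\kappa\theta-\hat\kappa v\bigr)dt+\sigma\sqrt v\,d\hat W^v .
\end{equation*}
With $\hat\kappa\hat\theta=\kappa\theta$, this is exactly the stated dual parameter set, with $\sigma$ and $v_0$ unchanged. The discount factor $D_f$ is the domestic discount of the dual problem. The resulting expression is therefore the American call value in \eqref{eq:duality}. Because the value function depends only on the law of the state process on the common filtration, the identity holds as stated.

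The main obstacle is the measure-change step. Two points need care there. First, $\hat{\mathbb{Q}}$ must be a genuine equivalent measure, which is why the true-martingale assumption on $Z$ is needed and is not just a technicality. Second, the dual variance SDE must be admissible even when $\hat\kappa\le 0$. In that case only the product $\hat\kappa\hat\theta$ is meaningful, and I would state the dual model in terms of its drift $\kappa\theta-\hat\kappa v$ rather than in terms of $\hat\theta$. I would also note that the Feller-type boundary behaviour at $v=0$ is unchanged, since the constant drift $\kappa\theta$ and $\sigma$ are the same under both measures.
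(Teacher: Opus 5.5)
Your proposal is correct and follows essentially the same route as the paper: the change to the share measure via $Z$, the factorization of the payoff with the dual asset $\widehat S = K S_0/S$ and optional sampling, then Girsanov applied to both Brownian motions to obtain $\hat\kappa = \kappa - \rho\sigma$ and the flipped correlation. Your drift $r_f - r_d$ for $\widehat S$ is the right one; the paper's displayed drift reads $r_d - r_f$, a sign slip that contradicts its own conclusion of short rate $q(\cdot)$ and dividend yield $r(\cdot)$.
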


\begin{proof}
Define the measure $\Qhat\sim\Q$ on $\F_T$ by $d\Qhat/d\Q=Z_T$ (the ``share
measure''). Since $Z$ is a true martingale, $\Qhat$ is a probability measure and,
for any $\tau\in\mathcal{T}_{0,T}$ and any nonnegative $\F_\tau$-measurable $Y$,
optional sampling gives $\EE[Z_\tau Y]=\Ehat[Y]$.

Fix $\tau\in\mathcal{T}_{0,T}$ and factor the discounted put payoff:
\begin{equation*}
D_d(0,\tau)\bigl(K-S_\tau\bigr)^+ = D_d(0,\tau) S_\tau \Bigl( \frac{K}{S_\tau}-1\Bigr)^+ = S_0\,D_f(0,\tau)\,Z_\tau\,   \Bigl(\frac{K}{S_\tau}-1\Bigr)^+.
\end{equation*}
Introducing the dual asset
\begin{equation}\label{eq:dualasset}
\widehat S_u \;=\; \frac{K S_0}{S_u},\qquad \widehat S_0=K,
\end{equation}
we have $S_0\,(K/S_\tau-1)^+=(\widehat S_\tau-S_0)^+$, hence
\begin{equation*}
\EE\Bigl[D_d(0,\tau)(K-S_\tau)^+\Bigr] = \Ehat\Bigl[D_f(0,\tau)\bigl(\widehat S_\tau-S_0\bigr)^+\Bigr].
\end{equation*}
Taking the supremum over $\tau\in\mathcal{T}_{0,T}$ (the family of stopping times
is unchanged, $\Qhat\sim\Q$) yields the value of an American Call on $\widehat S$
with strike $S_0$ and discount rate
$q(\cdot)$.

It remains to identify the $\Qhat$-dynamics of $(\widehat S,v)$. By Girsanov's
theorem applied to \eqref{eq:Z}, $\widehat W^S_t=W^S_t-\int_0^t\sqrt{v_s}\,ds$ is
a $\Qhat$-Brownian motion. Decompose
$W^v_t=\int_0^t\rho\,dW^S+\int_0^t\sqrt{1-\rho^2}\,dW^\perp$ with $W^\perp\perp
W^S$; since $W^\perp$ is unaffected by the change of measure, $\widehat
W^v_t=W^v_t-\int_0^t\rho(s)\sqrt{v_s}\,ds$ is a $\Qhat$-Brownian motion with
$d\langle\widehat W^S,\widehat W^v\rangle_t=\rho(t)\,dt$. Substituting
in~\eqref{model},
\begin{equation}\label{eq:vdual}
dv_t=\Bigl[\kappa(t)\theta(t)-\bigl(\kappa(t)-\rho(t)\sigma(t)\bigr)v_t\Bigr]dt
+\sigma(t)\sqrt{v_t}\,d\widehat W^v_t
=\hat\kappa(t)\bigl(\hat\theta(t)-v_t\bigr)dt
+\sigma(t)\sqrt{v_t}\,d\widehat W^v_t,
\end{equation}
again a (time-dependent) square-root process. For the dual asset,
It\^o's formula applied to $1/S$ together with
$dW^S=d\widehat W^S+\sqrt{v}\,dt$ gives
\begin{equation*}
\frac{d\widehat S_t}{\widehat S_t}
=\bigl(r_d(t)-r_f(t)\bigr)\,dt+\sqrt{v_t}\,d\widetilde W^S_t,
\qquad \widetilde W^S:=-\widehat W^S,
\end{equation*}
so $\widehat S$ is a Heston asset with short rate $q(\cdot)$, dividend
yield $r(\cdot)$, the same variance path $v$, and correlation $d\langle \widetilde W^S, \widehat W^v\rangle_t = -\rho(t)\,dt$. Collecting the coefficients gives~\eqref{eq:duality}.
\end{proof}

The dual model has the following properties:
\begin{itemize}
\item For equity-style negative correlation, $\rho(t)\le 0$, one has $\hat\kappa(t)\ge\kappa(t)>0$, so the dual variance process is
well-defined and mean-reverting, and $Z$ is automatically a true
martingale~\cite{andersen2010interest}.

\item For $\rho(t)>0$ one may have $\hat\kappa(t)\le 0$; the dual variance
process is then non-mean-reverting but remains a well-posed square-root diffusion
when its drift is written in affine form $a(t)-b(t)v$ with $a=\kappa\theta>0$,
$b=\hat\kappa$. Since Proposition~\ref{prop:duality} requires only the martingale
property of $Z$, which holds for all $\rho$, the duality is unrestricted; however, dual moments correspond to primal moments of order $1-p$, so the strip of regularity of the dual CF is the reflection of the primal one and transform contours must be revalidated when $\rho > 0$.

\item Since $\hat\kappa\hat\theta=\kappa\theta$ and $\sigma$ is unchanged, the
Feller quantity $2\kappa(t)\theta(t)/\sigma(t)^2$ is \emph{invariant} under the
duality: the dual model attains the variance origin exactly when the primal one
does.

\item The dual problem is again a time-dependent Heston optimal stopping problem,
so any numerical scheme for pricing American options (in our case, the early exercise premium integral equation solved by quadrature) applies verbatim to the
right-hand side of~\eqref{eq:duality} with the transformed coefficient term
structures. The dual exercise surface is a different object and must be
recomputed.

\item A deep ITM Put ($S_0\ll K$) corresponds under \eqref{eq:duality} to a deep
\emph{OTM} Call, the regime in which premium-integral representations are best
conditioned; this is the basis of the moneyness-switching strategy.
\end{itemize}

\subsection{Delayed exercise premium decomposition}\label{sec:dep}

In \cite{CarrJarrowMyneni1992} it was observed that the Put value also decomposes
off its \emph{intrinsic} value, with a premium integral supported on the
\emph{continuation} region. The derivation below makes explicit that this
requires nothing beyond \cref{prop1}, the martingale property of~\eqref{eq:Z}, and European Put--Call parity; in particular it is valid for the
time-dependent Heston model verbatim.

\begin{proposition}[DEP decomposition]\label{prop:dep} Under the model \eqref{model}, with $Z_t$ in \eqref{eq:Z} to be a true martingale,
\begin{equation}\label{eq:dep}
\PAm(S_0,v_0) =\bigl(K-S_0\bigr) + \CEu(S_0,v_0) +\EE\!\int_0^T D_d(0,\tau)
 \bigl(r_f(u)S_u-r_d(u)K\bigr)\, \ind\bigl\{(S_u,v_u)\in\mathcal{E}^c_u\bigr\}\,du,
\end{equation}
where $\CEu$ is the European call with the same strike and maturity. \end{proposition}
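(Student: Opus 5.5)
The plan is to start from the early exercise premium decomposition of \cref{prop1}, rewritten for the put with rates $r_d,r_f$ as
\begin{equation*}
\PAm(S_0,v_0)=\PEu(S_0,v_0)+\EE\!\int_0^T D_d(0,u)\bigl(r_d(u)K-r_f(u)S_u\bigr)\ind\bigl\{(S_u,v_u)\in\mathcal{E}_u\bigr\}\,du .
\end{equation*}
This is the spot-variable analogue of \eqref{decompX}: the drift of $S$ is $(r_d-r_f)S$, so the generator applied to the intrinsic value $K-S$, minus discounting, gives $-(r_dK-r_fS)$.

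Next, split the indicator as $\ind_{\mathcal{E}_u}=1-\ind_{\mathcal{E}^c_u}$. This writes the premium as an unrestricted integral minus an integral over the continuation region. The unrestricted part is deterministic in expectation. By the martingale property of $Z$ in \eqref{eq:Z} we have $\EE[D_d(0,u)S_u]=S_0D_f(0,u)$. Using $r_d D_d=-\partial_u D_d$ and $r_fD_f=-\partial_u D_f$, this gives
\begin{equation*}
\int_0^T\bigl(r_d(u)KD_d(0,u)-r_f(u)S_0D_f(0,u)\bigr)du=K\bigl(1-D_d(0,T)\bigr)-S_0\bigl(1-D_f(0,T)\bigr).
\end{equation*}
Moving the minus sign inside converts the remaining term into $+\EE\int_0^T D_d(0,u)(r_fS_u-r_dK)\ind_{\mathcal{E}^c_u}du$, which is exactly the last term of \eqref{eq:dep}. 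Exchanging expectation and the time integral is justified by Fubini, because $\EE\int_0^T|S_u|du<\infty$ by the martingale property.

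Then apply European put--call parity, which again rests only on the martingale property of $Z$:
\begin{equation*}
\PEu=\CEu-S_0D_f(0,T)+KD_d(0,T).
\end{equation*}
Adding this to the constant from the previous step gives
\begin{equation*}
\CEu-S_0D_f(0,T)+KD_d(0,T)+K-KD_d(0,T)-S_0+S_0D_f(0,T)=\CEu+K-S_0,
\end{equation*}
which yields \eqref{eq:dep}. (The $D_d(0,\tau)$ in the statement should be read as $D_d(0,u)$.)

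I expect the main obstacle to be only the bookkeeping in the first step. Proposition \ref{prop1} is stated with generic $r$ and $\mu_n$, so its premium integrand must be correctly identified as $r_dK-r_fS_u$ on $\mathcal{E}$, including the sign. Everything after that is linear algebra relying on the true-martingale property of $Z$, which is assumed in the statement.
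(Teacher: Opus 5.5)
Your proposal is correct and follows the paper's proof essentially step by step. You replace $\ind\{\mathcal{E}_u\}$ by $1-\ind\{\mathcal{E}^c_u\}$ in the EEP formula, evaluate the unrestricted part in closed form via Fubini and $\EE[D_d(0,u)S_u]=S_0D_f(0,u)$, and collapse the constants with European put--call parity; your identification of the premium integrand as $r_d(u)K-r_f(u)S_u$ and your reading of $D_d(0,\tau)$ as $D_d(0,u)$ both match what the paper's proof uses.
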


\begin{proof}
Write $\ind\{\mathcal{E}_u\}=1-\ind\{\mathcal{E}^c_u\}$ in~\eqref{decompGenN} and
split the premium integral. The unconditional part is computed in closed form.
First, by Fubini,
\begin{equation*}
\EE\!\int_0^T D_d(0,u) r_d(u) K\,du = K\int_0^T r_d(u) D_d(0,u) \,du = K\bigl(1-D_d(0,T)\bigr).
\end{equation*}

Second, since $Z$ is a true martingale, $\EE\bigl[D_d(0,u) S_u \bigr]=S_0\, D_f(0,u)$ for every $u$, whence
\begin{equation*}
\EE\!\int_0^T D_d(0,u) r_f(u)S_u\,du = S_0\int_0^T r_f(u) D_f(0,u)\,du = S_0\bigl(1-D_f(0,T)\bigr).
\end{equation*}
Therefore
\begin{equation*}
\PAm=\PEu + K\bigl( 1 - D_d(0,T) \bigr) - S_0 \bigl( 1 - D_f(0,T) \bigr)
+ \EE\!\int_0^T D_d(0,u) \bigl(r_f(u) S_u - r_d(u) K \bigr) \ind\{\mathcal{E}^c_u\}\,du.
\end{equation*}
European Put--Call parity under deterministic rates, $\PEu + S_0 D_f(0,T) - K D_d(0,T) = \CEu$, collapses the first three terms
\begin{equation*}
\PEu + K - K D_d(0,T) - S_0 + S_0 D_f(0,T) = (K-S_0) + \CEu,
\end{equation*}
which gives~\eqref{eq:dep}. Note that parity itself uses the same true martingale property of $Z$.
\end{proof}

The integrand in~\eqref{eq:dep} is the cost of \emph{delaying} exercise: while
the option holder remains in the continuation region, she forgoes interest on the
strike but keeps the dividend stream on the stock, hence the sign reversal
relative to~\eqref{decompGenN}. Both representations share the same exercise
surface $B(u,v)$; only the pricing functional differs. In a transform-based
implementation the indicator probabilities in~\eqref{eq:dep} are the
complementary tails of those already computed for~\eqref{decompGenN}.

Because the proof of Proposition~\ref{prop:dep} uses only (a) the EEP
decomposition, (b) the true martingale property of dividend-adjusted discounted
stock, and (c) European put--call parity, the DEP representation holds for any
model in which (a)--(c) hold: in particular for continuous-path Markovian models
with deterministic rates and, e.g., local volatility, multi-factor and
time-dependent stochastic volatility, without any assumption on the shape of the
exercise region. The following caveats delimit its scope:

\begin{enumerate}
\item For the square-root specification of \eqref{model}, $Z$ in~\eqref{eq:Z} is
a true martingale for \emph{all} admissible parameters, including $\rho>0$: the
affine Riccati system for the CF has vanishing inhomogeneity at $u=-i$, so
$\EE[S_T]$ equals the forward price for any coefficient term structures (what
fails for $\rho>0$ is finiteness of moments of order $p>1$, see~\cite{andersen2010interest}). The caveat is relevant for other stochastic
volatility specifications: with level-proportional or lognormal volatility and
positive correlation the asset price can be a strict local martingale, \cite{Sin1998}, in which case $\EE[D_d(0,u) S_u] < S_0 D_f(0,u)$ and
Put--Call parity fails by the martingale defect. There the EEP representation for
the Put survives, but the passage to \eqref{eq:dep} does not: the DEP form is
genuinely less robust than the EEP form in this respect.

\item The \eqref{eq:dep} remains valid for sign-changing $r_d(\cdot)$, $r_f(\cdot)$ as long as the indicator is taken over the true exercise region.
However, when $r(t)<0$ on part of the horizon, the Put exercise region need not
be a single connected ``below the surface'' set, again see
\cite{AndersenLake2021,ItkinKitapbayev2025r} and references therein. It is then
the single-surface representation in \cref{prop1} that breaks down not the
decomposition itself, that breaks down. However, an appropriate treatment of this
requires a separate and more detailed consideration.

\item The local-time argument in \cref{prop1} assumes nondegeneracy of
the diffusion. If the Feller condition $2\kappa\theta\ge\sigma^2$ fails and $v$
spends time at the origin, the standard proofs require additional care, although
we are not aware of the \cref{decompGenN,eq:dep} failing in that regime. The duality neither creates nor removes this issue.

\item For jump-diffusions, the It\^o--Tanaka argument behind \cref{prop1} produces an additional term arising from paths that jump across the exercise
boundary: the EEP acquires a correction equal to the expected loss
$\EE\bigl[\,\PAm(S_{u^-}e^{J},\cdot)-(K-S_{u^-}e^{J})\,\bigr]$ over jumps
carrying the state from $\mathcal{E}$ into $\mathcal{E}^c$; see , e.g.,
\cite{Gukhal2001,Peskir2007,ItkinCF2025j}. Therefore, under this conditions the
duality requires an extended consideration.

\end{enumerate}

\subsection{Conditioning of the two representations}

Based on the above analysis, the American Put option price admits two decompositions which distribute the price differently
\begin{align*}
\PAm &= \underbrace{\PEu}_{\text{large ITM}} +\, \underbrace{ \EEP(S_0,v_0) }_{\text{large ITM}}, \\
\PAm &= \underbrace{(K-S_0)}_{\text{exact}}\, + \underbrace{\CEu}_{\text{small ITM}}
+\, \underbrace{\DEP(S_0,v_0)}_{\text{small ITM}},
\end{align*}
where $\EEP$ and $\DEP$ denote the premium integrals in \cref{decompGenN,eq:dep}.

Deep ITM ($S_0\ll K$, $S_0$ approaching $S^*(0,v_0)$ from above), both terms of the EEP form are $O(K)$, while the computable terms of the DEP form are $O(P - K + S_0)$, and the dominant contribution $K - S_0$ carries no discretization error at all. A relative quadrature error $\varepsilon$ in the premium integral therefore produces a price error of order $\varepsilon\,O(K)$ under EEP but $\varepsilon\,O(P - K + S_0)$ under DEP.

Moreover, deep ITM the EEP integrand $D_d(0,u)\,r(u)K\, \Q(S_u\le S^*(u,\cdot)\mid S_0)$ is near its maximal size over the whole time interval, the regime in which
quadrature and exercise-surface errors are most expensive; the DEP integrand is
the complementary tail and is uniformly small there. In OTM the situation reverses, and the EEP form is preferable.

By the smooth fit principle for American options, the price is stationary with respect to perturbations of the exercise surface, so errors in $S^*$ enter the price only at second order under either representation; the switch addresses the dominant, quadrature-driven error component. In practice we evaluate
\begin{equation}\label{eq:switch}
\PAm \approx
\begin{cases}
\EEP, & S_0 \ge \lambda K,\\[2pt]
\DEP, & S_0 < \lambda K,
\end{cases}
\end{equation}
with a moneyness threshold $\lambda$ near one (e.g., $\lambda \in [0.9,1.0]$). Both forms reuse the same exercise surface $S^*(u,v)$ computed once from the integral equation, so the marginal cost of the second representation is one additional pricing quadrature. In the overlap band around $S_0 \approx \lambda K$, both representations are accurate and their discrepancy $|\EEP - \DEP|$ provides an internal error estimate at negligible cost.

Two further safeguards must be taken into account:
\begin{itemize}
\item For $S_0 \le S^*(0,v_0)$ the price is exactly $K - S_0$ and no integral should be evaluated.

\item \cref{prop:duality} offers an independent route, pricing the deep ITM Put
as a deep OTM Call in the dual model. Agreement between this dual price and the
price obtained from \eqref{eq:switch} in the ITM region serves as a strong
consistency check on both the decomposition layer and the transform layer of the
implementation.
\end{itemize}

\section{The PDE solution} \label{pdeSol}

This section provides the benchmark finite-difference solution of the American option contract under the time-inhomogeneous Heston stochastic volatility model in \eqref{model}. In doing that, we switch back to variables $(S_t, v_t)$, so the corresponding American option Put price $P(t,S,v)$ is given by, \cite{Zvan1998}
\begin{equation}
P(t,S,v) = \sup_{\tau\in[t,T]} \EQ \left[e^{-\int_t^\tau r_d(s)\,ds}\,(K-S_\tau)^+\,\big|\, S_t = S,\; v_t=v
\right].
\end{equation}

The infinitesimal generator corresponding to \eqref{model} reads (compare with \eqref{pdeCF})
\begin{align}
\mathcal{L} &= [r_d(t)-r_f(t)] S \fp{}{S} + \kappa[\theta(t)-v] \fp{}{v} + \frac{1}{2} v S^2 \sop{}{S} +
\rho(t)\xi(t) S v \mop{}{S}{v} + \frac{1}{2} \xi(t)^2 v \sop{}{v}.
\end{align}

It is known (see, e.g., \cite{wilmott1995option} among others) that the American put price solves the variational inequality
\begin{equation}\label{eq:AM_HJB}
\max\left[P_t + \mathcal{L} P - r_d(t)P\, , \, (K-S)^+ - P) \right] = 0,
\qquad (t,S,v)\in[0,T)\times(0,\infty)^2.
\end{equation}

To solve it, following \cite{Zvan1998}, we replace \eqref{eq:AM_HJB} by the nonlinear penalized PDE
\begin{equation}\label{eq:AM_penalty}
P_t + \mathcal{L} P - r_d(t) P  + p\,\max\big[(K-S)^+ - P, 0\big] = 0,
\end{equation}
where $p>0$ is the penalty parameter. As $p \to \infty$, the solution of \eqref{eq:AM_penalty}
converges to the solution of the original HJB problem in \eqref{eq:AM_HJB}.

The \eqref{eq:AM_penalty} has to be solved subject to the terminal condition
\begin{equation}
P(T,S,v) = (K-S)^+,
\end{equation}
and the following boundary conditions set at the truncated spatial interval $(S,v) \in [0,S_{\max}]\times[0,V_{\max}]$. For the boundary condition in $S$ we use a Dirichlet condition at the spot lower boundary $S=0$, \cite{Rouah2013}
\begin{equation}
P(t,0,v)=K,
\end{equation}
and a Neumann condition at the spot upper boundary $S = S_{\max}$
\begin{equation}
P_S(t,S_{\max},v) = 0.
\end{equation}

In the variance domain, at the upper boundary we set a Neumann condition, \cite{Rouah2013}
\begin{equation}
P_v (t,S,v_{\max}) = 0,
\end{equation}
while the low boundary $v = 0$ needs a special treatment. Again, it is well-known that if the Feller condition $2 \kappa \theta > \xi^2$ is satisfied, the process $v_t$ is strictly positive, $v_t \in [0,\infty)$, and the boundary condition at $v=0$ is not required. In other words, the PDE itself with $v=0$ substituted into it serves as an appropriate boundary condition, i.e.
\begin{equation}
P_t + [r_d(t) - r_f(t)] S P_S - r_d(t) P = 0.
\end{equation}
Otherwise, if the Feller condition is violated, the behavior of $v_t$ at the origin should be additionally identified, see e.g., \cite{f54, CarrLinetsky2006, lucic_boundary_2012}, for instance, one can proceed with either absorbing or reflecting boundary condition.

\subsection{Temporal and spatial discretizations}

We discretize the spatial variables $S$ and $v$ on a non-uniform grid. Non-uniform meshes achieve higher accuracy with substantially fewer grid points, reducing computational cost (see, e.g., \cite{haentjens2015adi,ItkinBook}). Grid refinement is concentrated near the strike $S=K$ and along the degenerate boundary $v=0$.

The region near $v=0$ requires particular care: the second-order terms in \eqref{eq:AM_HJB} vanish there, rendering the equation convection-dominated. Local mesh refinement in this region improves both the stability and accuracy of the finite-difference approximation. For our numerical experiments, we truncate the computational domain to $(S,v)\in[0,8K]\times[0,5]$, a choice that renders truncation error negligible.

The construction of the non-uniform grid and the corresponding FD approximations follow \cite{haentjens2015adi} exactly; therefore, these technical details are omitted here.

Once the spatial discretization of the penalized problem \eqref{eq:AM_penalty} is complete, \eqref{eq:AM_penalty} is transformed into a semi-discrete system:
\begin{equation} \label{eq:semi_discrete_system}
\frac{d\mathbf{P}}{dt} = A\mathbf{P} + \mathbf{b} + p(\mathbf{P})\left(\mathbf{P}^* - \mathbf{P}\right),
\end{equation}
where $A$ is a sparse matrix, $\mathbf{P}$ and $\mathbf{b}$ denote the solution vector and the boundary contribution vector, respectively, and $p(\mathbf{P})$ is a diagonal penalty matrix with entries
\begin{equation} \label{eq:penalty_matrix}
p_{ii}(\mathbf{P}) =
\begin{cases}
p, & \text{if } \mathbf{P}_i < \mathbf{P}_i^*,\\
0, & \text{otherwise}.
\end{cases}
\end{equation}
Because the penalty matrix depends on the unknown solution, \eqref{eq:semi_discrete_system} is nonlinear.

To advance the semi-discrete system \eqref{eq:semi_discrete_system} in time, we apply an alternating direction implicit scheme of the modified Craig--Sneyd type. This class of schemes is particularly well suited for multidimensional parabolic problems with mixed derivative terms and has been widely used in the numerical valuation of options under stochastic volatility models\footnote{Note, that the MCS scheme is not positivity preserving due to its explicit treatment of the mixed derivatives term. To get a fully positivity preserving scheme one can use a method developed in \cite{itkin2022lsv}.}.

Let $0 = t_0 < t_1 < \cdots < t_N = T_2$ denote a partition of the time interval $[0, T]$ with time steps $\Delta t^n = t_n - t_{n-1}$, and let $\mathbf{P}^n$ denote the numerical approximation to the semi-discrete solution vector at time $t_n$. Following the spatial discretization described in above, the matrix $A$ is decomposed as
\begin{equation}
A = A_0 + A_1 + A_2,
\end{equation}
where $A_0$ corresponds to the mixed derivative term, while $A_1$ and $A_2$ contain the contributions from the first- and second-order derivatives in the $S$- and $v$-directions, respectively. The vector $\mathbf{b}$ is decomposed analogously as $\mathbf{b} = \mathbf{b}_0 + \mathbf{b}_1 + \mathbf{b}_2$, where each component accounts for the boundary contributions associated with the corresponding operator. The mixed derivative operator $A_0$ is treated explicitly in time, whereas the directional operators $A_1$ and $A_2$ are treated implicitly in an alternating fashion. This splitting yields a sequence of one-dimensional linear systems at each time step.

As shown in \cite{haentjens2015adi}, with an appropriate choice of the MSC parameters, the scheme is second-order accurate in time and unconditionally stable for two-dimensional parabolic problems with mixed derivatives in the time-homogeneous case.

\subsection{Penalty iteration for the ADI-MCS time stepping}

We incorporate the penalty formulation \eqref{eq:AM_penalty} into the ADI-MCS time-stepping scheme described in \cite{heidarpour2018spread}.

As demonstrated in \cite{azimzadeh2016weakly}, direct control (LCP-type) discretizations may lead to singular matrix iterates and potential failure of policy iteration unless problem-dependent modifications of the control set are introduced. In contrast, the penalized formulation yields strictly diagonally dominant matrices and guarantees convergence of Howard's policy iteration without requiring any ad hoc removal of controls. Moreover, numerical evidence indicates that penalized schemes perform at least as efficiently as their direct control counterparts, while being more robust.

To preserve computational efficiency within the ADI framework, the penalty term is introduced only in the final directional step of the ADI corresponding to $\hat{\mathbf{Y}}_2$, see \Cref{alg:penalized_policy_mcs} in \cref{sec:algo}. The resulting nonlinear system at each time step is solved using a penalized policy (Howard) iteration algorithm. Owing to the diagonal structure of the penalty matrix and the tridiagonal structure of the directional operators, each policy iteration remains inexpensive, and typically only a small number of iterations is required for convergence. Additionally, while the MCS method requires solving four tridiagonal systems per time step, the extra cost due to the policy iteration is limited since the penalty matrix is diagonal and the number of iterations is small.

\subsection{Validation}

To validate the penalized policy iteration approach, we developed a comprehensive test suite consisting of 432 parameter configurations. These configurations span strikes $K \in {95, 100, 105}$, maturities $T \in {0.25, 1, 2}$, interest rates $r \in {0.025, 0.05}$, dividend yields $q \in {0.025, 0.05}$, initial variances $v_0 \in {0.05, 0.09}$, and correlations $\rho \in {-0.7, 0, 0.7}$, with the vol-of-vol fixed at $\xi = 0.5$. Implemented on a grid with resolution $N_s = 200$, $N_v = 100$, and $N_t = 200$, the MCS-Policy method yields a European option root mean square error of $9.72 \times 10^{-4}$ and a worst-case error of $3.35 \times 10^{-3}$ relative to semi-analytical benchmarks. For American options, the proposed method and the benchmark exhibit close agreement, evidenced by a mean absolute difference of $2.68 \times 10^{-3}$ and a maximum difference of $1.43 \times 10^{-2}$. The early exercise premium differs by no more than $7.51 \times 10^{-3}$. Collectively, these findings confirm that the MCS-Policy approach generates option values consistent with the established method.

Additionally, we compare our results with those reported in~\cite{haentjens2015adi}. The experiments span a broad range of parameters: mean-reversion speeds $\kappa \in [0.3, 5]$, long-run variances $\theta \in [0.0348, 0.16]$, vol-of-vol $\xi \in [0.04, 1.0]$, correlations $\rho \in [-0.9, 0.6]$, interest rates $r \in [0.01, 0.1]$, and maturities $T \in [0.25, 15]$. The initial variance covers both low- and high-volatility regimes ($v_0 \in {0.05, 0.0625, 0.25}$), including challenging cases that violate the Feller condition.

Across all 31 benchmark cases, the proposed scheme demonstrates good agreement with reference solutions, achieving an overall RMSE of $8.745 \times 10^{-3}$ on a grid with $N_s = 500$, $N_v = 250$, and $N_t = 1000$. In moderate parameter regimes with shorter maturities, accuracy improves substantially, with RMSE values as low as $10^{-4}$. For longer maturities, errors increase, yielding a worst-case error on the order of $10^{-1}$, which is the primary contributor to the overall RMSE. These larger deviations are predominantly associated with the parameter sets in Table~5 of~\cite{haentjens2015adi}, which were originally used in studies of European put options. As shown in our first set of tests, the proposed method converges reliably against available semi-analytical solutions for European options. Nevertheless, computing the early exercise premium (EEP) remains more demanding, leading to notable deviations in certain cases. To the best of our knowledge, these parameter configurations have not been extensively tested for American options prior to this work. By contrast, the results in Tables~2–4 of~\cite{haentjens2015adi} exhibit clear convergence, as these correspond to well-established benchmark cases for American and Bermudan options under the Heston model, where RMSE values consistently reach $10^{-4}$. On average, the nonlinear solver requires approximately two policy iterations per time step.

\section{Numerical example} \label{numEx}

This section presents numerical examples illustrating the methodology from the previous sections. The supporting Python code is available at
\href{https://github.com/rakhymzhan11/DSINC-AMERICAN-HESTON}{Github}.

The primary objective of this section is the computation of the EB by solving \eqref{ES} as this described in \cref{compEB}. The American option of short FF price could be subsequently found using this EB. The procedure involves calculating the process's joint  transition density (as outlined in \cref{tdFin}) and then substituting both the density and the EB into equation \eqref{decompGenN}. Evaluating the resulting integrals provides the European option value and the EEP.

In terms of computational cost, finding $P_E(t,X^n_T,v)$ is far less intensive than finding the EB. Indeed, for the FF contract $X_T^n$ is just a forward value. For American options it can be found by using a marginal CF of $x_T$. This disparity in speed arises because the EB computation relies on an iterative root solver. Therefore, the total computation time is dominated by the EB, and the cost of obtaining $P_E(t,X^n_T,v)$ and two remaining integrals (once the EB is found) is negligible in comparison.

Without loss of generality, the model’s time-dependent coefficients are defined as
\begin{align} \label{ex}
\theta(t) &= B_\theta e^{- C_\theta (t-t)} + A_\theta, \qquad
\xi(t) = \sum_{i=1}^N \mathbf{1}_{t < t_i} \xi_i, \qquad
\sigma(t) = \sum_{i=1}^N \mathbf{1}_{t < t_i} \rho_i, \\
r_d(t) &= B_\rd e^{- C_\rd t} + A_\rd, \qquad
r_f(t) = B_\rf e^{- C_\rf t} + A_\rf, \nonumber
\end{align}
where $A_k, B_k, C_k, \, k \in [\theta, r_d, r_f]$, $\xi_i, \rho_i$, and $t_i$ (for $i=1,\dots,N$) are constants. The specific values of these parameters, along with other test inputs, are provided in Table~\ref{tab1}. Their temporal evolution is illustrated in Fig.~\ref{FinputData}.
\begin{table}[!htb]
\begin{center}
\begin{tabular}{|c|c|c|c|c|c|c|c|c|c|c|c|c|c|c|c|}
\hline
\rowcolor[rgb]{ .792,  .929,  .984}
$A_\theta$ & $B_\theta$ & $C_\theta$ & $A_\rd $ & $B_\rd$ & $C_\rd$ &
$A_\rf$ & $B_\rf$ & $C_\rf$ & $\kappa$ & $v_0$ & $T$ & $T_1$ & $N$ & $K$ & $S$ \\
\hline
0.4 & 0.2 & -5.0 & 0.02 & 0.01 & -1.0 & 0.015 & 0.01 & -0.5 & 0.5 & 0.5 & 0.25 & 1/12 & 3 & 0.5 &  0.5 \\
\hline
\end{tabular}
\vspace{\baselineskip}

\begin{tabular}{|c|r|r|r|}
\hline
\rowcolor[rgb]{ .792,  .929,  .984}
$i$ & 1 & 2 & 3 \\
\hline
$t_i$ & 1/12 & 2/12 & T \\
\hline
$\xi_i$  & 0.80 & 0.70 & 0.60 \\
\hline
$\rho_i$  & -0.40 & -0.45 & -0.50 \\
\hline
\end{tabular}
\caption{Model parameters for the numerical test 1.}
\label{tab1}
\end{center}
\end{table}

\begin{figure}[!htp]
\begin{center}
\hspace*{-0.3in}
\includegraphics[width=0.6\textwidth]{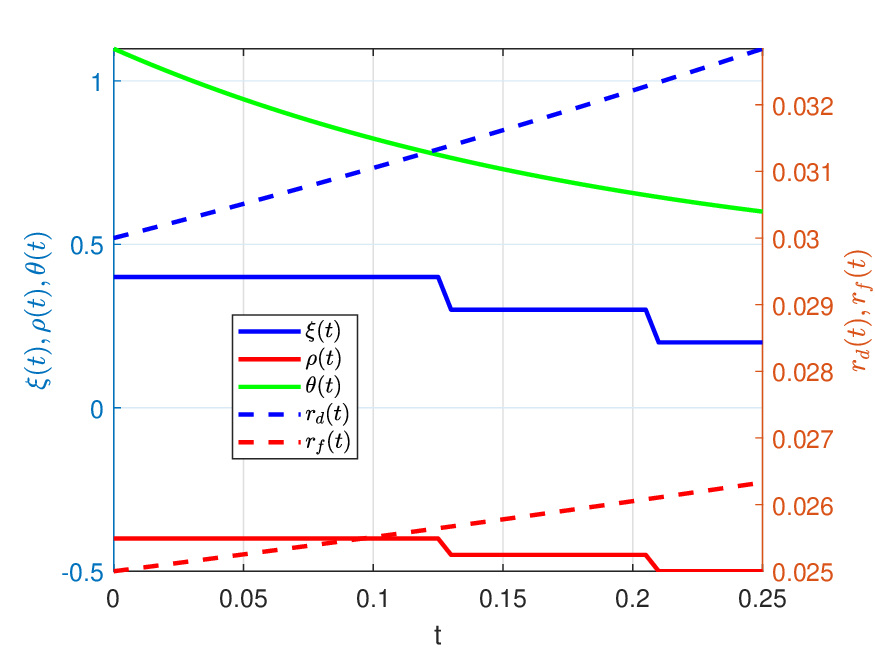}
\end{center}
\caption{Time-dependent parameters of the Heston model used in the numerical example.}
\label{FinputData}
\end{figure}

To better resolve the sharp drop in the EB over the interval $t \in [T^- \Delta t, T^-]$, we use a non-uniform grid in time
\begin{equation}
\Omega := t_{i+1} = T \dfrac{e^{\beta \nu_i} - 1}{e^{\beta} - 1}, \quad \nu_i = i/(N_t-1), \quad i=0,\ldots,N_t-1,
\end{equation}
with $N_t=51$ and $\beta=0.1$.

All computations were performed in Matlab on a system equipped with two Intel Quad-Core i7‑4790 CPUs operating at 3.80 GHz. We solved the system of nonlinear equations for $x^*(t, v_t)$  using either the Matlab solver \texttt{fzero} or \texttt{lsqnonlin}, with the latter yielding better performance for our problem.
Because all components of the system are available in analytical form, we also computed the solution gradients analytically. The solver typically converges to machine precision within 10--13 iterations, with an elapsed time of approximately 42 ms per time step for a FF contract and 44 ms for American options\footnote{Since the EEP is generally small for the FF contracts, this requires a higher tolerance of the solver to resolve it. Therefore, despite getting the European price of the FF doesn't take time, the elapsed time in both cases is close to each other.}.

\subsection{American options}

To demonstrate the applicability of our approach to American options, the next tests use $S = 100$, $K = 100$ and the same model parameters as in Table~\ref{FinputData} with $N_x = N_v = 64$. We mark the results obtained by the IE+COS approach as IECOS, and those by the IE+DSINC method - as DSINC.

\paragraph{Test 1.} In this test, we check whether IECOS method can replicate the Black-Scholes American Put price when the Heston model parameters take extreme values that reduce the model to the constant-coefficient Black-Scholes case.
Accordingly, we set the Heston model parameters to: $\kappa = 200$, $v_0 = \theta = 0.05$, $\xi = 0.001$, $\rho = 0$. This parameter set corresponds to a nearly constant volatility, effectively reducing the Heston model to a Black–Scholes framework with variance $v_0$. Hence, the American option price should match the Black–Scholes price.

The BS European price obtained by using two different methods (FD and quadratures) is 4.36565, which coincides with the Black-Scholes Put price with $\sigma = \sqrt{v_0}$. The FD EEP is 0.119 and the American price is 4.37764. By using IECOS method, we obtain an American Put price of 4.3770 consisting of a European price of 4.3656 and an early exercise premium (EEP) of 0.0115. This result confirms the robustness of our method, even in this very limiting case. To notice, in this particular case, we have $\Fe$ = 2.e7, so the Feller ratio is high, and the distribution of the log-asset price is much more "normal" and tightly clustered around the mean. This justifies our choice of $L_{x,b} = 25$, otherwise the truncation interval is too large, and the COS method may struggle to resolve the sharp peak of the density, leading to oscillations), or to increase $N_x$ (e.g., to $N_x = 128$).

As mentioned in \cref{truncation}, the method's robustness may dependent on the range of parameter values. In particular, when the variance varies only slightly - a regime close to the Black–Scholes model — the following numerical issues may arise:
\begin{enumerate}[itemsep=8pt]
\item In all Black--Scholes--like tests, the Feller condition is usually violated. As a result, the variance distribution develops a spike near zero, leading to heavy oscillations close to the zero boundary (Gibbs phenomenon) in the Fourier representation.

\item Let us assume $\xi = 0.01$). The statistical spread of the Heston variance process with such a small $\xi$ is very narrow, implying a small truncation interval. Yet the numerical requirements of the COS method differ: when $\xi$ is very small, the variance process resembles a sharp spike (approaching a Dirac delta) centered at $v_0$. In the Fourier domain, such a spike corresponds to a slowly decaying CF.

    The COS method samples the CF at frequencies $u_k = k\pi/(b - a)$. If $b-a$ is too small, the frequency step $\Delta u$ becomes large, causing sparse sampling that may miss the curvature of the CF --- an aliasing effect. To maintain accuracy as $\xi \to 0$, one must therefore use a wider interval than the statistical moments suggest, ensuring sufficiently dense frequency sampling.

\item Widening $b$ too far, however, introduces a trade-off typical of the COS method: we are forced to balance truncation error against integration (sampling) error, effectively facing a Nyquist--Shannon type constraint. In the COS method, the number of terms $N$ and the interval width $b-a$ are linked through the highest sampled frequency $u_N = N\pi/(b - a)$.
    Increasing $b$ without increasing $N$ makes the sampling coarser ($\Delta u$ grows), which can fail to resolve high-frequency oscillations in the CF of the Heston model. Conversely, reducing $b$ to its statistically justified width keeps $\Delta u$ small but may cut off significant tails of the density, leading to truncation error.

\item Because small $\xi$ produce a sharply peaked variance distribution, its Fourier-series representation suffers from Gibbs-type oscillations, akin to approximating a discontinuous profile with smooth waves, resulting in \emph{ringing} at the edges.
\end{enumerate}

\paragraph{Test 2.} In this test we also use IECOS and set $\kappa = 0.5, \xi = 0.8, \theta = v_0 = 0.05$ and vary correlation, so $\rho \in [-1,1]$. The results are presented in Table~\ref{corr} and coincide with those computed by the FD method with an accuracy better than 1\%.
\begin{table}[htbp]
\centering
\scalebox{0.8}{
\begin{tabular}{|l|r|r|r|r|r|r|r|r|r|r|r|}
\toprule
\rowcolor[rgb]{ .949,  .808,  .937} $\rho$ & \textbf{-1} & \textbf{-0.9} & \textbf{-0.7} & \textbf{-0.5} & \textbf{-0.3} & \textbf{0} & \textbf{0.3} & \textbf{0.5} & \textbf{0.7} & \textbf{0.9} & \textbf{1} \\
\hline
\textbf{EEB} & 0.011762 & 0.012141 & 0.012945 & 0.01375 & 0.014594 & 0.015987 & 0.017565 & 0.018744 & 0.020061 & 0.021513 & 0.021749 \\
\bottomrule
\end{tabular}%
}
\caption{The EEB as a function of $\rho$ with $\theta = v_0 = 0.05, \xi = 0.8, \kappa = 0.5$, other parameters as in Table~\ref{FinputData}.}
\label{corr}%
\end{table}%

\paragraph{Test 3.} In this test we us DSINC and assume constant model parameters, so set $C_\rd = C_\theta = C_\theta = 0$, and set $\xi_2 = \xi_1, \xi_3 = \xi_1, \rho_3 = \rho_1, \rho_2 = \rho_1$ and $A_{\rd} =  0.03, A_{\rf} = 0.025$. The computed EB is presented in \cref{amer1eb}(a). Note, that we assume $r_d >0,  r_f > 0$ as otherwise multiple exercise boundaries could occur which we don't consider in this paper (in more detail, see \cite{AndersenLake2021,ItkinKitapbayev2025r}). The American Put option price is 13.6749, which consists of the European Put price 13.6633 plus the EEP of 0.01157.

Since this test employs constant coefficients, the European price component can be validated against the standard Heston COS method, which also returns a value of 13.6633. The two results agree to within 1e-7. The distinction lies in how the CF (CF) is computed: our method calculates the joint CF of $(x_t, v_t)$ step-by-step in time using \eqref{IntB} (with $u_2 = 0$), whereas the standard COS method uses the closed-form solution for the CF at maturity $t=T$.

As noted in the Introduction, previous studies, \cite{tzavalis2003pricing, Chiarella2005} building on empirical observations from \cite{broadie2000american}, approximated the early-exercise surface $S^*(\tau,v)$ as a linear function of $v$, specifically $S^*(\tau,v) = b_0(\tau) + b_1(\tau) v$. To examine this assumption, Fig.~\ref{amer1eb}(b) presents $S^*(t_i, v)$ for multiple time points $t_i$ derived from our experiments. While the relationship appears approximately linear in $v$ at certain times, it clearly deviates from linearity at others. Moreover, given the time-dependent nature of the parameters, any linear approximation would be at best piecewise over distinct time intervals.

For a better visibility, we introduce a measure of average convexity of the EB which is
\begin{equation}
\mathcal{N}(t) = \frac{\int_{v_{\min}}^{v_{\max}} |S^*(t,v)) - L(t,v)| \, dv}{(v_{\max} - v_{\min}) \left( \max_{v} S^*(t,v) - \min_{v} S^*(t,v) \right)},
\end{equation}
where $L(t,v)$ is a linear approximation (a straight line which connects $S^*(t,v_{\max})$ and $S^*(t,v_{\min})$.  This measure has several advantages, namely: a) For a linear dependence $\mathcal{N}(t) = 0$; b) It is scale invariant and dimensionless; c) It can handle any shape; d) It will judge an upward-bowing curve and a downward-bowing curve of the same magnitude as identically nonlinear; e) Unlike the second derivative, it does not blow up at the boundaries, and unlike $R^2$, it does not rely on the variance of the curve, which is why it preserves the symmetry.

\Cref{convexity} presents $\mathcal{N}(t)$ as a function of time obtained in this test. To see the impact of $v_0$ and $\rho$ on this convexity, we also run same test but with $v = 0.05$ (panel 'a') and $\rho = 0.8$ (panel 'b'). It can be seen that the impact is small while not negligible.
\begin{figure}[!htp]
\begin{center}
\hspace*{-0.3in}
\subfloat[]{\includegraphics[width=0.5\textwidth]{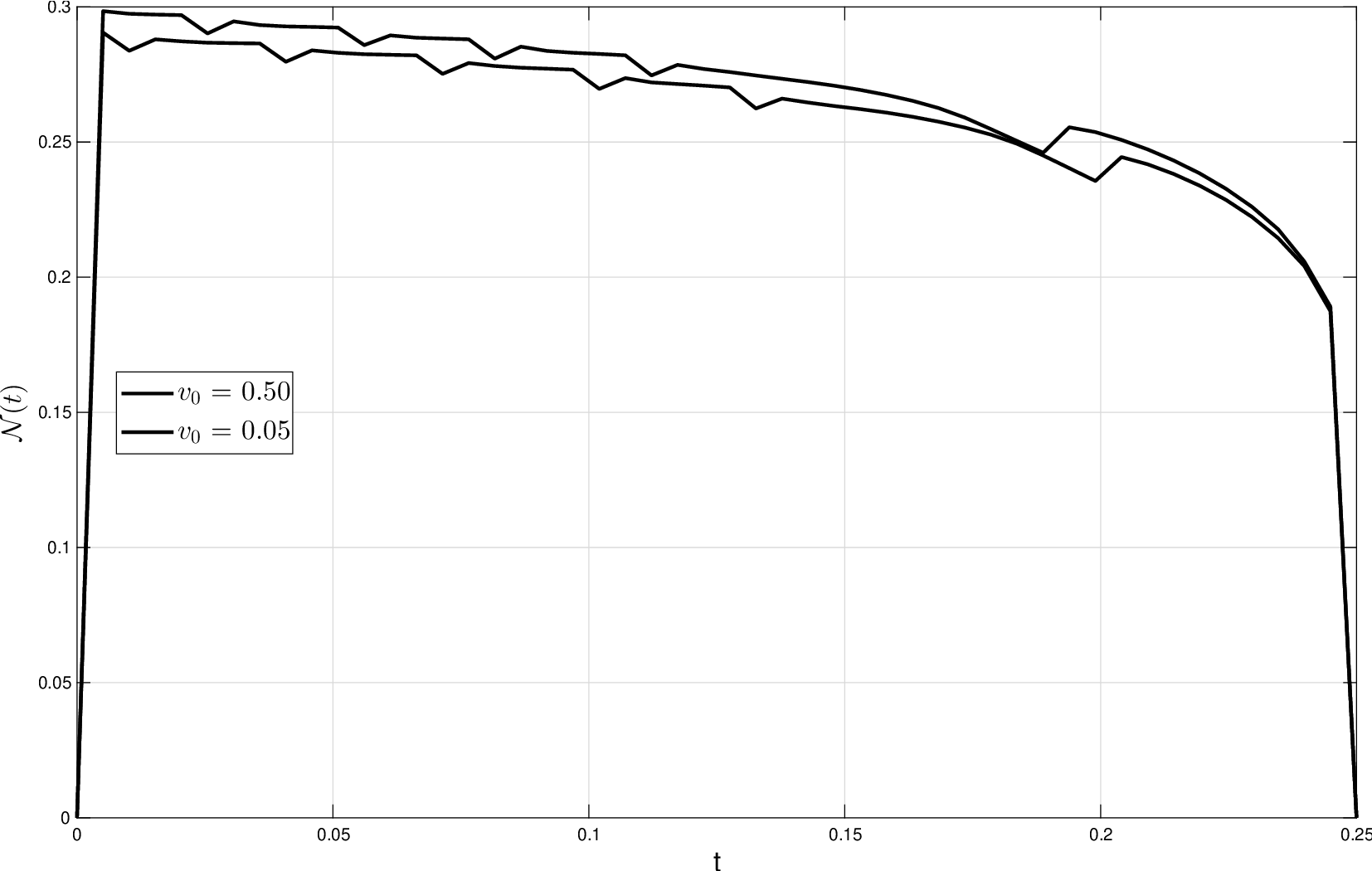}}
\subfloat[]{\includegraphics[width=0.5\textwidth]{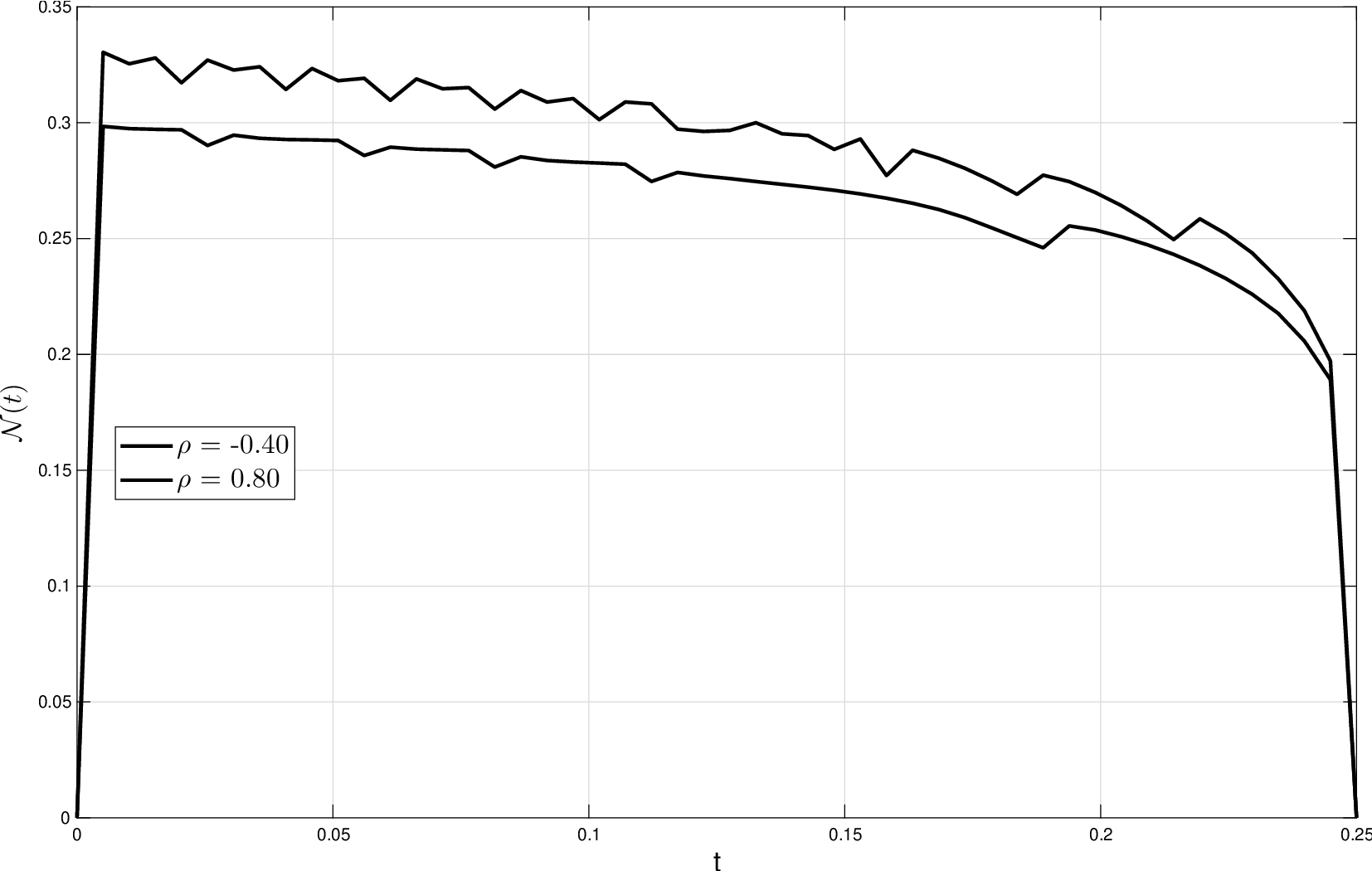}}
\end{center}
\caption{Convexity $\mathcal{N}(t)$ obtained in Test~3 for two values of $v_0$:  0.5 and 0.05 - panel (a); and two values of $\rho$: -0.4 and 0.8 - panel (b).}
\label{convexity}
\end{figure}

\paragraph{Test 4.} In this test, we restore time-inhomogeneity by reverting to the parameters listed in Table~\ref{FinputData}. The computed ES is relatively close to that from the previous test; therefore, rather than showing absolute values, we present the absolute difference between the results obtained here and those from the constant-parameters test. Accordingly, Fig.~\ref{amer2eb}(a) displays the difference in the two ES, while Fig.~\ref{amer2eb}(b) shows $\Delta S^*(t_i, v)$ for multiple time points $t_i$ derived from this experiment. The American Put option price is 13.57417, which consists of the European Put price 13.5647 plus the EEP of 0.00944.

\begin{figure}[!htp]
\begin{center}
\hspace*{-0.3in}
\subfloat[]{\includegraphics[width=0.5\textwidth]{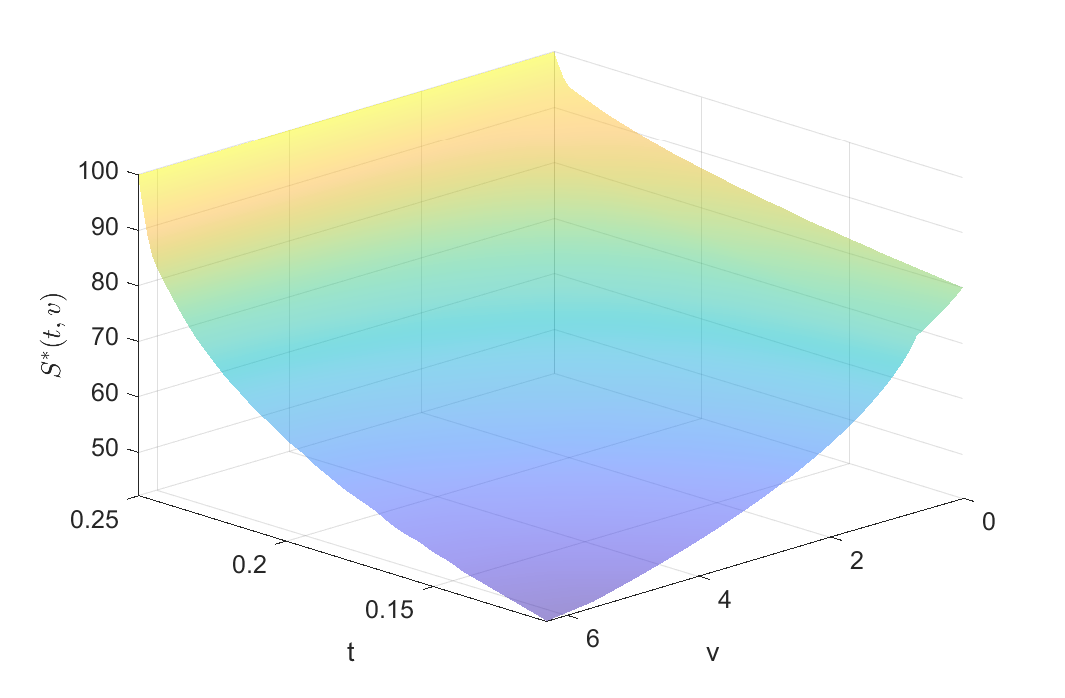}}
\subfloat[]{\includegraphics[width=0.5\textwidth]{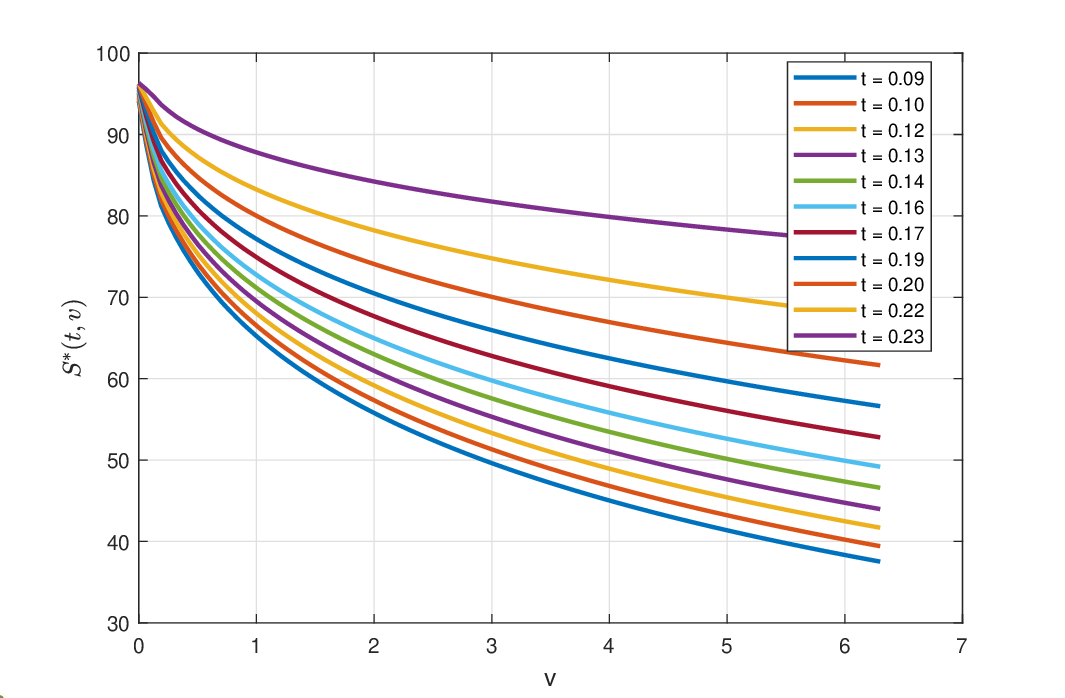}}
\end{center}
\caption{Early exercise boundaries for an American Put option under the time-homogeneous Heston model with $S = 100, K = 100, v_0 = 0.5$ and model parameters from Table~\ref{FinputData} but constant: a) the entire ES, b) projections of the ES for various time points $t_i$.}
\label{amer1eb}
\end{figure}

\begin{figure}[!htp]
\begin{center}
\hspace*{-0.3in}
\subfloat[]{\includegraphics[width=0.5\textwidth]{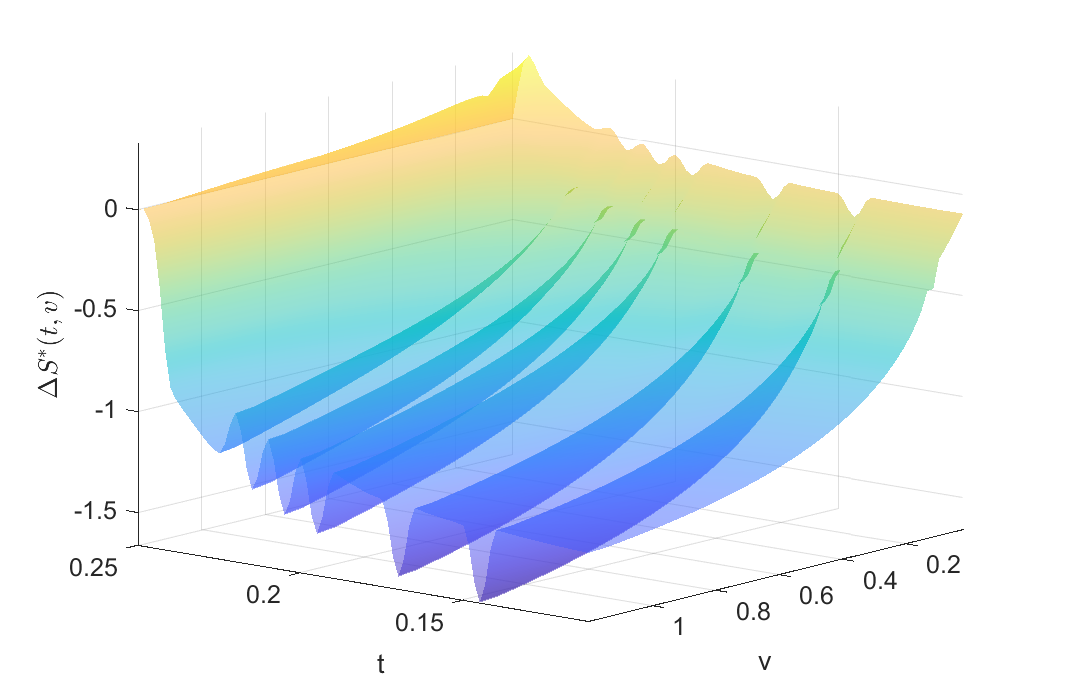}}
\subfloat[]{\includegraphics[width=0.5\textwidth]{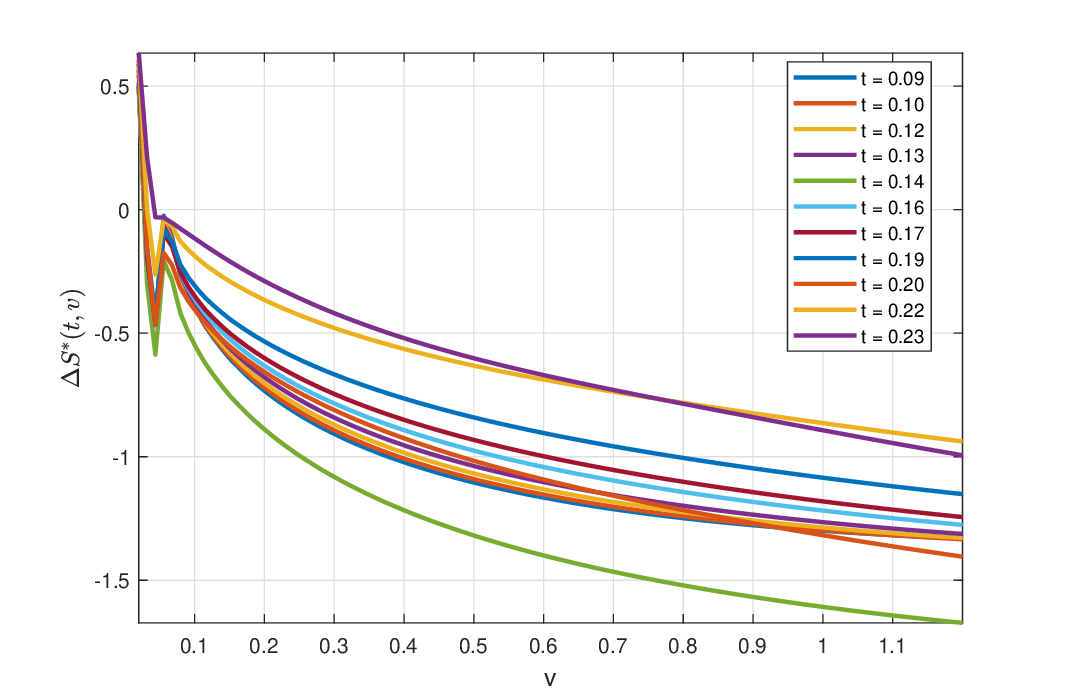}}
\end{center}
\caption{Difference in the early exercise boundaries for an American Put option under the time-inhomogeneous and time-homogeneous Heston models with $S = 100., K = 100., v_0 = 0.5$ and model parameters from Table~\ref{FinputData} a) the entire ES, b) projections of the ES for various time points $t_i$.}
\label{amer2eb}
\end{figure}

Table~\ref{compFD} presents a comparison between these results and those obtained using the FD solver described in \cref{pdeSol}.
\begin{table}[htbp]
\centering
\scalebox{0.88}{
\begin{tabular}{|l|r|r|r|r|r|l|}
\toprule
\rowcolor[rgb]{ .792,  .929,  .984} \multirow{2}[4]{*}{} & \multicolumn{2}{c|}{\textbf{t-homogeneous}} & \multicolumn{2}{c|}{\textbf{t-inhomogeneous}} & \multicolumn{1}{c|}{\multirow{2}{*}{\textbf{Elapsed time,sec}}} & \multicolumn{1}{c|}{\multirow{2}{*}{\textbf{Grid}}} \\
\cmidrule{2-5}    \rowcolor[rgb]{ .792,  .929,  .984}       & \multicolumn{1}{l|}{\textbf{European}} & \multicolumn{1}{l|}{\textbf{American}} & \multicolumn{1}{l|}{\textbf{European}} & \multicolumn{1}{l|}{\textbf{American}} &       &  \\
\hline
\rowcolor[rgb]{ .557,  .851,  .451} IECOS  & 13.6633 & 13.6749 & 13.5647 & 13.5742 & 2.1   & Nt = 51, Nx = Nv = 64 \\
\hline
FD1   & 13.6187 & 13.6413 & 13.5925 & 13.6181 & 0.91 - 0.96 & Nt = 51, Nx = 101, Nv = 79 \\
\hline
\rowcolor[rgb]{ .557,  .851,  .451} FD2   & 13.6425 & 13.6651 & 13.5755 & 13.6006 & 7.2 - 7.8 & Nt = 101, Nx = 201, Nv = 101 \\
\hline
FD3   & 13.6611 & 13.6837 & 13.5732 & 13.5980 & 27    & Nt = 201, Nx = 201, Nv = 201 \\
\bottomrule
\end{tabular}%
}
\caption{Comparison of results from the IECOS and FD solvers across several FD grid parameter sets. Also, for a time-homogeneous case, the DSINC method yields the same European price, however, its American price of 13.6845 demonstrates closer agreement with the FD reference benchmark than the IECOS result.}
\label{compFD}%
\end{table}%
Recall that for time-homogeneous parameters, the IECOS approach almost exactly (up to 1.e-7) reproduces the European Put price obtained via the COS method.

As illustrated in the table, the FD solution lacks accuracy on sparse grids. While it eventually converges toward the IE solution on denser grids, achieving this level of precision requires substantial computational overhead. These results highlight a distinct performance advantage for the IECOS method. This disparity is even more pronounced in time-inhomogeneous cases. The performance gap arises from a fundamental difference in how the models handle time-dependent parameters: by nature, the FD approach approximates the long-term mean $\theta(t)$ as a piecewise constant function, whereas in the IECOS approach we treat it as a continuous function, leading to a more refined and stable representation.

\paragraph{Test 5.}

\begin{table}[htbp]
\centering
\scalebox{0.85}{
\begin{tabular}{|l|r|r|r|r|r|l|}
\toprule
\rowcolor[rgb]{ .792,  .929,  .984} \multirow{2}{*}{\textbf{Method}} & \multicolumn{2}{c|}{\textbf{t-homogeneous}} & \multicolumn{2}{c|}{\textbf{t-inhomogeneous}} & \multicolumn{1}{c|}{\multirow{2}{*}{\textbf{Time, sec}}} & \multicolumn{1}{c|}{\multirow{2}{*}{\textbf{Grid}}} \\
\cmidrule{2-5}    \rowcolor[rgb]{ .792,  .929,  .984}       & \multicolumn{1}{l|}{\textbf{European}} & \multicolumn{1}{l|}{\textbf{American}} & \multicolumn{1}{l|}{\textbf{European}} & \multicolumn{1}{l|}{\textbf{American}} &       &  \\
\hline
IECOS   & 13.6633 & 13.6749 & 13.5647 & 13.5742 & 2.1   & $N_t{=}51,\ N_x{=}N_v{=}64$ \\
\hline
\rowcolor[rgb]{ .557,  .851,  .451} DSINC & 13.6633 & 13.6845 & 13.5757 & 13.6001 & 2.0 & $N_t{=}24,\ N_v{=}16,\ N_\omega{=}600$ \\
\hline
FD1   & 13.6187 & 13.6413 & 13.5925 & 13.6181 & 0.91 - 0.96 & $N_t{=}51,\ N_x{=}101,\ N_v{=}79$ \\
\hline
\rowcolor[rgb]{ .557,  .851,  .451} FD2   & 13.6425 & 13.6651 & 13.5755 & 13.6006 & 7.2 - 7.8 & $N_t{=}101,\ N_x{=}201,\ N_v{=}101$ \\
\hline
FD3   & 13.6611 & 13.6837 & 13.5732 & 13.5980 & 27    & $N_t{=}201,\ N_x{=}201,\ N_v{=}201$ \\
\bottomrule
\end{tabular}%
}
\caption{Comparison of the IECOS, DSINC, and FD policy-iteration ADI methods for an American Put under the time-homogeneous and time-inhomogeneous Heston model ($S=K=100$, $v_0=0.5$, $\kappa=0.5$, $T=0.25$; remaining parameters as in Table~\ref{FinputData}). FD3 is the densest (reference) grid.}
\label{compDSINC}%
\end{table}%

\begin{table}[htbp]
\centering
\begin{tabular}{|l|r|r|r|r|r|}
\toprule
\rowcolor[rgb]{ .792,  .929,  .984} \textbf{Method \textbackslash{} S} & \textbf{8} & \textbf{9} & \textbf{10} & \textbf{11} & \textbf{12} \\
\hline
\rowcolor[rgb]{ .792,  .929,  .984} \textbf{MCS-IT} & \cellcolor[rgb]{ 1,  1,  1}2.0788 & \cellcolor[rgb]{ 1,  1,  1}1.3339 & \cellcolor[rgb]{ 1,  1,  1}0.7962 & \cellcolor[rgb]{ 1,  1,  1}0.4486 & \cellcolor[rgb]{ 1,  1,  1}0.2433 \\
\hline
\rowcolor[rgb]{ .792,  .929,  .984} \textbf{our FD} & \cellcolor[rgb]{ 1,  1,  1}2.0777 & \cellcolor[rgb]{ 1,  1,  1}1.3331 & \cellcolor[rgb]{ 1,  1,  1}0.7956 & \cellcolor[rgb]{ 1,  1,  1}0.4480 & \cellcolor[rgb]{ 1,  1,  1}0.2426 \\
\hline
\rowcolor[rgb]{ .792,  .929,  .984} \textbf{IECOS} & \cellcolor[rgb]{ 1,  1,  1}2.0811 & \cellcolor[rgb]{ 1,  1,  1}1.3286 & \cellcolor[rgb]{ 1,  1,  1}0.8048 & \cellcolor[rgb]{ 1,  1,  1}0.4449 & \cellcolor[rgb]{ 1,  1,  1}0.2473 \\
\hline
\rowcolor[rgb]{ .557,  .851,  .451} \textbf{DSINC} & \cellcolor[rgb]{ 1,  1,  1}2.0779 & \cellcolor[rgb]{ 1,  1,  1}1.3327 & \cellcolor[rgb]{ 1,  1,  1}0.7954 & \cellcolor[rgb]{ 1,  1,  1}0.4480 & \cellcolor[rgb]{ 1,  1,  1}0.2426 \\
\bottomrule
\end{tabular}%
\caption{American Put prices for the \cite{haentjens2015adi} test ($\kappa=5$, $\theta=0.16$, $\xi=0.9$, $\rho=0.1$, $r_d=0.1$, $r_f=0$, $T=0.25$, $K=10$, $v_0=0.25$): the MCS-IT benchmark and our FD policy-iteration ADI solver compared with the IECOS and DSINC methods.}
\label{IntHoutDSINC}%
\end{table}%

Table~\ref{compDSINC} compares the three methods on the benchmark American Put ($S=K=100$). In the time-homogeneous case the IECOS and DSINC European values
coincide to about $10^{-7}$, both equal to $13.6633$. In the time-inhomogeneous
case the DSINC European value $13.5757$ lies closer to the dense FD grids, which
give $13.5732$ to $13.5755$, than the IECOS value $13.5647$. The American value
separates the methods more sharply. The IECOS early-exercise premium is acceptable
under time-homogeneous parameters, where its American price differs from the
densest grid FD3 by about $9\times 10^{-3}$, but it degrades markedly once the
parameters are time-dependent, where it is off by about $2.4\times 10^{-2}$.
DSINC matches FD3 to about $8\times 10^{-4}$ in the homogeneous case and about
$2\times 10^{-3}$ in the inhomogeneous case, roughly an order of magnitude closer
to the reference than IECOS. It does so in about $2$~s, that is, at IECOS speed and
roughly $13\times$ faster than FD3, which takes $27$~s. The FD solver reaches
comparable accuracy only on its densest grid, at substantial cost. On coarse
grids it is fast but inaccurate.

A comprehensive study on the valuation of American Put options under the Heston model is presented in \cite{haentjens2015adi}, where the authors benchmarked several modern finite difference schemes against existing results in the literature. We adopt a subset of their test cases to evaluate the accuracy and computational performance of our method against these well-established benchmarks.

Table~\ref{IntHoutDSINC} cross-validates against one of these test of  \cite{haentjens2015adi}. DSINC reproduces both the MCS-IT benchmark and our
FD policy-iteration ADI prices to within about $5\times 10^{-4}$ across the full
moneyness range $S\in\{8,\dots,12\}$, including the deep-in-the-money case $S=8$,
whereas the IECOS prices deviate by up to about $9\times 10^{-3}$. DSINC attains
this in about $1.5$~s, comparable to the reported FD timing of about $33$~ms per
time step.

Another test case from \cite{haentjens2015adi} examines regimes in which the option value lies close to its intrinsic value. In this setting, IECOS produces unsatisfactory results that deviate significantly from the true values and are thus impractical, whereas DSINC maintains a high level of accuracy.

\begin{table}[htbp]
  \centering
  \scalebox{0.9}{
    \begin{tabular}{|r|r|r|r|r|r|r|r|r|r|r|r|}
    \toprule
    \rowcolor[rgb]{ .58,  .863,  .973} \multicolumn{1}{|c|}{$\bm{S_0}$} & \multicolumn{1}{c|}{$\bm{K}$} & \multicolumn{1}{c|}{$\bm{T}$} & \multicolumn{1}{c|}{$\bm{r_d-r_f}$} & \multicolumn{1}{c|}{$\bm{\kappa}$} & \multicolumn{1}{c|}{$\bm{\theta}$} & \multicolumn{1}{c|}{$\bm{\xi}$} & \multicolumn{1}{c|}{$\bm{\rho}$} & \multicolumn{1}{c|}{$\bm{v_0}$} & \multicolumn{1}{c|}{\textbf{ref.}} & \multicolumn{1}{c|}{\textbf{DSINC\_M>1}} & \multicolumn{1}{c|}{\textbf{abs\_err}} \\
    \midrule
    \rowcolor[rgb]{ .514,  .886,  .557} 8     & 10    & 0.25  & 0.1   & 5     & 0.16  & 0.9   & 0.1   & 0.0625 & 2     & 2     & 0.00E+00 \\
    \hline
    \rowcolor[rgb]{ .514,  .886,  .557} 9     & 10    & 0.25  & 0.1   & 5     & 0.16  & 0.9   & 0.1   & 0.0625 & 1.1076 & 1.10752 & -8.33E-05 \\
    \hline
    \rowcolor[rgb]{ .514,  .886,  .557} 10    & 10    & 0.25  & 0.1   & 5     & 0.16  & 0.9   & 0.1   & 0.0625 & 0.5199 & 0.520404 & 5.04E-04 \\
    \hline
    \rowcolor[rgb]{ .514,  .886,  .557} 11    & 10    & 0.25  & 0.1   & 5     & 0.16  & 0.9   & 0.1   & 0.0625 & 0.2135 & 0.214013 & 5.13E-04 \\
    \hline
    \rowcolor[rgb]{ .514,  .886,  .557} 12    & 10    & 0.25  & 0.1   & 5     & 0.16  & 0.9   & 0.1   & 0.0625 & 0.082 & 0.0821013 & 1.01E-04 \\
    \hline
    \rowcolor[rgb]{ 1,  1,  0} 8     & 10    & 0.25  & 0.1   & 5     & 0.16  & 0.9   & 0.1   & 0.25  & 2.0785 & 2.0794 & 8.96E-04 \\
    \hline
    \rowcolor[rgb]{ 1,  1,  0} 9     & 10    & 0.25  & 0.1   & 5     & 0.16  & 0.9   & 0.1   & 0.25  & 1.3336 & 1.33384 & 2.44E-04 \\
    \hline
    \rowcolor[rgb]{ 1,  1,  0} 10    & 10    & 0.25  & 0.1   & 5     & 0.16  & 0.9   & 0.1   & 0.25  & 0.7959 & 0.796246 & 3.46E-04 \\
    \hline
    \rowcolor[rgb]{ 1,  1,  0} 11    & 10    & 0.25  & 0.1   & 5     & 0.16  & 0.9   & 0.1   & 0.25  & 0.4482 & 0.448584 & 3.84E-04 \\
    \hline
    \rowcolor[rgb]{ 1,  1,  0} 12    & 10    & 0.25  & 0.1   & 5     & 0.16  & 0.9   & 0.1   & 0.25  & 0.2427 & 0.243014 & 3.14E-04 \\
    \hline
    \rowcolor[rgb]{ .757,  .941,  .784} 90    & 100   & 0.25  & 0.04  & 1.15  & 0.0348 & 0.39  & -0.64 & 0.0348 & 10.0039 & 10    & -3.90E-03 \\
    \hline
    \rowcolor[rgb]{ .757,  .941,  .784} 100   & 100   & 0.25  & 0.04  & 1.15  & 0.0348 & 0.39  & -0.64 & 0.0348 & 3.2126 & 3.20262 & -9.98E-03 \\
    \hline
    \rowcolor[rgb]{ .757,  .941,  .784} 110   & 100   & 0.25  & 0.04  & 1.15  & 0.0348 & 0.39  & -0.64 & 0.0348 & 0.9305 & 0.927059 & -3.44E-03 \\
    \bottomrule
    \end{tabular}%
}
\caption{Comparison of the results in \cite{haentjens2015adi} (Tables~2,3,4) in the "ref." column with those produced by the DSINC method.}
\label{tab:addlabel}%
\end{table}%

To evaluate robustness beyond individual test cases, we tested all three methods on a suite of 432 American Put parameter sets, using the densest finite difference (FD) grid as a benchmark. The performance contrast is striking. The IECOS method yields an absolute error exceeding $10^{-2}$ in 84\% of the cases, with a median error of $7.2 \times 10^{-2}$ and a worst-case error of 1.46. In contrast, the DSINC method maintains an error below $10^{-2}$ in 71\% of the cases and never exceeds $8 \times 10^{-2}$, achieving a median error of $4.3 \times 10^{-3}$. This represents a median accuracy improvement of a factor of 12, scaling to roughly a 250-fold improvement in the tail of the distribution. The IECOS error is concentrated in the low-Feller, high-vol-of-vol regime ($2\kappa\theta/\xi^2 \lesssim 1$), where the near-singular variance density induces Gibbs ringing in the cosine expansion. The localized DSINC basis is insensitive to this effect and remains accurate throughout. Finally, both spectral methods price a single case in about 2 seconds, circumventing the significantly higher computational cost of the FD reference.

\subsection{Short FF contract}

We compute the EB for the short FF contract using the same method as for an American Put, given their structural similarities aside from payoff and exercise window.

\paragraph{Test 1.} We set $S = K = 80$ chosen to approximate an RUB/USD exchange rate. Also, we change $v_0$ to $0.05$ which better represent typical variances of these contracts.
\begin{figure}[!htp]
\begin{center}
\hspace*{-0.3in}
\subfloat[]{\includegraphics[width=0.5\textwidth]{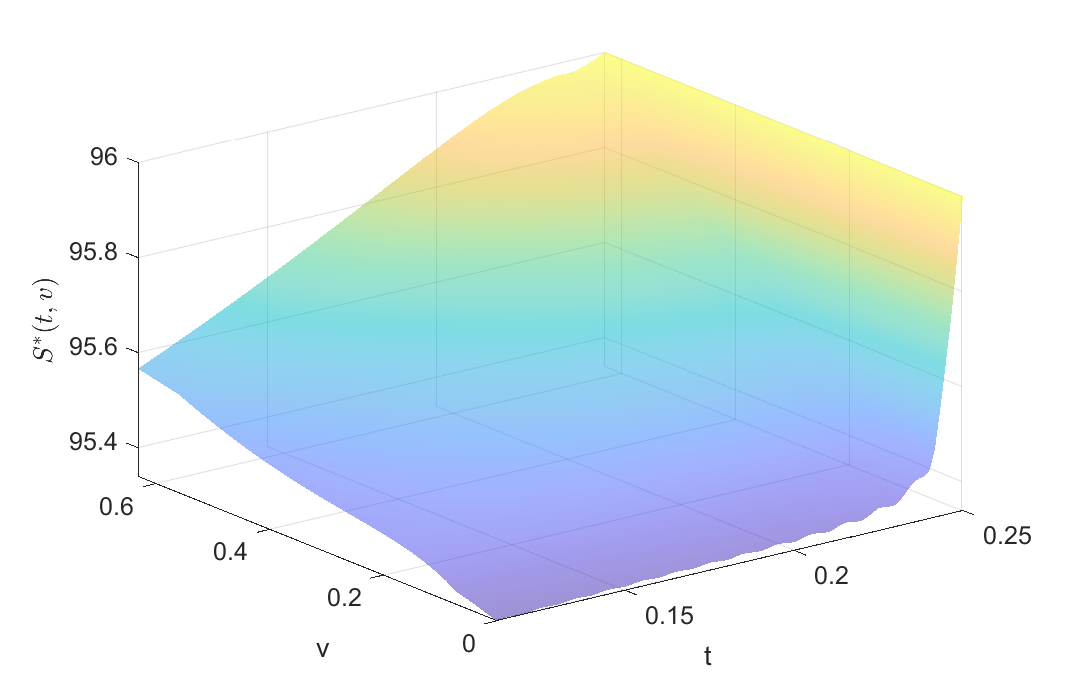}}
\subfloat[]{\includegraphics[width=0.5\textwidth]{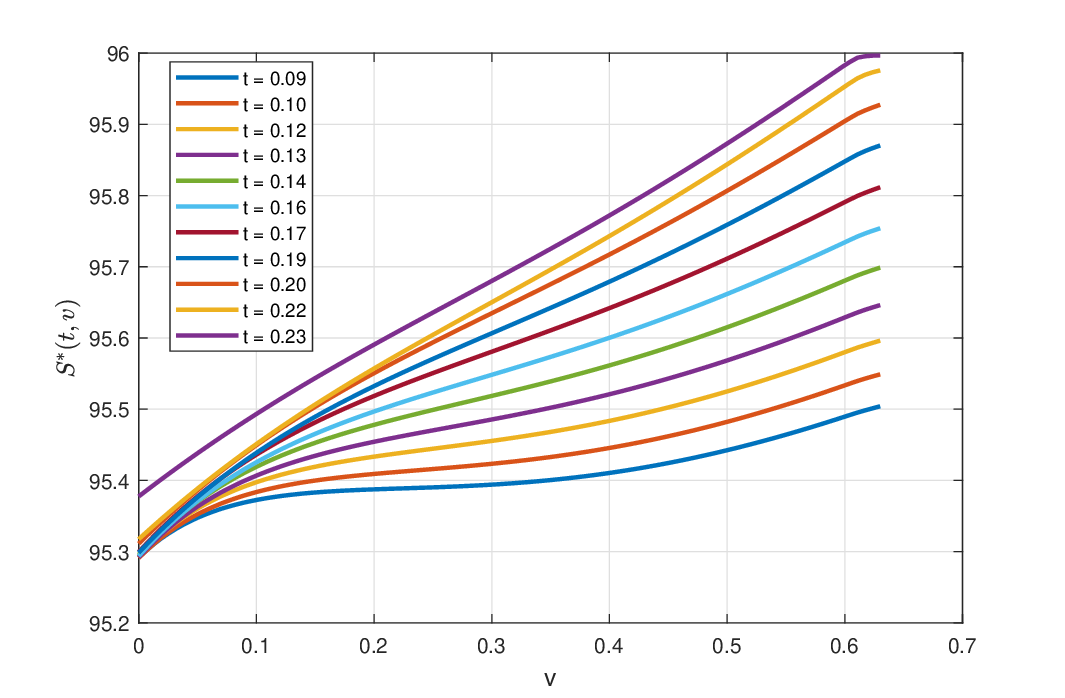}}
\end{center}
\caption{Early exercise boundaries for a short FF contract under the time-homogeneous Heston model with $S = 80, K = 80, v_0 = 0.5$ with model parameters from Table~\ref{FinputData}, but $v_0 = 0.05$; a) the entire ES, b) projections of the ES for various time points $t_i$.}
\label{eb1}
\end{figure}

\begin{figure}[!htp]
\begin{center}
\hspace*{-0.3in}
\subfloat[]{\includegraphics[width=0.5\textwidth]{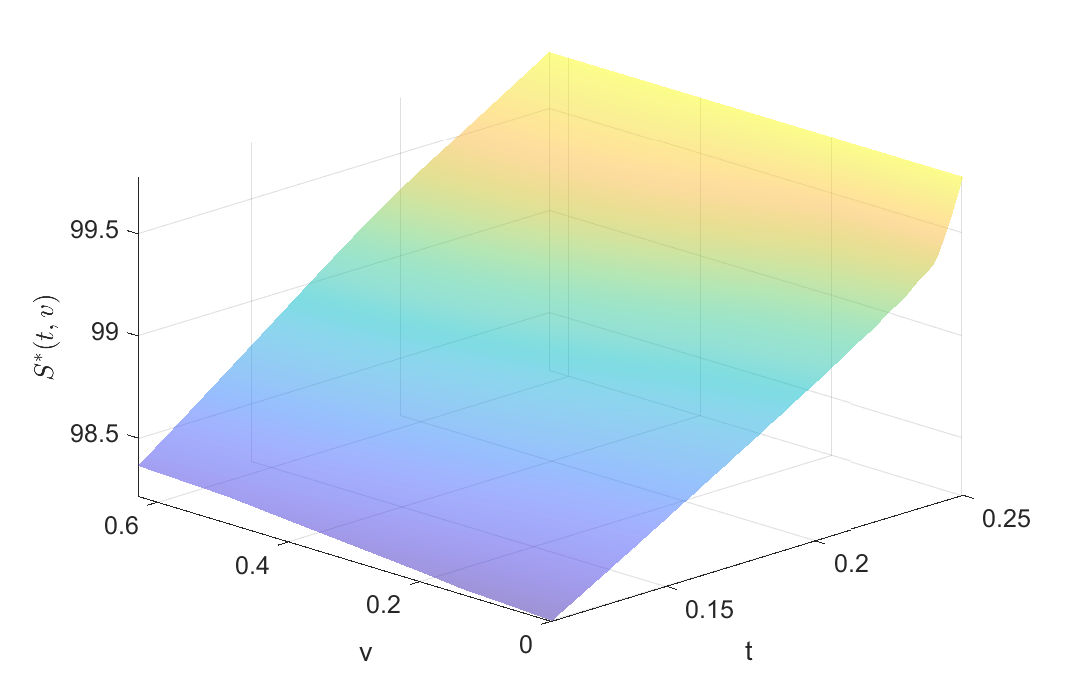}}
\subfloat[]{\includegraphics[width=0.5\textwidth]{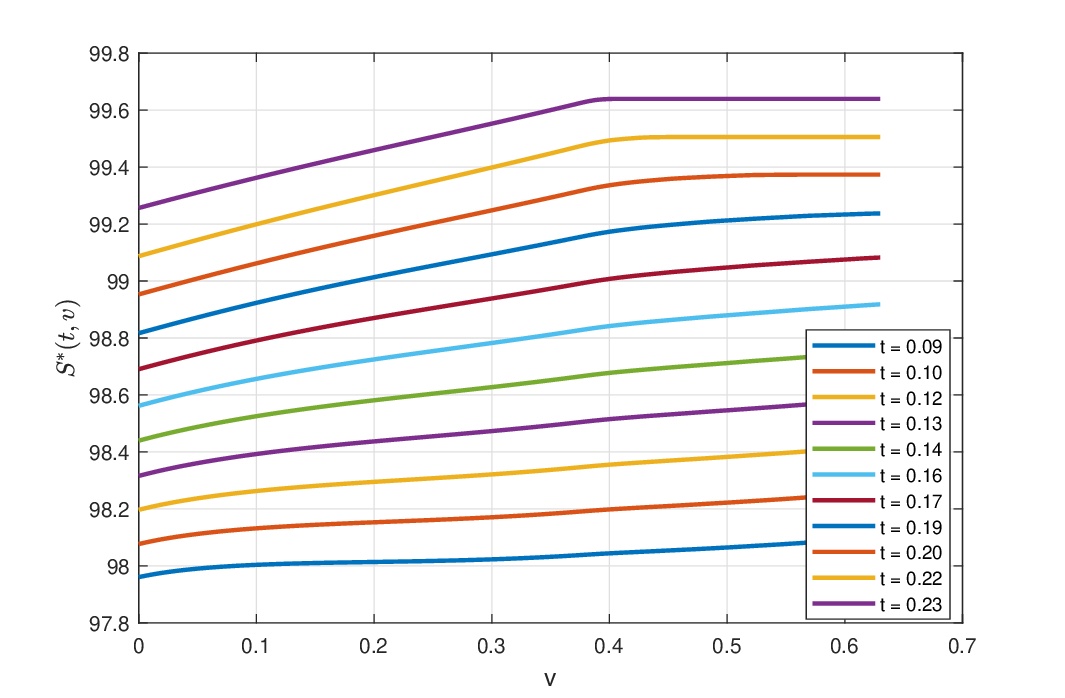}}
\end{center}
\caption{Early exercise boundaries for a short FF contract under the time-inhomogeneous Heston model with $S = 80, K = 80, v_0 = 0.5$ with model parameters from Table~\ref{FinputData}, but $v_0 = 0.05$; a) the entire ES, b) projections of the ES for various time points $t_i$.}
\label{eb2}
\end{figure}

Fig.~\ref{eb1}(a) presents thus computed EB for the short FF contract with payoff $K - S$ and constant parameters as in the first test for American options. Here, $t \in [T_1, T^-]$. Again, $S^*(t_i, v)$ for multiple time points $t_i$ are highly nonlinear as it can be seen in Fig.~\ref{eb1}(b). To notice, in this test \emph{fsolve} Matlab solver is more efficient than \emph{lsqnonlin} to find a reasonable solution.

The short FF price is -0.0966, which consists of the European price -0.0993 plus the EEP of 0.00269. The premium due to an American feature of the contract is 271 pips which is higher than typical (50-100 pips) premium values for maturities around 3 months.

\paragraph{Test 2.} Fig.~\ref{eb2}(a) and Fig.~\ref{eb2}(b) present same results for time-inhomogeneous Heston model with parameters from Table~\ref{FinputData}. It can be seen that here time dependence plays a critical role and significantly changes the shape of the EB. The short FF price is -0.1115, which consists of the European price -0.1134 plus the EEP of 0.00184. The premium due to an American feature of the contract is 162 pips.

\paragraph{Test 3.} In this test, we switch the rate curves: the domestic interest rate now follows the behavior of $r_f(t)$ in Fig.~\ref{FinputData}, while $r_d(t)$ takes the role of $r_f(t)$. The EB computed in this test with constant parameters is presented in Fig.~\ref{eb3}(a) and exhibits an interesting behavior, which arises because at maturity the ratio $r_d(T)/r_f(T) \gg 1$ and it is not limited by the strike value $K$ as for American options. Accordingly, the EB increases with backward time.

The short FF price is 0.11114, which consists of the European price 0.09931 plus
the EEP of 0.0121.
\begin{figure}[!htp]
\begin{center}
\hspace*{-0.3in}
\subfloat[]{\includegraphics[width=0.5\textwidth]{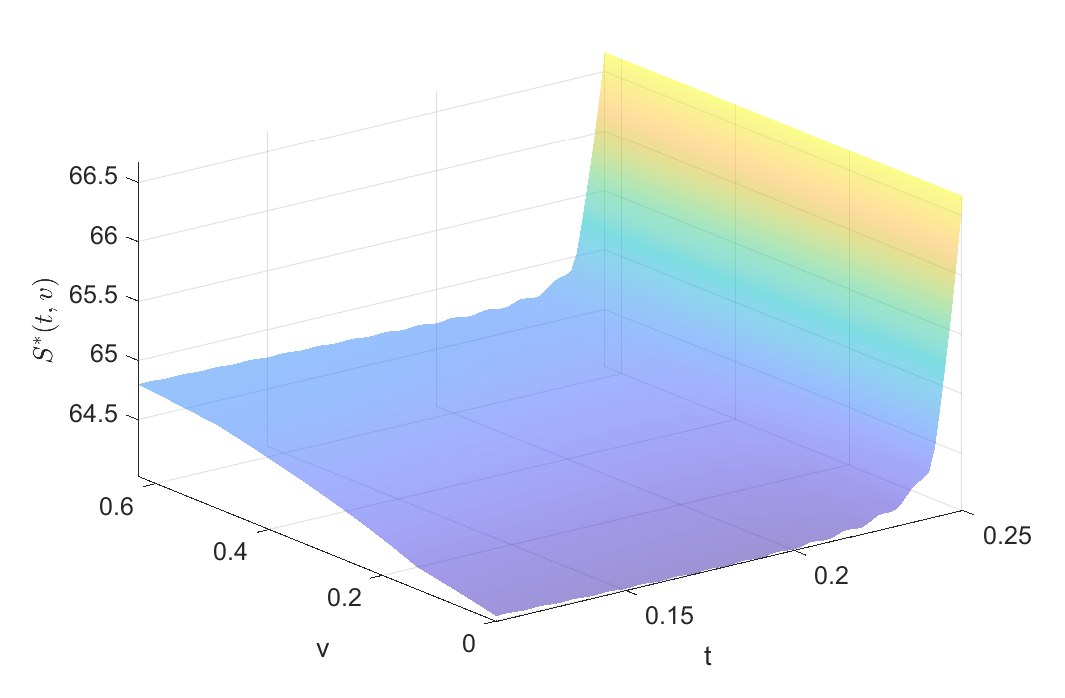}}
\subfloat[]{\includegraphics[width=0.5\textwidth]{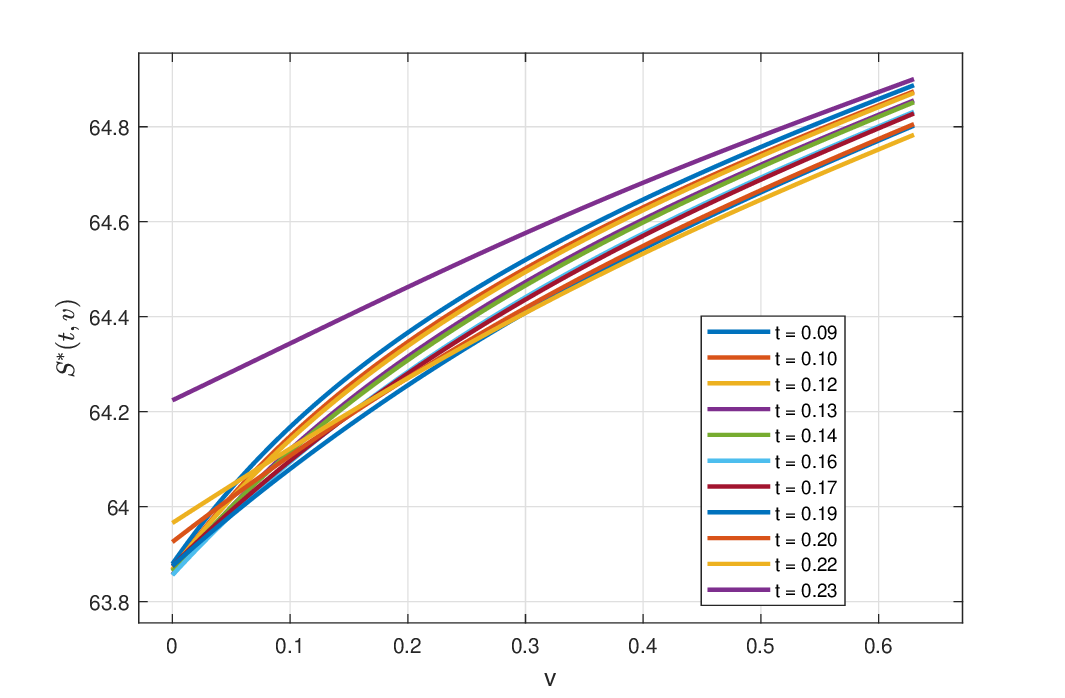}}
\end{center}
\caption{Early exercise boundaries for a short FF contract under the time-homogeneous Heston model with $S = 80, K = 80, v_0 = 0.05$ with other model parameters from Table~\ref{FinputData} but switched rate curves; a) the entire ES, b) projections of the ES for various time points $t_i$.}
\label{eb3}
\end{figure}

Similarly to the previous findings, the EB is not a linear function of the variance $v_t$ (despite for a certain time interval it can be approximated by a linear function), in contrast to the assumptions in \cite{tzavalis2003pricing, Chiarella2005} (who, however, did not consider FF contracts at all). Also, in contrast to previous tests, at a fixed time the EB can increases with $v_t$ until it flattens at high variances, and further is expected to go to zero based on the boundary conditions.

\paragraph{Test 4.} To further investigate this, we set $\kappa = 0.05, v_0 = 0.5$, use time-dependent parameters and bump $\xi$ by 1.2 over all time intervals, so $\Fe = 0.01$. The EB computed in this test with time-dependent parameters is presented in Fig.~\ref{eb4}(a) and Fig.~\ref{eb4}(b) presents $S^*(t_i, v)$ for multiple time points $t_i$ derived from this experiment. The results demonstrate higher nonlinearity of $S^*(t_i, v)$ especially pronounced at higher values of $v$.

The short FF price is -0.1134, which coincide with the European price, so EEP is 0.
\begin{figure}[!htp]
\begin{center}
\hspace*{-0.3in}
\subfloat[]{\includegraphics[width=0.5\textwidth]{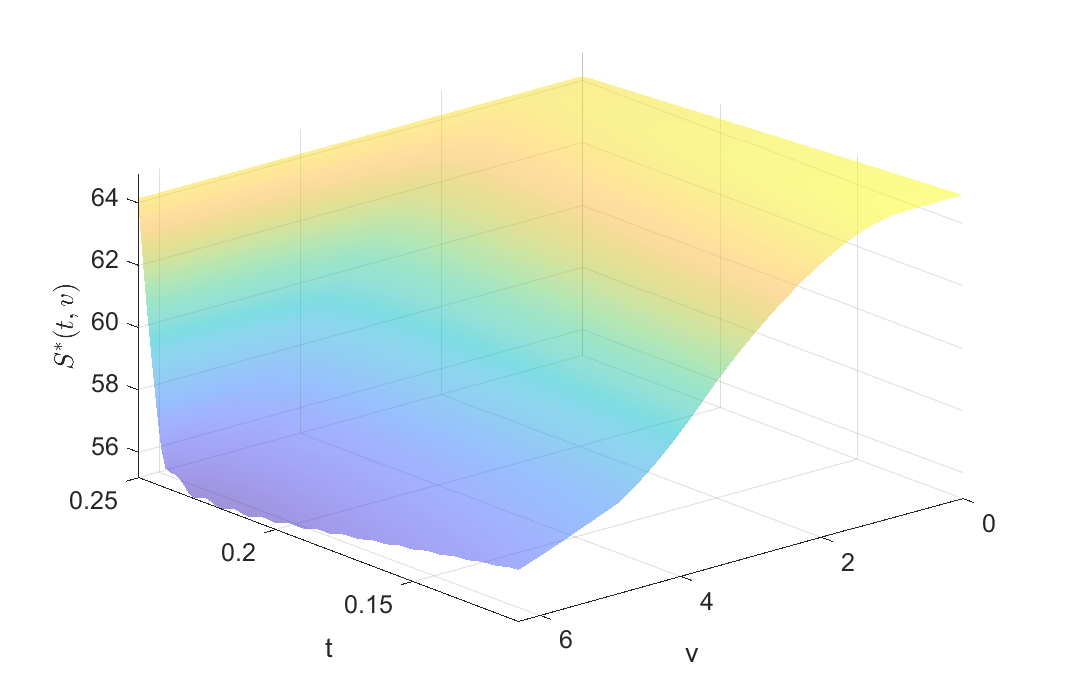}}
\subfloat[]{\includegraphics[width=0.5\textwidth]{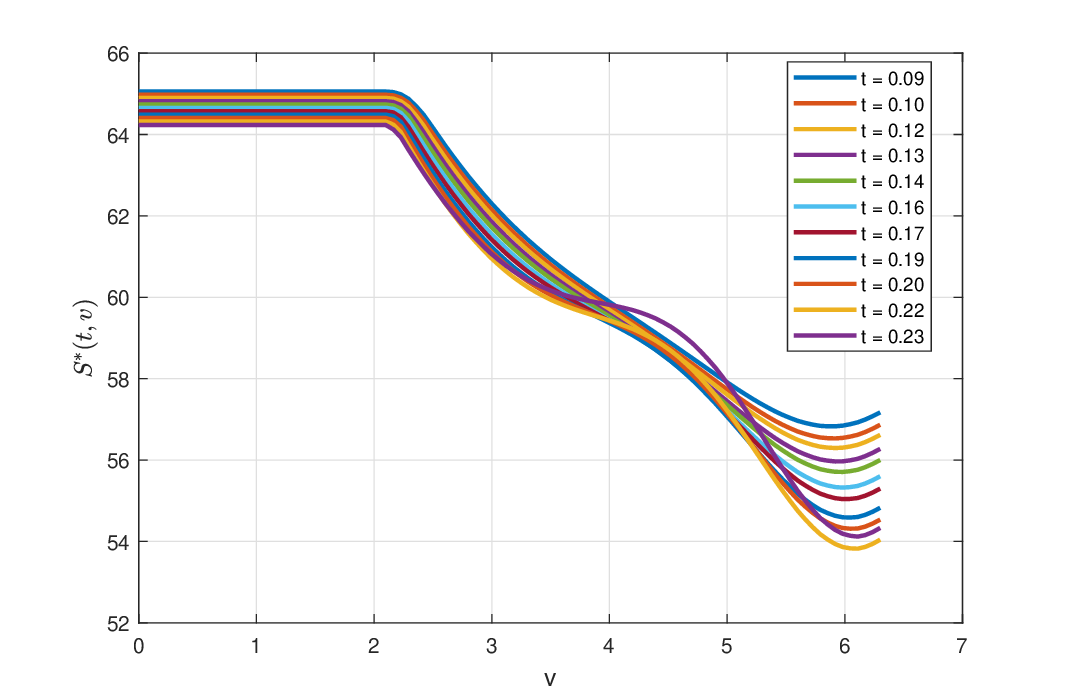}}
\end{center}
\caption{Early exercise boundaries for a short FF contract under the time-homogeneous Heston model with $S = 80, K = 80, v_0 = 0.5, \kappa = 0.05, \xi = 2.$ with other model parameters from Table~\ref{FinputData}; a) the entire ES, b) projections of the ES for various time points $t_i$.}
\label{eb4}
\end{figure}

\section{Conclusion} \label{sec:conclusion}

In this paper, we have developed a comprehensive, semi-analytical framework for pricing flexible forward contracts, which are essentially American-style options on delivery timing, within a time-inhomogeneous Heston model. By extending the integral equation approach for multidimensional diffusions to the case of time-dependent parameters, we have derived a system of Volterra integral equations that characterize both the early exercise surface and the contract value.

Our methodology combines several key innovations:

\begin{enumerate}
\item We propose a tractable time-inhomogeneous model, which is a Heston model where the long-term volatility level $\theta(t)$ is a continuous function of the time $t$, while the vol-of-vol $\xi(t)$ and correlation $\rho(t)$ are piecewise constant. This hybrid parameterization retains analytical tractability via recursive Riccati solutions for the joint CF, while capturing essential market features such as forward skew dynamics and term structure.

\item To efficiently compute the joint density, we employ two methods.

\begin{itemize}
\item The first method uses a double cosine expansion of the transition density, thereby avoiding costly numerical Fourier inversion. The expansion coefficients are expressed directly in terms of the CF, which can be computed efficiently via the recursive scheme derived in \cref{condCF}.

We also propose a fast algorithm for computing the early-exercise premium (EEP). The expectation appearing in the decomposition formula is evaluated analytically in the log-price variable and via discrete cosine transforms along the variance dimension. This reduces the computational complexity to $O(N_x N_v \log N_v)$, significantly outperforming naive quadrature.

\item The second method is a DSINC approach that addresses the limitations of the COS method when approximating $v$-densities for model parameters that induce high skewness or pronounced Gibbs oscillations. This method exploits the representation of the joint density using the CF conditional on a variance path (similar to \cite{romano1997contingent}) together with a tilted marginal density of the CIR process. Under this approach, the EEP can be expressed as a double integral—one in time and one over a spectral variable $\omega$. Importantly, the latter is a non-oscillatory integral that can be computed efficiently using Gauss--Legendre or Gauss--Kronrod methods.
\end{itemize}

\item We have tested the approach on both American Put options and short FF contracts, demonstrating that the early exercise surface is generally a nonlinear function of the variance, contrary to the linear-in-$v$ approximation sometimes used in earlier literature. The method remains robust even in near-Black–Scholes limits (small vol-of-vol and zero variance drift) and when the Feller condition is violated.

\end{enumerate}

The results show that the integral equation method offers a compelling alternative to finite-difference or least-squares Monte Carlo techniques for this class of problems. It provides high accuracy in determining the early exercise boundary, which is a critical input for risk management, while being inherently parallelizable and allowing for flexible handling of time-dependent coefficients.

For practical applications, the model can be calibrated to the vanilla European option surface via $\theta(t)$, $\xi(t)$, and $\rho(t)$, and then used to price FF contracts and other American-style derivatives consistently with the volatility smile. Future work could extend the framework to include stochastic interest rates for longer-dated contracts, or, for American options, models which consider hybrid dividends like in \cite{Itkin2025ddj}, while preserving the efficiency of the developed approach.

\section*{Disclosure statement}

No potential conflict of interest was reported by the authors.

\section*{Funding}

No funding was received.

\section*{Disclaimer}

Opinions expressed here are author's own, and do not represent views of their employers. A standard disclaimer applies.

\section*{Acknowledgments}

Andrey Itkin thanks Nizar Touzi for useful discussions and the FRE Tandon seminar participants for their valuable feedback.

\noindent The use of LLMs in this paper has been limited to proofreading and  verification of the literature and code.

\printbibliography[title={References}]

\appendix
\appendixpage
\numberwithin{equation}{section}
\setcounter{equation}{0}

\section{Computation of integrals in \eqref{Arec}} \label{app1}

We consider the integral
\begin{equation}
J(\alpha; t) = \int_0^t \frac{e^{\alpha s}}{1 - R e^{-\gamma s}} \, ds,
\end{equation}
where $\gamma$ and $\alpha$ are constants, and $t \ge 0$.

\paragraph{Case $\bm{\gamma \neq 0}$.}
Let $\mu = -\dfrac{\alpha}{\gamma}$.  Substitute $z = e^{-\gamma s}$; then
$dz = -\gamma z\, ds$ and $ds = -dz/(\gamma z)$.  The limits become $s=0 \Rightarrow z=1$ and $s=t \Rightarrow z=e^{-\gamma t}$.  Hence
\begin{equation}
J(\alpha; t) = -\frac{1}{\gamma} \int_{z=1}^{e^{-\gamma t}} \frac{z^{\mu}}{1 - R z} \, dz.
\end{equation}

For $|R z| < 1$, we expand $(1-Rz)^{-1} = \sum_{k=0}^\infty R^k z^k$.  Interchanging the sum and integral, we obtain
\begin{equation}
J(\alpha; t) = -\frac{1}{\gamma} \sum_{k=0}^\infty R^k \int_{1}^{e^{-\gamma t}} z^{\mu+k} \, dz.
\end{equation}
The inner integral is
\begin{equation}
\int_{1}^{e^{\gamma t}} z^{\mu+k} dz = \left. \frac{z^{\mu+k+1}}{\mu+k+1} \right|_{1}^{e^{-\gamma t}} = \frac{e^{-\gamma t(\mu+k+1)} - 1}{\mu+k+1}.
\end{equation}
Thus
\begin{equation}
J(\alpha; t) = -\frac{1}{\gamma} \sum_{k=0}^\infty \frac{R^k}{\mu+k+1}
\bigl[ e^{-\gamma t(\mu+k+1)} - 1 \bigr].
\end{equation}

Recall the hypergeometric representation, \cite{as64}
\begin{equation}
{}_2F_1(1, \mu+1; \mu+2; -w) = (\mu+1) \sum_{k=0}^\infty \frac{(-w)^k}{\mu+k+1}.
\end{equation}

Multiplying and dividing by $\mu+1$, we obtain
\begin{equation}
J(\alpha; t) = -\frac{1}{\gamma(\mu+1)} \Bigg[ e^{-\gamma t(\mu+1)} \sum_{k=0}^\infty \frac{(R e^{-\gamma t})^k}{\mu+k+1}(\mu+1)
- \sum_{k=0}^\infty \frac{R^k}{\mu+k+1}(\mu+1) \Bigg],
\end{equation}
which gives the closed form
\begin{equation}
J(\alpha; t) = -\frac{1}{\gamma(\mu+1)}\Bigl[
e^{-\gamma t(\mu+1)} \, {}_2F_1(1, \mu+1; \mu+2; R e^{-\gamma t})
- {}_2F_1(1, \mu+1; \mu+2; R)
\Bigr].
\end{equation}

\paragraph{Case $\bm{\gamma = 0}$.}
Then the integral becomes
\begin{equation}
J(\alpha; t) = \frac{1}{1-R} \int_0^t e^{\alpha s} \, ds = \frac{e^{\alpha t} -1}{\alpha(1-R)}.
\end{equation}

Combining all the results yields
\begin{equation} \label{combined}
J(\alpha; t) =
\begin{cases}
-\dfrac{1}{\gamma(\mu+1)}\Bigl[ e^{-\gamma t (\mu+1)} \, {}_2F_1(1, \mu+1; \mu+2; R e^{-\gamma t}) - {}_2F_1(1, \mu+1; \mu+2; R) \Bigr], & \gamma \neq 0, \\
\frac{e^{\alpha t} -1}{\alpha(1-R)}, & \gamma = 0.
\end{cases}
\end{equation}

\paragraph{Case $\bm{\alpha = 0}$.}

In this case $\mu = 0$. When $\mu = 0$, the second line in \eqref{combined} has a finite limit; for the first line, we apply an identity from \cite{as64}
\begin{equation}
{}_2F_1(1, 1; 2; x) = - \frac{\log(1-x)}{x}.
\end{equation}

\subsection{Transformation Identity}

We start from the identity between the Gaussian hypergeometric function ${}_2F_1$ and the incomplete Beta function. For $b \neq 0$ and $|z|<1$ (or by analytic continuation),
\begin{equation}
{}_2F_1(1,\,b;\,b+1;\,z) \;=\; b \int_0^1 \dfrac{t^{\,b-1}}{1 - z t} \, dt .
\end{equation}

Changing the integration variable $u = z t$, so that $t = u/z$, $dt = du/z$, and $t^{b-1} = u^{b-1}/z^{b-1}$, we obtain
\begin{equation}
{}_2F_1(1,b;b+1;z)
= b \int_0^z \dfrac{u^{b-1}/z^{\,b-1}}{1-u} \, \dfrac{du}{z}
= b\,z^{-b} \int_0^z \dfrac{u^{\,b-1}}{1-u} \, du .
\end{equation}

The integral on the right-hand side is exactly the definition of the incomplete Beta function with parameters $p = b$ and $q = 0$, i.e.
\begin{equation}
\mathrm{B}(z;\,p,\,0) \;:=\; \int_0^z \dfrac{u^{\,p-1}}{1-u} \, du,
\end{equation}
which is an \emph{analytic continuation} for $q=0$. Hence the closed-form identity reads
\begin{equation}\label{eq:identity}
{}_2F_1\bigl(1,\,b;\,b+1;\,z\bigr) = b\,z^{-b}\; \mathrm{B}\bigl(z;\,b,\,0\bigr).
\end{equation}

\subsection{Remarks on numerical evaluation}

The following recommendations facilitate an efficient numerical implementation of these recursions:
\begin{itemize}
\item In a general case, the parameter $b$ is complex and could have both negative and positive real part.

\item Standard implementations of the incomplete Beta function (e.g., \texttt{scipy.special.betainc}) often require $p>0$, $q>0$.

\item For $b<0$ one may get back to the identity
  \begin{equation}
  \mathrm{B}(z;\,b,\,0) = \dfrac{z^b}{b}\;{}_2F_1(b,1;\,b+1;\,z),
  \end{equation}
  since implementation of the hypergeometric function does not have this limitations on the parameters values.

\item If a stable implementation of the above functions is not available, a possible alternative numerical approach is to evaluate the original hypergeometric function directly via its integral representation
  \begin{equation}
  {}_2F_1(1,b;b+1;z) = b \int_0^1 \dfrac{t^{b-1}}{1-zt}\,dt,
  \end{equation}
  using adaptive quadrature, or to employ the transformation
  \begin{equation}
  {}_2F_1(1,b;b+1;z) = (1-z)^{-1}\;{}_2F_1\Bigl(1,1;b+1;\dfrac{z}{z-1}\Bigr),
  \end{equation}
  which moves the singular point away from the unit interval.
\end{itemize}

Using \cref{solB,Arec}, the recursive computation of the Heston CF for time-dependent $\theta(t)$ and piecewise constant $\xi$ and $\rho$ reduces to evaluating only elementary functions, logarithms, and the incomplete Beta function (or its hypergeometric equivalent). Moreover, the computational cost of these special functions remains low, since the time intervals for constant $\xi$ and $\rho$ which, as mentioned in Introduction, can be naturally aligned with standard traded option tenors like $[1\text{M}, 3\text{M}, 6\text{M}, 1\text{Y}, 2\text{Y}]$, correspond to a small, fixed number of evaluations per pricing exercise.

\section{Computation of $\calJ(u| x,v,t)$ in \eqref{expect}} \label{app2}

To recall, the expectation in \eqref{decompGenN} is given in the explicit form by \eqref{expect} and reads
\begin{align} \label{calJ}
\calJ(u| x,v,t) &\equiv \EQ \Big\{ \left[ r_d(u) - r_f(u) e^{x_u} \right] \mathbf{1}_{(x_u, v_u) \in \mathcal{E}} \Big\} \\
&= \int_0^\infty dv_u \int_{-\infty}^{x^*(u,v_u)} \left[ r_d(u) - r_f(u) e^{x_u} \right]  f(x_u, v_u, u \mid x, v, t) dx_u. \nonumber
\end{align}
Substituting $f(x_u, v_u, u \mid x, v, t)$ from \eqref{tdFin} into \eqref{calJ} yields
\begin{align} \label{Jcos}
\calJ(u| x,v,t) &\approx \dfrac{4 K}{L_x L_v} \sum_{k=0}^{N_x-1}{}^{'} \; \sum_{m=0}^{N_v-1}{}^{'} \Ree\left[ e^{-\iu\left( \dfrac{k\pi a_x}{L_x} + \dfrac{m\pi a_v}{L_v} \right)} \phi\left( \dfrac{k\pi}{L_x},\; \dfrac{m\pi}{L_v} \mid x,v,t \right) \right] \\
&\cdot \int_0^\infty \cos\left( \dfrac{m\pi (v_u-a_v)}{L_v} \right) \left[ \int_{a_x}^{x^*(u,v_u)} \left[ r_d(u) - r_f(u) e^{x_u} \right]  \cos\left( \dfrac{k\pi}{L_x} \left(x_u-a_x\right) \right) dx_u \right] d v_u. \nonumber
\end{align}

Further, let us define two new functions
\begin{align} \label{newFunc}
g_k(v_u) &= \int_{a_x}^{x^*(u,v_u)} \left[ r_d(u) - r_f(u) e^{x_u} \right] \cos\left( \dfrac{k\pi}{L_x} \left(x_u-a_x\right) \right) dx_u, \\
J(k,m) &= \int_0^\infty \cos\left( \dfrac{m\pi (v_u-a_v)}{L_v} \right) g_k(v_u) dv_u, \nonumber
\end{align}
so, \eqref{Jcos} can be re-written in a compact form
\begin{equation}
\calJ(u| x,v,t) \approx \dfrac{4}{L_x L_v} \sum_{k=0}^{N_x-1}{}^{'} \; \sum_{m=0}^{N_v-1}{}^{'} \Ree\left[ e^{-\iu\left( \dfrac{k\pi a_x}{L_x} + \dfrac{m\pi a_v}{L_v} \right)} \phi\left( \dfrac{k\pi}{L_x},\; \dfrac{m\pi}{L_v} \mid x,v,t \right) \right] J(k,m).
\end{equation}
Notable, the function $g_k(v)$ can be obtained in closed form.

\subsection{Analytic computation of $g_k(v)$} \label{gkvB}

Let $\theta_k(x) = \dfrac{k\pi}{L_x} (x - a_x)$. We need to compute
\begin{equation} \label{gkv}
g_k(v) = r_d(u) \int_{a_x}^{x^*(u,v)} \cos\theta_k(x) dx - r_f(u) \int_{a_x}^{x^*(u,v)} e^{x} \cos\theta_k(x) dx.
\end{equation}
Computing the first integral in \eqref{gkv}, yields
\begin{align}
\int_{a_x}^{x^*} \cos\theta_k(x) dx &=
\begin{cases}
- \dfrac{L_x}{k \pi} \sin \theta_k(x^*), & k > 0, \\
x^*-a_x, & k = 0.
\end{cases}
\end{align}

For the second integral, we have
\begin{align}
\int_{a_x}^{x^*} e^{x} \cos\theta_k(x) dx &=
\begin{cases}
L_1  \left[e^{x^*} \left( L_x \cos \theta_k(x^*) - k \pi \sin \theta_k(x^*) \right) - L_x e^{a_x} \right], & k > 0, \\
e^{x^*} - e^{a_x}, & k = 0.
\end{cases}
\end{align}

Combining all the above results, we obtain the final expression for $g_k(v)$
\begin{align} \label{gkv}
g_k(v) &=
\begin{cases}
 L_1\left[ \alpha_c \cos \theta_k(x^*) + \alpha_s \sin \theta_k(x^*) + \alpha_0 \right], & k > 0, \\
r_d(u)(x^*-a_x) - r_f(u) \left(e^{x^*} - e^{a_x} \right), & k = 0,
\end{cases}
\\
\alpha_c &= - r_f L_x e^{x^*}, \quad
\alpha_s = k \pi \left[ r_d \left(1 + L^2_2\right) - r_f e^{x^*} \right], \quad
\alpha_0 = r_f L_x e^{a_x}, \quad
L_1 = \dfrac{L_x}{L_x^2 + k^2 \pi^2}, \quad L_2 = \frac{L_x}{k \pi}, \nonumber
\end{align}
with $x^* = x^*(u,v)$.

\subsection{Efficient computation of $J(k,m)$}

By definition in \eqref{newFunc}, $J(k,m)$  is an integral over a semi-infinite interval in $v$. In line with the COS method, we truncate this integration domain to $[a_v, a_v+L_v]$, yielding the approximation
\begin{equation} \label{Jkm}
J(k,m) \approx \int_{a_v}^{a_v+L_v} \cos\left( \dfrac{m\pi (v-a_v)}{L_v} \right) g_k(v) \, dv.
\end{equation}
The right-hand side is precisely the cosine-series coefficient of $g_k(v)$  on $[a_v, a_v+L_v]$. Consequently, for all $m = 0, \dots, N_v-1$ , the values of $J(k,m)$  can be obtained efficiently by applying a type-I discrete cosine transform (DCT) to a uniform sampling of $g_k(v)$  over this interval. The complete procedure is summarized in \cref{alg:double_cos} in \cref{sec:algo}.

The computational complexity of the algorithm is dominated by three steps:
\begin{itemize}
    \item Analytic evaluation of $g_k(v)$ for all $k,v$ which yields $O(N_x N_v)$.
    \item DCT for each $k$: $O(N_x N_v \log N_v)$.
    \item Final double summation: $O(N_x N_v)$.
\end{itemize}
Hence, the overall complexity is $O(N_x N_v \log N_v)$.

From an implementation standpoint, several practical aspects are worth noting:
\begin{enumerate}
\item The DCT-I can be computed using standard numerical libraries (e.g., FFTW, SciPy).
\item To improve accuracy, the sampling grid for $v$  can be chosen larger than the number of cosine terms $N_v$.
\item For large $N_x$, the loop over $k$ can be parallelized trivially.
\end{enumerate}

\section{Conditional PDF and CF in the Heston Model} \label{condCF1}

Recall that the Heston model has been presented in \eqref{model}. We decompose $W_t^S$ into a component correlated with $W_t^v$ and an independent component $W_t^\perp$ via
\begin{equation}
W_t^S = \rho(t)\,W_t^v + \sqrt{1-\rho^2(t)}\,W_t^\perp, \qquad W_t^\perp \perp W_t^v,
\end{equation}
and condition on the full path, $x_t$ given $(\Lambda_t, \tilde{\Lambda}_t, v_t)$ where
\begin{equation} \label{Lambda}
\Lam_t = \int_0^t \lambda_1(s) v_s ds, \qquad \tilde{\Lambda}_t = \int_0^t \lambda_2(s) v_s ds,
\end{equation}
and $\lambda_1(t), \lambda_2(t)$ are some deterministic functions to be found. Then, integrating the first SDE in \eqref{model} from $0$ to $t$ yields
\begin{equation}   \label{eq:xt_decomp}
x_t = x_0 + \int_0^t \left[r_d(k) - r_f(k) - \tfrac{1}{2}v_k \right] dk +  \int_0^t \rho(s) \sqrt{v_s} dW_s^v + \int_0^t \sqrt{1-\rho^2(s)} \sqrt{v_s} dW_s^\perp.
\end{equation}

A key structural observation is that, \emph{conditional on the filtration generated by $v$}, the log-price $x_t$ is Gaussian, \cite{romano1997contingent,BroadieKaya2006}. More precisely, let us fix the realisation of the full variance path $(v_s)_{0\leq s\leq t}$, which determines both $\Lam_t$ and the terminal value $v_t$.  Multiplying the second SDE in \eqref{model} by $\rho(t)/\xi(t)$ and integrating in time from 0 to $t$, yields
\begin{equation}   \label{eq:ito_trick}
\int_0^t \rho(s) \sqrt{v_s} dW_s^v = v_t \frac{\rho(t)}{\xi(t)} - v_0 \frac{\rho(0)}{\xi(0)} - \kappa \int_0^t \frac{\rho(s) \theta(s)}{\xi(s)} ds - \int_0^t v_s \left[\left(\frac{\rho(s)}{\xi(s)}\right)'_s - \kappa \frac{\rho(s)}{\xi(s)} \right] ds.
\end{equation}

Substituting \eqref{eq:ito_trick} into \eqref{eq:xt_decomp} and defining
\begin{equation}
\lambda_1(s) = -\left(\frac{\rho(s)}{\xi(s)}\right)'_s + \kappa \frac{\rho(s)}{\xi(s)} - \frac{1}{2},
\end{equation}
yields
\begin{align} \label{eq:mu}
x_t &= \mu(\Lam_t, v_t) + \int_0^t \sqrt{1-\rho^2(s)} \sqrt{v_s} dW_s^\perp, \\
\mu(\Lam_t, v_t) &= \mu_0(t) + \mu_1(t) v_t + \Lam_t, \nonumber \\
\mu_0(t) &= x_0 + \int_0^t \left[r_d(k) - r_f(k) \right] dk - v_0 \frac{\rho(0)}{\xi(0)} - \kappa \int_0^t \frac{\rho(s) \theta(s)}{\xi(s)} ds, \qquad
\mu_1(t) = \frac{\rho(t)}{\xi(t)}. \nonumber
\end{align}
Conditional on $(\Lambda_t, v_t)$, or equivalently on $\mathcal{F}^v$, the stochastic integral $\int_0^t \sqrt{1-\rho^2(s)} \sqrt{v_s} dW_s^\perp$ is a centered Gaussian random variable with variance $\tilde{\Lambda}_t$ (by the conditional \Ito isometry)
\begin{equation}
\mathbb{E}\!\left[\left(\int_0^t \sqrt{1-\rho^2(s)} \sqrt{v_s}\, dW_s^\perp\right)^2 \middle|\, \mathcal{F}^v\right] = \int_0^t (1-\rho^2(s)) v_s\, ds = \tilde{\Lambda}_t,
\end{equation}
where the last equality follows by setting $\lambda_2(t) = 1-\rho^2(t)$.

We, therefore, obtain the exact conditional distribution
\begin{equation} \label{eq:cond_normal}
x_t \mid \Lambda_t, \tilde{\Lambda}_t, v_t \sim \mathcal{N}\left( \mu_0(t) + \frac{\rho(t)}{\xi(t)}v_t + \Lambda_t, \tilde{\Lambda}_t \right).
\end{equation}
The Gaussian structure in \eqref{eq:cond_normal} is exact (not an approximation) and holds pathwise for every realisation of the variance process.

\subsection{Conditional CF given $(\Lambda_t, \tilde{\Lambda}_t, v_t)$}

The CF of $x_t$ given the state $(\Lambda_t, \tilde{\Lambda}_t, v_t)$ follows immediately from the exact Gaussian distribution in \eqref{eq:cond_normal}. For a transform parameter $\omega \in \mathbb{R}$, we have
\begin{equation} \label{eq:CF_full_cond}
\phi(t, \omega \mid \Lambda_t, \tilde{\Lambda}_t, v_t) = \mathbb{E}\left[e^{\iu \omega x_t} \mid \Lambda_t, \tilde{\Lambda}_t, v_t \right] = \exp\left(\iu \omega \mu(\Lambda_t, v_t) - \frac{1}{2} \omega^2 \tilde{\Lambda}_t \right).
\end{equation}

In our setting, we solve the integral equation for the EB backward in time on a grid in $\tau = T-t$. At each time step $t$, we need both the European Put price and the EEP written on $x_t$ while treating the terminal variance $v_t$ as an observable (or integrating it out in a mixture representation). We, therefore, consider the CF conditioned on the endpoint $v_t = v$, marginalizing over the path functionals $\Lambda_t$ and $\tilde{\Lambda}_t$:
\begin{align} \label{eq:CF_vt_cond}
\phi(t, \omega \mid x_0, v_0, v_t = v) &= \mathbb{E} \left[ e^{\iu \omega x_t} \mid v_t = v, v_0 \right] \nonumber \\
&= \mathbb{E} \left[ \exp\left( \iu \omega \mu(\Lambda_t, v_t) - \frac{1}{2} \omega^2 \tilde{\Lambda}_t \right) \mid v_t = v, v_0 \right].
\end{align}
Substituting the decomposition of $\mu(\Lambda_t, v_t)$ from \eqref{eq:mu} into \eqref{eq:CF_vt_cond} and separating the deterministic components from the stochastic path-dependent terms yields the following factorization:
\begin{align} \label{eq:CF_factored}
\Phi(t, \omega \mid x_0, v_0, v_t=v) &= e^{i \omega \left[\mu_0(t) + \mu_1(t) v \right]} \mathbb{E} \left[ e^{i \omega \Lambda_t - \frac{1}{2} \omega^2 \tilde{\Lambda}_t} \mid v_t=v, v_0 \right] \nonumber \\
&= e^{\iu \omega \left[\mu_0(t) + \mu_1(t) v \right]} \mathbb{E} \left[ e^{-\int_0^t \gamma(s, \omega) v_s ds} \mid v_t=v, v_0 \right],
\end{align}
where the time-dependent complex frequency kernel is defined as
\begin{equation} \label{eq:gamma_omega}
\gamma(s, \omega) = \frac{1}{2}\omega^2 \lambda_2(s) - \iu \omega \lambda(s).
\end{equation}
The factorization \eqref{eq:CF_factored} reduces the problem to computing the conditional Laplace transform of a generalized integrated variance functional, evaluated at the time-varying kernel $\gamma(s, \omega)$.

By Bayes' theorem, the conditional Laplace transform appearing in \eqref{eq:CF_factored} can be expressed as \cite{Revuz_Yor1999}
\begin{equation} \label{eq:bayes}
\mathbb{E} \left[ e^{-\int_0^t \gamma(s, \omega) v_s ds} \mid v_t=v, v_0 \right] = \frac{\mathbb{E} \left[ e^{-\int_0^t \gamma(s, \omega) v_s ds} \mathbf{1}_{v_t \in dv} \mid v_0 \right]}{p_{\mathrm{CIR}}(v_t=v \mid v_0)}.
\end{equation}
The numerator in \eqref{eq:bayes} is the joint Laplace-transform density of the variance process and its path-dependent functionals under the CIR dynamics. This object remains tractable due to the affine structure of the CIR process, \cite{DuffiePanSingleton:2000}. However, unlike the constant-parameter case, the time-dependent kernel $\gamma(s, \omega)$ implies that the transform does not generally result in a standard CIR density with a constant shifted parameter. Instead, it is given by
\begin{equation} \label{eq:joint_density}
p_{\mathrm{CIR}}^{(\gamma(s, \omega))}(t, v \mid v_0) \coloneqq \mathbb{E} \left[ e^{-\int_0^t \gamma(s, \omega) v_s ds} \mathbf{1}_{v_t \in dv} \mid v_0 \right] = \exp\left( A(t, \omega) + B(t, \omega) v_0 \right),
\end{equation}
where the functions $A(s, \omega)$ and $B(s, \omega)$ are obtained by integrating the associated Riccati system backward in time $\tau$ from $\tau=0$ to $\tau=\tau(t)$. For the conditional density, this leads to
\begin{equation} \label{laplace}
\mathbb{E} \left[ e^{-\int_0^t \gamma(s, \omega) v_s ds} \mid v_t=v, v_0 \right] = \frac{p_{\mathrm{CIR}}^{(\gamma(s, \omega))}(t, v \mid v_0)}{p_{\mathrm{CIR}}(t, v \mid v_0)}.
\end{equation}
Here, the denominator $p_{\mathrm{CIR}}$ is the standard CIR transition density, and the numerator $p_{\mathrm{CIR}}^{(\gamma(s, \omega))}$ is the joint density derived from the affine propagator under time-varying coefficients.

\begin{myremark}[Solution of the Riccati system]
In the presence of time-varying $\rho(t)$ and $\xi(t)$, the "tilted" mean-reversion $\tilde{\kappa}$ is no longer a constant but a function of time. If the model parameters are assumed to be piecewise constant, the numerator in \eqref{laplace} can be computed by chaining the analytical solutions of the Riccati equations across each sub-interval, maintaining the terminal variance $v_t$ as a parameter in the inversion.
\end{myremark}

Finally, by assembling \eqref{eq:CF_factored} and \eqref{laplace}, we obtain the final form of the conditional CF:
\begin{equation} \label{finalPhicond}
\phi(t, \omega \mid x_0, v_0, v_t=v) = e^{\iu \omega [\mu_0(t) + \mu_1(t) v]} \frac{p_{\mathrm{CIR}}^{(\gamma(s, \omega))}(t, v \mid v_0)}{p_{\mathrm{CIR}}(t, v \mid v_0)}.
\end{equation}

\begin{myremark}[Use in option pricing via the COS method]
The formula \eqref{finalPhicond} is the correct object to supply to spectral pricing methods such as the COS method or SWIFT when conditioning on the terminal variance $v_t$. The marginal Heston CF $\mathbb{E}[e^{\iu \omega x_t} \mid x_0, v_0]$ should not be used in this context, as it discards the information contained in $v_t$ and conflates two distinct probability kernels. The conditional CF \eqref{finalPhicond} preserves the full Markovian structure of the $(x_t, v_t)$ pair, allowing for accurate backward induction in the early exercise boundary problem and yielding substantially more robust results in numerical schemes that exploit terminal-variance information (e.g., control-variate or stratification schemes, \cite{BroadieKaya2006}).
\end{myremark}

\section{Restoring coefficients $c_j$ in the DSINC expansion} \label{cjDeriv}

We assume the density $p(x_u, v_u | x,v)$ is approximated by a sum of DSINC functions
\begin{equation} \label{e1}
p(x_u, v_u | x,v) \approx \sum_{k=-\infty}^{\infty} c_k e^{-\chi(x-x_k)}  \sinc\left(\frac{x-x_k}{\Delta x}\right).
\end{equation}
To simplify, let’s define a "tilted" density $q(x_u) = p(x_u, v_u | x,v) e^{\chi x_u}$. Multiplying both sides of \eqref{e1} by $e^{\chi x_u}$ yields
\begin{equation}
q(x_u) \approx \sum_{k=-\infty}^{\infty} (c_k e^{\chi x_k}) \sinc\left(\frac{x-x_k}{\Delta x}\right).
\end{equation}

Now, take the Fourier transform ($\mathcal{F}$) of both sides. By the Whittaker-Shannon Interpolation Formula, a function that is represented as a sum of Sinc functions is strictly band-limited in the frequency domain. The Fourier transform of $\sinc(\frac{x-x_k}{\Delta x})$ is a shifted rectangular (Box) function
\begin{equation}
\mathcal{F}\left[\sinc\left(\frac{x-x_k}{\Delta x}\right)\right] = \Delta x e^{\iu \omega x_k} \bm{1}_{\left[-\frac{\pi}{\Delta x}, \frac{\pi}{\Delta x}\right]}(\omega).
\end{equation}

Applying this to the entire sum, we obtain
\begin{equation}
\hat{q}(\omega) = \sum_{k=-\infty}^{\infty} (c_k e^{\chi x_k}) \Delta x e^{\iu \omega x_k} \bm{1}_{\left[-\frac{\pi}{\Delta x}, \frac{\pi}{\Delta x}\right]}(\omega).
\end{equation}

Note that $\hat{q}(\omega)$ as the Fourier transform is exactly the CF $\Psi$ evaluated with a complex shift
\begin{equation}
\hat{q}(\omega) = \int p(x_u, v_u | x,v)) e^{\chi x_u} e^{\iu \omega x_u} dx_u = \int p(x_u, v_u | x,v)) e^{\iu(\omega - \iu\chi) x_u } dx_u = \Psi(\omega - \iu\chi).
\end{equation}
Inside the band $\omega \in [-\frac{\pi}{\Delta x}, \frac{\pi}{\Delta x}]$, we have
\begin{equation} \label{e2}
\Psi(\omega - \iu\chi) = \Delta x \sum_{k=-\infty}^{\infty} (c_k e^{\chi x_k}) e^{\iu \omega x_k}.
\end{equation}
This is a Fourier series representation of the function $\Psi(\omega - \iu\chi)$ on the interval $[-\frac{\pi}{\Delta x}, \frac{\pi}{\Delta x}]$.

Even though the Sinc functions are not orthogonal in $x$-space, the complex exponentials $e^{\iu \omega x_k}$ are orthogonal in $\omega$-space over the interval $[-\frac{\pi}{\Delta x}, \frac{\pi}{\Delta x}]$. Therefore, to isolate the $j$-th coefficient, we multiply \eqref{e2} by $e^{-\iu \omega x_j}$ and integrate over the band
\begin{equation}
\int_{-\frac{\pi}{\Delta x}}^{\frac{\pi}{\Delta x}} \Psi(\omega- \iu\chi) e^{-\iu \omega x_j} d\omega = \Delta x \sum_{k=-\infty}^{\infty} (c_k e^{\chi x_k}) \int_{-\frac{\pi}{\Delta x}}^{\frac{\pi}{\Delta x}} e^{\iu \omega (x_k - x_j)} d\omega.
\end{equation}
The integral on the right is zero unless $k=j$, in which case it equals the length of the interval, $\frac{2\pi}{\Delta x}$. Therefore,
\begin{equation}
\int_{-\frac{\pi}{\Delta x}}^{\frac{\pi}{\Delta x}} \Psi(\omega - \iu\chi) e^{-\iu \omega x_j} d \omega = \Delta x (c_j e^{\chi x_j}) \frac{2\pi}{\Delta x} = 2\pi c_j e^{\chi x_j}.
\end{equation}
Rearranging for $c_j$, yields
\begin{equation} \label{cjFin}
c_j = \frac{e^{-\chi x_j}}{2\pi} \int_{-\frac{\pi}{\Delta x}}^{\frac{\pi}{\Delta x}} \Psi(\omega - \iu\chi) e^{-\iu \omega x_j} d \omega.
\end{equation}

\section{Solving Riccati equations in \eqref{ricNew} via matrix propagators} \label{appMoebius}

As established in \eqref{PsiAff}, the CF $\Psi_{\mathrm{CIR}}^{(\gamma(u,\omega))}(q)$ admits an exponential affine representation
\begin{equation}
\Psi_{\mathrm{CIR}}^{(\gamma(u,\omega))}(q) = e^{\mathcal{A}(u, q) + \mathcal{B}(u, q) v},
\end{equation}
where the coefficients $\mathcal{A}(u,q)$ and $\mathcal{B}(u,q)$ solve the system of Riccati equations \eqref{ricNew}. For the generalized time-dependent Heston model, the scalar coefficient $\mathcal{B}(\tau)$ satisfies the following nonlinear quadratic initial value problem:
\begin{equation} \label{eq:riccati_scalar}
\frac{d\mathcal{B}(\tau)}{d\tau} = Q(\tau) + P(\tau)\mathcal{B}(\tau) + R(\tau)\mathcal{B}^2(\tau), \quad \mathcal{B}(0) = \Lambda_i(u, \omega),
\end{equation}
where the time-dependent coefficients $Q, P, R$ are determined by the model parameters and the generalized frequency kernel $\gamma(u, \omega)$:
\begin{align}
Q(\tau) &= -\gamma(\tau, \omega) = -\left[ \frac{1}{2}\omega^2 \lambda_2(\tau) - i \omega \lambda_1(\tau) \right], \\
P(\tau) &= -\kappa(\tau), \qquad R(\tau) = \frac{1}{2}\xi^2(\tau). \nonumber
\end{align}

Based on \eqref{Jfin}, we require the computation of $\Psi_{\mathrm{CIR}}^{(\gamma(u,\omega))}(q)$ for $q=-i\Lambda_1(u,\omega)$ and $q=-i\Lambda_2(u,\omega)$. Since both frequencies vary with the "time-to-maturity" $u$, a standard marching scheme for $\mathcal{B}$ would necessitate a full re-integration for every horizon $u_k \in \{u_1, \dots, u_N\}$. This would result in a computational complexity of $O(N^2)$, creating a significant bottleneck for the DSINC method.

To decouple the evolution of the state from the terminal-dependent initial conditions, we apply the linearization technique known as Radon's Lemma \cite{reid1972riccati}. We introduce the transformation $\mathcal{B}(\tau)=Y(\tau)/X(\tau)$, mapping the nonlinear scalar equation into a $2\times2$ linear system:
\begin{equation} \label{eq:hamiltonian}
\frac{d}{d\tau} \begin{pmatrix} X(\tau) \\ Y(\tau) \end{pmatrix} = \mathbb{M}(\tau) \begin{pmatrix} X(\tau) \\ Y(\tau) \end{pmatrix}, \quad \text{with } \mathbb{M}(\tau) = \begin{pmatrix} 0 & -R(\tau) \\ Q(\tau) & P(\tau) \end{pmatrix}.
\end{equation}
The solution at time $\tau=u$ is governed by the fundamental solution matrix (or propagator) $\mathbf{W}(u)$, such that:
\begin{equation}
\begin{pmatrix} X(u) \\ Y(u) \end{pmatrix} = \mathbf{W}(u) \begin{pmatrix} X(0) \\ Y(0) \end{pmatrix} = \begin{pmatrix} w_{11}(u) & w_{12}(u) \\ w_{21}(u) & w_{22}(u) \end{pmatrix} \begin{pmatrix} 1 \\ \Lambda_i(u, \omega) \end{pmatrix}.
\end{equation}
The matrix $\mathbf{W}(u)$ is the path-ordered exponential of $\mathbb{M}(\tau)$. Crucially, $\mathbf{W}(u)$ depends solely on the running time parameters and the frequency $\omega$, remaining independent of the ``moving'' initial condition $\Lambda_i(u, \omega)$.

For a time-discretized grid where coefficients are assumed piecewise constant over intervals $\Delta\tau$, the local propagator $\mathbf{A}=e^{\mathbb{M}\Delta\tau}$ is computed analytically using the eigenvalues of $\mathbb{M}$:
\begin{equation}
\lambda_{\pm} = \frac{P \pm \sqrt{P^2 - 4RQ}}{2} = \frac{P \pm d}{2}, \quad d = \sqrt{P^2 - 4RQ}.
\end{equation}
The components of the local propagator matrix $\mathbf{A}$ are given by:
\begin{align}
w_{11} &= e^{\frac{P}{2}\Delta\tau} \left[ \cosh\left(\frac{d\Delta\tau}{2}\right) - \frac{P}{d}\sinh\left(\frac{d\Delta\tau}{2}\right) \right], \\
w_{12} &= -\frac{2R}{d} e^{\frac{P}{2}\Delta\tau} \sinh\left(\frac{d\Delta\tau}{2}\right), \nonumber \\
w_{21} &= \frac{2Q}{d} e^{\frac{P}{2}\Delta\tau} \sinh\left(\frac{d\Delta\tau}{2}\right), \nonumber \\
w_{22} &= e^{\frac{P}{2}\Delta\tau} \left[ \cosh\left(\frac{d\Delta\tau}{2}\right) + \frac{P}{d}\sinh\left(\frac{d\Delta\tau}{2}\right) \right].\nonumber
\end{align}

By substituting the components of the cumulative propagator $\mathbf{W}(u_k)=\prod_{j=1}^{k} \mathbf{A}_j$ into the ratio $Y/X$, the solution for $\mathcal{B}(u_k)$ is recovered via a M\"{o}bius transformation:
\begin{equation} \label{eq:moebius_final}
\mathcal{B}(u_k) = \frac{w_{21}(u_k) + w_{22}(u_k) \Lambda_i(u_k, \omega)}{w_{11}(u_k) + w_{12}(u_k) \Lambda_i(u_k, \omega)}.
\end{equation}
This formulation yields a significant algorithmic advantage: the propagator $\mathbf{W}$ is advanced in time once in $O(N)$ steps, and for each $u_k$, the boundary-specific frequency $\Lambda_i(u_k, \omega)$ is directly inserted into \eqref{eq:moebius_final}. Consequently, the complexity of the Riccati solver is reduced from $O(N^2)$ to $O(N)$, ensuring numerical efficiency for the time-inhomogeneous case.

\clearpage
\section{Various algorithms} \label{sec:algo}

\begin{algorithm}[!htb]
\caption{Recursive propagation for $B(\tau)$}
\label{alg:backward_recursion}
\KwIn{Complex $u_1, u_2$, real $\kappa$, grid $0 = \tau_0 < \dots < \tau_N=t$, constants $\xi_n, \rho_n$ for $n=1,\dots,N$}
\KwOut{Piecewise function $B(\tau)$ for $\tau \in [0, t]$}
\SetAlgoLined
\DontPrintSemicolon

$B_{\text{prev}} \gets i u_2$\; \tcp*{Initial condition}

\For{$n = 1$ \KwTo $N$}{
    \tcp{Coefficients for interval $n$}
    $a_n \gets \dfrac12 \xi_n^2$\;
    $b_n \gets i u_1 \rho_n \xi_n - \kappa$\;
    $c_n \gets -\dfrac12 (u_1^2 + i u_1)$\;

    \tcp{Roots}
    $\Delta_n \gets b_n^2 - 4 a_n c_n$,\, $d_n \gets \sqrt{\Delta_n}$\;
    $\lambda_1 \gets \dfrac{-b_n + d_n}{\xi_n^2},\; \lambda_2 \gets \dfrac{-b_n - d_n}{\xi_n^2},\; \gamma = \lambda_1 - \lambda_2$\;

    \tcp{Matching at $\tau_{n-1}$}
    \eIf{$|\lambda_1 - \lambda_2| < \epsilon$}{
        \tcp{Repeated roots}
        $U_n \gets B(\tau_{n-1}) - \lambda_{n}$\;
        \For{$\tau \in [\tau_{n-1}, \tau_n)$}{
            $s \gets \tau - \tau_{n-1}$\;
            $B(\tau) \gets \lambda_n + \dfrac{U_n}{1 - a_n U_n s}$\;
        }
    }{
        \tcp{Distinct roots}
        $R_n \gets \dfrac{B(\tau_{n-1}) - \lambda_{n,2}}{B(t_{n-1}) - \lambda_{n,1}}$\;
        \For{$\tau \in [\tau_{n-1}, \tau_n)$}{
            $s \gets \tau - \tau_{n-1}$\;
            $B(\tau) \gets \dfrac{\lambda_{n,2}  - \lambda_{n,1} R_n e^{-\gamma s}} {1 - R_n e^{-\gamma s}}$\;
        }
    }

    $B_{\text{prev}} \gets B(\tau_n)$\; \tcp*{Update for next interval}
}
\end{algorithm}

\begin{algorithm}[!htb]
\caption{Penalized policy (Howard) iteration for ADI--MCS timestepping}
\label{alg:penalized_policy_mcs}
\KwIn{
Previous solution $\mathbf{u}^{\,n-1}$,
Operators $A_0^{\,n},A_1^{\,n},A_2^{\,n}$ and
$A_0^{\,n-1},A_1^{\,n-1},A_2^{\,n-1}$,
Boundary vectors $\mathbf{b}^{\,n},\mathbf{b}^{\,n-1}$,
Payoff vector $\boldsymbol{\phi}$, penalty parameter $p$,
Timestep $\Delta t^n$, parameter $\theta$,
Tolerance $\mathrm{tol}$, maximum policy iterations $M_{\max}$
}
\KwOut{Solution vector $\mathbf{u}^n$}
\SetAlgoLined
\DontPrintSemicolon

\textbf{Initialization:}\;
$\mathbf{u}^{n,0} \gets \mathbf{u}^{n-1}$\;
$\Pi^{n,0} \gets \Pi(\mathbf{u}^{n-1})$\;
\tcp{Policy: $\Pi_i=1$ if $\phi_i>\mathbf{u}_i$, else $0$}

\tcp{MCS predictor}
$\mathbf{Y}_0 \gets
\mathbf{u}^{n-1}
+ \Delta\tau^n\left(A^n\mathbf{u}^{n-1}+\mathbf{g}^{n-1}\right)$\;

\tcp{First ADI correction}
Solve
\[
\left(I-\theta\Delta\tau^n A_1^n\right)\mathbf{Y}_1
=
\mathbf{Y}_0
+\theta\Delta\tau^n\left(\mathbf{g}_1^n-\mathbf{g}_1^{n-1}\right)
\]\;

\tcp{Second ADI correction}
Solve
\[
\left(I-\theta\Delta\tau^n A_2^n\right)\mathbf{Y}_2
=
\mathbf{Y}_1
+\theta\Delta\tau^n\left(\mathbf{g}_2^n-\mathbf{g}_2^{n-1}\right)
\]\;

\tcp{Explicit MCS corrections}
$\tilde{\mathbf{Y}}_0 \gets
\mathbf{Y}_0
+ \sigma\Delta\tau^n\left(A_0^n\mathbf{Y}_2+\mathbf{g}_0^n-\mathbf{g}_0^{n-1}\right)$\;

$\hat{\mathbf{Y}}_0 \gets
\tilde{\mathbf{Y}}_0
+ \mu\Delta\tau^n\left(A^n\mathbf{Y}_2+\mathbf{g}^n-\mathbf{g}^{n-1}\right)$\;

\tcp{Final ADI solve before policy iteration}
Solve
\[
\left(I-\theta\Delta\tau^n A_1^n\right)\hat{\mathbf{Y}}_1
=
\hat{\mathbf{Y}}_0
+\theta\Delta\tau^n\left((A_1^{n}-A_1^{n-1})\mathbf{u}^{n-1}+\mathbf{g}_1^n-\mathbf{g}_1^{n-1}\right)
\]\;

\For{$m = 1,2,\ldots,K_{\max}$}{
    \tcp{Build diagonal policy matrix}
    $P^{n,m-1} \gets \mathrm{diag}(\Pi^{n,m-1})$\;

    \tcp{Policy-evaluation (linear solve)}
    Solve
    \[
    \left(I-\theta\Delta\tau^n A_2 + p P^{n,m-1}\right)\mathbf{u}^{n,m}
    =
    \hat{\mathbf{Y}}_1
    + \theta\Delta\tau^n\left(\mathbf{g}_2^n-A_2^{n-1}\mathbf{u}^{n-1}-\mathbf{g}_2^{n-1}\right)
    + p P^{n,m-1}\boldsymbol{\phi}
    \]\;

    \tcp{Policy improvement}
    $\Pi^{n,m} \gets \Pi(\mathbf{u}^{n,m})$\;

    \tcp{Stopping criteria}
    \If{
    $\Pi^{n,m} = \Pi^{n,m-1}$
    \textbf{or}
    $\displaystyle
    \max_{i,j}
    \frac{|\mathbf{u}^{n,m}_{i,j}-\mathbf{u}^{n,m-1}_{i,j}|}
         {\max\{1,|\mathbf{u}^{n,m}_{i,j}|\}}
    < \text{tol}$
    }{
        \textbf{break}\;
    }
}

$\mathbf{u}^n \gets \mathbf{u}^{n,m}$\;
\end{algorithm}

\begin{algorithm}[!htb]
\small
\DontPrintSemicolon
\SetAlgoLined
\caption{Compute the Double-COS Approximation}
\label{alg:double_cos}

\SetKwBlock{Params}{Parameters:}{}
\Params{
    $N_x, N_v$ \tcp*{Number of cosine terms}
    $L_x, L_v$ \tcp*{Truncation ranges}
    $a_x, a_v$ \tcp*{Lower truncation bounds}
}

\BlankLine
\tcp{Precompute CF $\phi_{k,m}$}
\For{$k = 0$ \KwTo $N_x-1$}{
    \For{$m = 0$ \KwTo $N_v-1$}{
        $\phi_{k,m} \gets \phi\left( \dfrac{k\pi}{L_x}, \dfrac{m\pi}{L_v} \mid x,v,u \right)$\;
    }
}

\BlankLine
\tcp{Compute $J(k,m)$ via DCT-I}
\For{$k = 0$ \KwTo $N_x-1$}{
    \tcp{Sample $g_k(v)$}
    \For{$j = 0$ \KwTo $N_v-1$}{
        $v_j \gets a_v + j \cdot \dfrac{L_v}{N_v-1}$\;
        $x^*_j \gets x^*(v_j, u)$\;
        $g_k(v_j) \gets$ compute using \eqref{gkv}\;
    }
    \BlankLine
    \tcp{Apply DCT-I}
    $\{J(k,m)\}_{m=0}^{N_v-1} \gets \text{DCT-I}\left( \{g_k(v_j)\}_{j=0}^{N_v-1} \right)$\;
}

\BlankLine
\tcp{Compute the final approximation}
$I \gets 0$\;
\For{$k = 0$ \KwTo $N_x-1$}{
    \For{$m = 0$ \KwTo $N_v-1$}{
        $w_k \gets k == 0\,\, ?\,\, 0.5\, :\, 1$\;
        $w_m \gets m == 0\,\, ?\,\, 0.5\, :\, 1$\;
        $\text{term} \gets \Re\left[ e^{-i\left( \dfrac{k\pi a_x}{L_x} + \dfrac{m\pi a_v}{L_v} \right)} \phi_{k,m} \right] \cdot J(k,m)$\;
        $I \gets I + \dfrac{4}{L_x L_v} \cdot w_k \cdot w_m \cdot \text{term}$\;
    }
}
\Return{$I$}\;
\normalsize
\end{algorithm}

\end{document}